\tikzstyle{every picture}=[baseline=-0.25em]
\tikzstyle{none}=[inner sep=0mm]
\tikzstyle{zxnode}=[shape=circle, minimum width=.25cm, inner sep=0.5pt, font=\footnotesize, draw=black]
\tikzstyle{gn}=[zxnode ,fill=green]
\tikzstyle{rn}=[zxnode ,fill=red]
\tikzstyle{H box}=[rectangle,fill=yellow,draw=black,xscale=1,yscale=1,font=\footnotesize,inner sep=1.2pt,minimum width=0.15cm,minimum height=0.15cm]
\tikzstyle{ug}=[regular polygon, regular polygon sides=3, fill=red,draw=black,inner sep = 0pt,minimum width=1em]
\tikzstyle{black dot}=[inner sep=0.7mm,minimum width=0pt,minimum height=0pt,fill=black,draw=black,shape=circle]
\tikzstyle{dot}=[black dot]
\tikzstyle{white dot}=[dot,fill=white]
\tikzstyle{box}=[rectangle,fill=white,draw=black, font=\scriptsize, inner sep=2pt]
\tikzstyle{zwcross}=[diamond, draw, fill=gray, minimum width=0em, inner sep=1.5pt]
\tikzstyle{arrow}=[decoration={markings,mark=at position 1 with
\tikzstyle{st}=[star,star points = 5, fill=white,draw=black,inner sep = 1.2pt,line width=1.2pt]
\tikzstyle{uglabel}=[rounded corners=0.2em,fill=green!20,inner sep=0.1em,font=\scriptsize, anchor=west, xshift=-0.2em, yshift=0,opacity=1]
\tikzstyle{box-no-outline}=[rectangle, draw=white, fill=white, inner sep=2pt]
\tikzstyle{polynomial}=[regular polygon, regular polygon sides=3, shape border rotate= 180, fill=white,draw=black,inner sep = -10pt, rounded corners, rounded corners=10pt, minimum width=25pt, font=\scriptsize]
\tikzstyle{none}=[inner sep=0mm]
\tikzstyle{every loop}=[]
\newcommand{\tikzfig}[1]{
\input{./figures/#1.tikz}
}
\def\fig{}
	\let
	\renewcommand{\tikzfig}[1]{
		\tikzsetnextfilename{#1}
		
\input{./figures/#1.tikz}
}
\newcommand{\ground}
{\iftoggle{extern}{\tikzsetnextfilename{ground}}{}
\begin{tikzpicture}[circuit ee IEC,yscale=0.9,xscale=0.8]
\draw (0,1ex) to (0,0) node[ground,rotate=-90,xshift=.65ex] {};
\end{tikzpicture}}%
\newcommand{\sground}{\scalebox{0.5}{\ground}}
\newcommand{\rewrite}[1]{
\underset{\substack{#1}}{\longrightarrow}
}
\newcommand{\interp}[1]{\left\llbracket #1 \right\rrbracket}
\newcommand{\abs}[1]{\left\lvert #1 \right\rvert}
\newcommand{\bra}[1]{\ensuremath{\left\langle #1 \right|}}
\newcommand{\ket}[1]{\ensuremath{\left|  #1 \right\rangle}}
\newcommand{\ketbra}[2]{\ket{#1}\!\!\bra{#2}}
\newcommand{\cat}[1]{\mathbf{#1}}
\newcommand{\Var}{\operatorname{Var}}
\title{The Structure of Sum-Over-Paths, its Consequences, and Completeness for Clifford}
\author{
Renaud Vilmart
\institute{Universit\'e Paris-Saclay, CNRS, Laboratoire de Recherche en Informatique, 91405, Orsay, France}
\\
\email{vilmart@lri.fr}
}
\begin{document}

\maketitle

\begin{abstract}
We show that the formalism of ``Sum-Over-Path'' (SOP), used for symbolically representing linear maps or quantum operators, together with a proper rewrite system, has a structure of dagger-compact PROP. Several consequences arise from this observation:\\
 -- Morphisms of SOP are very close to the diagrams of the graphical calculus called ZH-Calculus, so we give a system of interpretation between the two\\
 -- A construction, called the discard construction, can be applied to enrich the formalism so that, in particular, it can represent the quantum measurement.

We also enrich the rewrite system so as to get the completeness of the Clifford fragments of both the initial formalism and its enriched version.
\end{abstract}

\section{Introduction}

The ``Sum-Over-Paths'' (SOP) formalism \cite{SOP} was introduced in order to perform verification on quantum circuits. It is inspired by Feynman's notion of path-integral, and can be conceived as a discrete version of it.

The core idea here is to represent unitary transformations in a symbolic way, so as to be able to simplify the term, which would for instance accelerate its evaluation. To do so, the formalism comes equipped with a rewrite system, which reduces any term into an equivalent one.

As pure quantum circuits (which represent unitary maps) can easily be mapped to an SOP morphism, one can try and perform verification: given a specification $\mathcal S$ and another SOP morphism $t$ obtained from a circuit supposed to verify the specification, we can compute the term $\mathcal S\circ t^{\dagger}$ and try to reduce it to the identity. In a very similar way, one can check whether two quantum circuits perform the same unitary map.

The rewrite system is known to be complete for Clifford unitary maps, i.e.~in the Clifford fragment of quantum mechanics, the term obtained from $t_1\circ t_2^{\dagger}$ will reduce to the identity iff $t_1$ and $t_2$ represent the same unitary map. Moreover, this reduction terminates in time polynomial in the size of the SOP term (itself related to the size of the quantum circuit), and still performs well outside the Clifford fragment.

Another use for this formalism is quantum simulation, the problem of evaluating the unitary map represented by a quantum circuit. Doing this is exponential in the number of variables in the SOP term, but the rewrite strategy reduces this number of variables, so each step in the reduction roughly divides the evaluation time by two.

Something that the SOP formalism cannot do for now however is circuit simplification. Indeed, even though we can easily translate an arbitrary quantum circuit to an SOP term, and then reduce it, there is no known way to extract a quantum circuit from the result.

We show in this paper that the formalism, when considered as a category (denoted $\cat{SOP}$), has the structure of a $\dagger$-compact PROP. This structure is explained in details in Section \ref{sec:props}. This structure is shared by a much larger set of maps than just unitary maps, namely $\cat{Qubit}$, the category whose morphisms are linear maps of $\mathbb C^{2^m}\times\mathbb C^{2^n}$. In particular, we show that any morphism of $\cat{Qubit}$ could be expressed as a morphism of $\cat{SOP}$.

Because the formalism is no longer restricted to unitary maps, we argue that it could benefit from a slight redefinition, which is done in Section \ref{sec:sop-v2}.

Another ``family'' of categories that share this structure is the family of graphical languages for quantum computation: ZX-Calculus, ZW-Calculus and ZH-Calculus \cite{ZH,interacting,ghz-w}. All three formalisms represent morphisms of $\cat{Qubit}$ using diagrams, and come with equational theories, proven to be complete for the whole category \cite{ZH,HNW,euler-zx}, i.e.~whenever two diagrams represent the same morphism of $\cat{Qubit}$, the first can be turned into the other using only the equational theory.

In Section \ref{sec:sop-zh}, we show that any diagram of the ZH-Calculus can be interpreted as a morphism of $\cat{SOP}$, and conversely, that any morphism of $\cat{SOP}$ can be turned into an equivalent ZH-diagram.

In Section \ref{sec:sop-clifford}, we realise that the rewrite system of $\cat{SOP}$ is not enough for the completeness of the Clifford fragment of $\cat{Qubit}$. We define a restriction of $\cat{SOP}$ that captures exactly this fragment, and enrich the set of rules so as to get the completeness in this restriction.

In Section \ref{sec:sop-discard}, we enrich the whole formalism using the discard construction \cite{CJPV19}, so as to be able to represent completely positive maps, as well as the operator of partial trace. Again, one can consider the Clifford fragment of this formalism. We give a new set of rewrite rules, and show that it makes the fragment complete.

\section{Background}
\label{sec:props}

\subsection{PROPs and String Diagrams}

The first kind of category we will be interested in is the \emph{PROP} \cite{Lack-PROP,PhD.Zanasi}. A PROP $\cat C$ is a strict symmetric monoidal category \cite{mac2013categories,selinger2010survey} generated by a single object, or equivalently, whose objects form $\mathbb N$. Hence the morphisms of $\cat C$ are of the form $f:n\to m$. They can be composed sequentially $(.\circ.)$ or in parallel $(.\otimes.)$, and they satisfy the following axioms:
\begin{align*}
f\circ (g\circ h) = (f\circ g)\circ h \quad&\qquad f\otimes (g\otimes h) = (f\otimes g)\otimes h\\
id_m \circ f = f = f\circ id_n \quad & \qquad id_0\otimes f = f = f\otimes id_0\\
(f_2\circ f_1)\otimes (g_2\circ g_1) &= (f_2\circ g_2)\circ(f_1\otimes g_1)
\end{align*}
The category is also equipped with a particular family of morphisms $\sigma_{n,m}:n+m\to m+n$. Intuitively, these allow morphisms to swap places. They satisfy additional axioms:
\begin{align*}
\sigma_{n,m+p} = (id_m\otimes \sigma_{n,p})\circ(\sigma_{n,m}\otimes id_p) \quad&\qquad \sigma_{n+m,p} = (\sigma_{n,p}\otimes id_m)\circ(id_n\otimes \sigma_{m,p})\\
\sigma_{m,n}\circ\sigma_{n,m} = id_{n+m} \quad&\qquad (id_p \otimes f)\circ \sigma_{n,p} = \sigma_{m,p}\circ(f\otimes id_p)
\end{align*}

Monoidal categories, and subsequently PROPs, have the benefit of having a nice graphical representation, using string diagrams. The object $n$ and equivalently $id_n$ is represented by $n$ parallel wires: \tikzfig{identity-n}; and a morphism $f:n\to m$ as a box with $n$ input wires and $m$ output wires: \mbox{\tikzfig{morphism-f}.}
The sequential composition $(.\circ.)$ is obtained by plugging the outputs of the morphism on the right to the inputs of the morphism of the left. The parallel composition $(.\otimes.)$ is obtained by putting the two diagrams side by side.

The first set of axioms is for coherence: the two compositions are associative, so we can forget about the parentheses, and the following string diagram is well defined, as:
$$\tikzfig{bifunctorial-law}$$
The morphism $\sigma_{n,m}$ is represented by \tikzfig{sigma}. The following axioms are satisfied:
$$\tikzfig{sigma-axiom}\qquad\qquad\tikzfig{sigma-axiom-bis}$$
$$\tikzfig{sigma-axiom-id}\qquad\qquad\tikzfig{sigma-axiom-2}$$

\subsection{$\dagger$-Compact PROPs}

Some PROPs can have additional structure, such as a compact-closed structure, or having a $\dagger$-functor.

A $\dagger$-PROP $\cat C$ is a PROP together with an involutive, identity-on-objects functor $(.)^\dagger: \cat C^{\operatorname{op}}\to \cat C$ compatible with $(.\otimes.)$. That is, for every morphism $f:n\to m$, there is a morphism $f^{\dagger}:m\to n$ such that $f^{\dagger\dagger} = f$. It behaves with the compositions by $(f\circ g)^{\dagger} = g^{\dagger}\circ f^{\dagger}$ and $(f\otimes g)^{\dagger}=f^{\dagger}\otimes g^{\dagger}$. Finally, we have $\sigma_{n,m}^{\dagger} = \sigma_{m,n}$.

A $\dagger$-compact PROP as two particular families of morphisms: $\eta_n:0\to 2n$ and $\epsilon_n:2n\to 0$. These are dual by the $\dagger$-functor: $\eta_n^\dagger = \epsilon_n$. They satisfy the following axioms:
\begin{align*}
(\epsilon_n\otimes id_n)\circ(id_n\otimes \eta_n) = id_n = (id_n\otimes\epsilon_n)\circ(\eta_n\otimes id_n) \\
\sigma_{n,n}\circ\eta_n = \eta_n \quad\qquad
\eta_{n+m} = (id_n \otimes \sigma_{n,m} \otimes id_m) \circ (\eta_n\otimes \eta_m)
\end{align*}
The morphisms $\eta_n$ and $\epsilon_n$ may be denoted \tikzfig{eta} and \tikzfig{epsilon}. They hence satisfy:
$$\tikzfig{snake}\qquad\qquad\tikzfig{snake-2}$$
$$\tikzfig{eta-axiom}$$

In this context, one can define the transpose operator of a morphism $f$ as:
$$f^t := (\epsilon_m \otimes id_n)\circ(id_m\otimes f\otimes id_n)\circ (id_m\otimes \eta_m)\qquad\text{i.e.}\quad\tikzfig{transpose}$$

One can check that, thanks to the axioms of $\dagger$-compact PROP, $(f\circ g)^t = g^t\circ f^t$, $(f\otimes g)^t = f^t\otimes g^t$, and $f^{tt} = f$.

We can then compose $(.)^t$ and $(.)^\dagger$: $\overline{(.)}:=(.)^{\dagger t}$. Again using the axioms of $\dagger$-compact PROP, one can check that $(.)^{\dagger t} = (.)^{t\dagger}$.

\subsection{Example: $\cat{Qubit}$}

The usual example of a strict symmetric $\dagger$-compact monoidal category is $\cat{FHilb}$, the category whose objects are finite dimensional Hilbert spaces, and whose morphisms are linear maps between them. It is not, however, a PROP, as it is not generated by a single object.

One subcategory of $\cat{FHilb}$ that \emph{is} a PROP, though, is $\cat{Qubit}$ the subcategory of $\cat{FHilb}$ generated by the object $\mathbb C^2$, considered as the object $1$. A morphism $f:n\to m$ of $\cat{Qubit}$ is hence a linear map from $\mathbb C^{2^n}$ to $\mathbb C^{2^m}$. $(.\circ.)$ is then the usual composition of linear maps, and $(.\otimes.)$ is the usual tensor product of linear maps. One can check that the first set of axioms is satisfied.

This is not enough to conclude that $\cat{Qubit}$ is a PROP. We still need to define a family of morphisms $\sigma_{n,m}$. In the Dirac notation, given a basis $\mathcal B$ of $\mathbb C^2$, we can define $\sigma_{n,m}$ as $\sigma_{n,m}:=\sum\limits_{(\vec x,\vec y)\in\mathcal B^n\times\mathcal B^m} \ketbra{\vec y,\vec x}{\vec x,\vec y}$. One can then check that all the axioms of PROPs are satisfied.

$\cat{Qubit}$ is not only a PROP, but also $\dagger$-compact. Indeed, first, given a morphism:
$$f = \sum\limits_{(\vec x,\vec y)\in\mathcal B^n\times\mathcal B^m} a_{\vec x,\vec y}\ketbra {\vec y}{\vec x}$$
we can define its dagger $f^\dagger:= \sum\limits_{(\vec x,\vec y)\in\mathcal B^n\times\mathcal B^m} \overline{a_{\vec x,\vec y}}\ketbra {\vec x}{\vec y}$, which is the usual definition of the dagger for linear maps.

Its compact structure can be given by $\eta_n := \sum\limits_{\vec x \in B^n} \ket{\vec x,\vec x}$, which implies $\epsilon_n = \eta_n^\dagger = \sum\limits_{\vec x \in B^n} \bra{\vec x,\vec x}$. One can check that all the axioms of $\dagger$-compact PROPs are satisfied.

Since $\cat{Qubit}$ is $\dagger$-compact, we can define the transpose $(.)^t$ which happens to be the usual transpose of linear maps, and the conjugate $\overline{(.)}$, which again is the usual conjugation in linear maps over $\mathbb C$.

There is a subcategory of $\cat{Qubit}$ that is of importance: $\cat{Stab}$. It is the smallest $\dagger$-compact subcategory of $\cat{Qubit}$ (the compact structure is preserved) that contains:
\begin{itemize}
\item $\ket0:0\to1$
\item $H:= \frac1{\sqrt2}(\ketbra00+\ketbra01+\ketbra10-\ketbra11):1\to1$
\item $S:= \ketbra00 + i\ketbra11:1\to1$
\item $CZ:= \ketbra{00}{00}+\ketbra{01}{01}+\ketbra{10}{10}-\ketbra{11}{11}:2\to2$
\end{itemize}

\section{The Category $\cat{SOP}$}
\label{sec:category-sop}

\subsection{$\cat{SOP}$ as a PROP}

The point of the Sum-Over-Paths formalism \cite{SOP}, is to \emph{symbolically} manipulate morphisms written in a form akin to the Dirac notation. Reasoning on symbolic terms allow us to detect where a term can be simplified in a ``smaller'' one, or to give a specification on a term.

A morphism of the category will be of the form $\ket{\vec x}\mapsto s\sum\limits_{\vec y\in V^k} e^{2i\pi P(\vec x,\vec y)}\ket{\vec Q(\vec x,\vec y)}$ where:
\begin{itemize}
\item $\vec x=x_1,\ldots,x_n$ is the input signature, it is a list of variables
\item $V$ is a set of variables (hence $\vec y$ is a collection of these variables)
\item $P$ is a multivariate polynomial, instantiated by the variables $\vec x$ and $\vec y$
\item $\vec Q = Q_1,\ldots,Q_m$ is the output signature, it is a multivariate, multivalued boolean polynomial
\item $s$ is a real scalar
\end{itemize}
We may denote $V_f$ a subset of the variables $V$ used in $f$. Then by default, if $V_f$ and $V_g$ are used in the same term, we consider that $V_f\cap V_g=\varnothing$. To distinguish the two sum operators (the one in $P$ and the one in $\vec Q$), we can denote the one in the output signature $\vec Q$ as $\oplus$. Moreover, it will sometimes be necessary to immerse one of the boolean polynomials $Q_i$ in the polynomial $P$. We hence define $\widehat{Q_i}$ inductively as $\widehat{x}=x$ for a variable $x$, $\widehat{pq}=\widehat{p}\widehat{q}$ and $\widehat{p\oplus q} = \widehat{p}+\widehat{q}-2\widehat{pq}$.


\begin{definition}[$\cat{SOP}$]
$\cat{SOP}$ is defined as the PROP where, given a set of variables $V$:
\begin{itemize}
\item Identity morphisms are $id_n : \ket{\vec x}\mapsto \ket{\vec x}$
\item Morphisms $f:n\to m$ are of the form $f:\ket{\vec x}\mapsto s\sum\limits_{\vec y\in V^k} e^{2i\pi P(\vec x,\vec y)}\ket{\vec Q(\vec x,\vec y)}$ where $s\in\mathbb{R}$, $\vec x\in V^n$, $P\in \mathbb R[X_1,\ldots,X_{n+k}]/(1,X_i^2-X_i)$, and $\vec Q \in \left(\mathbb F_2[X_1,\ldots,X_{n+k}]\right)^m$
\item Composition is obtained as $f\circ g := \ket{\vec x_g}\mapsto s_fs_g\sum\limits_{\substack{\vec y_f\in V_f^{k_f}\\\vec y_g\in V_g^{k_g}}} e^{2i\pi (P_g+P_f[\vec x_f\leftarrow \widehat{\vec Q_g}])}\ket{\vec Q_f[\vec x_f\leftarrow {\vec Q_g}]}$
\item Tensor product is obtained as $f\otimes g := \ket{\vec x_f\vec x_g}\mapsto s_fs_g\sum\limits_{\substack{\vec y_f\in V_f^{k_f}\\\vec y_g\in V_g^{k_g}}} e^{2i\pi (P_g+P_f)}\ket{\vec Q_f\vec Q_g}$
\item The symmetric braiding is $\sigma_{n,m}: \ket{\vec x_1, \vec x_2}\mapsto \ket{\vec x_2,\vec x_1}$
\end{itemize}
\end{definition}

The polynomial $P$ is called the \emph{phase polynomial}, as it appears in the morphism in $e^{2i\pi.}$. Because of this, we consider the polynomial modulo $1$. We also consider the polynomial quotiented by $X^2-X$ for all its variables $X$, as these variables are to be evaluated in $\{0,1\}$, so we consider $X^2=X$.

Notice that the definition of the identities does not directly fit the description of the morphisms. However, we can rewrite it as $\ket{\vec x}\mapsto \ket{\vec x} = \ket{\vec x}\mapsto 1\sum\limits_{y \in V^0} e^{2i\pi 0}\ket{\vec x}$. Hence, when we sum over a single element, we may forget the sum operator, and when the phase polynomial is 0, we may not write it. Notice by the way that $id_0 = \ket{}\mapsto \ket{}$. Indeed, $\ket{}$ is absolutely valid, it represents an empty register.

\begin{example}
We can give the $\cat{SOP}$ version of the usual quantum gates:\\
\begin{minipage}{0.45\columnwidth}
\begin{align*}
R_Z(\alpha)&:= \ket{x}\mapsto e^{2i\pi\frac{\alpha x}{2\pi}}\ket x\\
H &:= \ket x \mapsto \frac{1}{\sqrt{2}}\sum_{y\in V} e^{2i\pi \frac{xy}{2}}\ket y
\end{align*}
\end{minipage}
\hfill
\begin{minipage}{0.45\columnwidth}
\begin{align*}
\textit{CNot} &:= \ket{x_1,x_2}\mapsto \ket{x_1,x_1{\oplus} x_2}\\
CZ &:= \ket{x_1,x_2}\mapsto e^{2i\pi\frac{x_1x_2}{2}}\ket{x_1,x_2}
\end{align*}
\end{minipage}

\end{example}

\begin{example}
Let us derive the operation $(id\otimes H)\circ \textit{CNot}\circ (id\otimes H)$:
\begin{align*}
&(id\otimes H)\circ \textit{CNot}\circ (id\otimes H)\\
&= (id\otimes H)\circ\left(\vphantom{\rule{1pt}{2em}}\ket{x_1,x_2}\mapsto \ket{x_1,x_1{\oplus} x_2}\right)\circ\left(\ket{x_1,x_2} \mapsto \frac{1}{\sqrt{2}}\sum_{y\in V} e^{2i\pi \frac{x_2y}{2}}\ket{x_1,y}\right)\\
&=\left(\ket{x_1,x_2} \mapsto \frac{1}{\sqrt{2}}\sum_{y\in V} e^{2i\pi \frac{x_2y}{2}}\ket{x_1,y}\right)\circ\left(\ket{x_1,x_2} \mapsto \frac{1}{\sqrt{2}}\sum_{y_1\in V} e^{2i\pi \frac{x_2y_1}{2}}\ket{x_1,x_1{\oplus}y_1}\right)\\
&=\ket{x_1,x_2} \mapsto \frac12\sum_{y_1,y_2\in V} e^{2i\pi\left(\frac{x_2y_1}{2}+\frac{x_1+y_1-2x_1y_1}{2}y_2\right)}\ket{x_1,y_2}
\end{align*}
where $x_1+y_1-2x_1y_1=\widehat{x_1\oplus y_1}$.
\end{example}

The previous definition contains a claim: that $\cat{SOP}$ is a PROP. To be so, one has to check all the axioms of PROPs. One has to be careful when doing so. Indeed, the sequential composition $(.\circ.)$ induces a substitution. Hence, one has to check all the axioms in the presence of a ``context'', that is, one has to show that the axioms can be applied \emph{locally}.

If an axiom states \tikzfig{axiom-t}, one should ideally check that \tikzfig{axiom-t-context} for any ``before'' morphism $B$ and any ``after'' morphism $A$. However, this can be easily reduced to checking that \tikzfig{axiom-t-context-2}.

In the case of the axioms of PROPs, this can further be reduced to showing the axioms without context, as neither $id_n$ nor $\sigma_{n,m}$ introduce variables or phases. For the other axioms, however, the context will have to be taken into account.

A fairly straightforward but tedious verification gives that, indeed, $\cat{SOP}$ is a PROP. We give as an example the proof of associativity of the sequential composition (without context for simplicity):
\begin{align*}
(f\circ g)\circ h &= \left(\ket{\vec x}\mapsto s_gs_f\sum e^{2i\pi\left(P_g+P_f[\vec x_f\leftarrow \widehat{\vec Q_g}]\right)}\ket{\vec Q_f[\vec x_f\leftarrow \vec Q_g]}\right)\circ h \\
&= \ket{\vec x}\mapsto s_gs_fs_h\sum e^{2i\pi\left(P_h+P_g[\vec x_g\leftarrow \widehat{\vec Q_h}]+P_f[\vec x_f\leftarrow \widehat{\vec Q_g},\vec x_g\leftarrow \widehat{\vec Q_h}]\right)}\ket{\vec Q_f[\vec x_f\leftarrow \vec Q_g,\vec x_g\leftarrow \vec Q_h]}\\
&= \ket{\vec x}\mapsto s_gs_fs_h\sum e^{2i\pi\left(P_h+P_g[\vec x_g\leftarrow \widehat{\vec Q_h}]+P_f[\vec x_f\leftarrow \widehat{\vec Q_g}[\vec x_g\leftarrow \widehat{\vec Q_h}]]\right)}\ket{\vec Q_f[\vec x_f\leftarrow \vec Q_g[\vec x_g\leftarrow \vec Q_h]]}\\
&= f\circ \left(\ket{\vec x}\mapsto s_gs_h\sum e^{2i\pi\left(P_h+P_g[\vec x_g\leftarrow \widehat{\vec Q_h}]\right)}\ket{\vec Q_g[\vec x_g\leftarrow \vec Q_h]}\right) = f\circ(g\circ h)
\end{align*}
or that $\sigma$ swaps the places of morphisms:
\begin{align*}
(id_p \otimes f)\circ \sigma_{n,p} =& \left(\ket{\vec x_1,\vec x_2} \mapsto s\sum e^{2i\pi P_f}\ket{\vec x_1, \vec Q_f}\right)\circ\left(\ket{\vec x_1,\vec x_2}\mapsto\ket{\vec x_2,\vec x_1}\right)\\
&=\ket{\vec x_1,\vec x_2} \mapsto s\sum e^{2i\pi P_f}\ket{\vec Q_f,\vec x_1}\\
&=\left(\ket{\vec x_1,\vec x_2}\mapsto\ket{\vec x_2,\vec x_1}\right)\circ\left(\ket{\vec x_1,\vec x_2} \mapsto s\sum e^{2i\pi P_f}\ket{\vec Q_f,\vec x_2}\right)
= \sigma_{m,p}\circ(f\otimes id_p)
\end{align*}

\subsection{From $\cat{SOP}$ to $\cat{Qubit}$}

To check the soundness of what we are going to do in the following, it may be interesting to have a way of interpreting morphisms of $\cat{SOP}$ as morphisms of $\cat{Qubit}$.

\begin{definition}
The functor $\interp{.}:\cat{SOP}\to\cat{Qubit}$ is defined as being identity on objects, and such that, for $f=\ket{\vec x}\mapsto s\sum\limits_{\vec y\in V^k} e^{2i\pi P(\vec x,\vec y)}\ket{Q(\vec x,\vec y)}$:
$$\interp{f} := \hspace*{2em}s\hspace*{-2em}\sum\limits_{(\vec x,\vec y)\in \{0,1\}^n\times\{0,1\}^k}\hspace*{-2em} e^{2i\pi P(\vec x,\vec y)}\ketbra{Q(\vec x,\vec y)}{\vec x}$$
\end{definition}

\begin{example}
The interpretation of $H$ is as intended the Hadamard gate:
$$\interp{H} = \frac{1}{\sqrt{2}}\sum_{x,y\in \{0,1\}} e^{2i\pi \frac{xy}{2}}\ketbra y x = \frac{1}{\sqrt2}\left(\ketbra00+\ketbra01+\ketbra10-\ketbra11\right)$$
\end{example}

\begin{proposition}
The interpretation $\interp{.}$ is a \emph{PROP-functor}, meaning:
\begin{itemize}
\item $\interp{.\circ.}=\interp{.}\circ\interp{.}$
\item $\interp{.\otimes.}=\interp{.}\otimes\interp{.}$
\item $\interp{\sigma_{n,m}}=\sigma_{n,m}$
\end{itemize}
\end{proposition}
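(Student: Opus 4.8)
The plan is to verify each of the three claimed equalities directly, by expanding both sides according to the definitions of composition, tensor product, and the braiding in $\cat{SOP}$, and comparing against the corresponding operations in $\cat{Qubit}$. The key technical point throughout is the interplay between the symbolic substitution $\vec x_f\leftarrow\vec Q_g$ used in $\cat{SOP}$ and the summation over concrete boolean values $\{0,1\}$ performed by $\interp{.}$. I expect this substitution-versus-evaluation bridge to be the main obstacle, and so I would isolate it as a preliminary lemma.

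First I would establish the semantic compatibility of substitution. Concretely, for a phase polynomial $P_f(\vec x_f,\vec y_f)$ and boolean polynomials $\vec Q_g(\vec x_g,\vec y_g)$, I would show that evaluating $P_f[\vec x_f\leftarrow\widehat{\vec Q_g}]$ at a concrete assignment $(\vec x_g,\vec y_g,\vec y_f)\in\{0,1\}^{\ast}$ equals the evaluation of $P_f$ at the point whose $\vec x_f$-coordinates are the booleans $\vec Q_g(\vec x_g,\vec y_g)$. This is where the hat operation $\widehat{(\cdot)}$ earns its keep: since $\widehat{p\oplus q}=\widehat p+\widehat q-2\widehat p\widehat q$ and $\widehat{pq}=\widehat p\widehat q$, the immersion $\widehat{\vec Q_g}$ is precisely the real-arithmetic polynomial that agrees with the boolean value of $\vec Q_g$ on all $\{0,1\}$-inputs (using $X^2=X$). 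Thus on concrete assignments, substituting $\widehat{\vec Q_g}$ into $P_f$ computes the same real number as feeding the boolean output $\vec Q_g(\vec x_g,\vec y_g)$ into $P_f$. An analogous but simpler remark handles $\vec Q_f[\vec x_f\leftarrow\vec Q_g]$ inside the ket.

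With that lemma in hand, the composition case is a matching of double sums. Expanding $\interp{f\circ g}$ gives
\[
s_fs_g\hspace{-1em}\sum_{\vec x_g,\vec y_f,\vec y_g}\hspace{-1em} e^{2i\pi\left(P_g+P_f[\vec x_f\leftarrow\widehat{\vec Q_g}]\right)}\ketbra{\vec Q_f[\vec x_f\leftarrow\vec Q_g]}{\vec x_g},
\]
while $\interp f\circ\interp g$ inserts a resolution of the identity $\sum_{\vec x_f}\ketbra{\vec x_f}{\vec x_f}$ between the two interpretations; the braket $\braket{\vec x_f}{\vec Q_g(\vec x_g,\vec y_g)}$ collapses the intermediate sum over $\vec x_f$ to the single term $\vec x_f=\vec Q_g(\vec x_g,\vec y_g)$, and the preliminary lemma then identifies the surviving phase and ket with those above. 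The tensor case is easier still: both sides factor as a product over disjoint variable sets $V_f$ and $V_g$ with no substitution occurring, so $\interp{f\otimes g}=\interp f\otimes\interp g$ follows by distributing the sum over the independent coordinates. Finally, $\interp{\sigma_{n,m}}$ is computed from the definition, where $s=1$, $P=0$, no summation variables appear, and $\vec Q(\vec x_1,\vec x_2)=(\vec x_2,\vec x_1)$; this yields exactly $\sum_{\vec x_1,\vec x_2}\ketbra{\vec x_2,\vec x_1}{\vec x_1,\vec x_2}=\sigma_{n,m}$, matching the Dirac-notation definition of the braiding in $\cat{Qubit}$.
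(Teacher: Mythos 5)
Your proposal is correct and follows exactly the route the paper intends: the paper's proof is simply the remark that this is a ``straightforward verification,'' and your expansion of the three cases --- including the key observation that the immersion $\widehat{(\cdot)}$ makes symbolic substitution agree with boolean evaluation on $\{0,1\}$-inputs, so that the braket $\braket{\vec x_f}{\vec Q_g}$ collapses the intermediate sum in the composition case --- is precisely the verification being alluded to. No gaps; your preliminary substitution lemma is the right thing to isolate.
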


\begin{proof}
This is a straightforward verification.
\end{proof}

\subsection{$\cat{SOP}$ as a $\dagger$-Compact PROP}

\subsection{Towards a Compact Structure}

It is tempting to try and adapt the compact structure of $\cat{Qubit}$ to $\cat{SOP}$. To do so, we can first define $\eta_n:= \ket{}\mapsto\sum\limits_{\vec y\in V^n}\ket{\vec y,\vec y}$. However, we cannot as easily define $\epsilon_n$. What $\epsilon_1$ intuitively does in $\cat{Qubit}$ is: given two inputs $x_1$ and $x_2$, it checks if they are equal, if so it returns the scalar $1$, if not, the scalar $0$.

In $\cat{SOP}$ we can force two variables to be equal, using a third fresh variable $y$. Indeed, consider the sum $\sum e^{2i\pi(\frac{x_1+x_2}{2}y+P)}$ where $y$ is fresh i.e.~not used in $P$. Then, if the variables $x_1$ and $x_2$ are different, then $$\sum e^{2i\pi(\frac{x_1+x_2}{2}y+P)} = \sum e^{2i\pi(\frac{y}{2}+P)} = \sum e^{2i\pi(0+P)} + \sum e^{2i\pi(\frac{1}{2}+P)} = \sum e^{2i\pi P} - \sum e^{2i\pi P} = 0$$

Hence, we can define $\epsilon_1$ as $\epsilon_1:=\ket{x_1,x_2}\mapsto \frac12\sum\limits_{y\in V} e^{2i\pi \frac{x_1+x_2}{2}y} \ket{}$ and even extend it to arbitrary objects: $\epsilon_n := \ket{\vec x_1,\vec x_2}\mapsto\frac{1}{2^n}\sum\limits_{\vec y\in V^n} e^{2i\pi \frac{\vec x_1\cdot \vec y+\vec x_2\cdot\vec y}{2}} \ket{}$.

We can check that $\interp{\epsilon_1}=\epsilon_1$:
\begin{align*}
\interp{\epsilon_1} &= \frac12\sum_{x_i,y\in \{0,1\}} e^{2i\pi\frac{x_1+x_2}{2}y}\ketbra{}{x_1,x_2}
= \frac12\sum_{x_i\in \{0,1\}} (1+e^{i\pi (x_1+x_2)})\bra{x_1,x_2}\\
&=\bra{00}+\bra{11}
\end{align*}
Similarly, $\interp{\epsilon_n} = \epsilon_n$.

We can now try to check whether the axioms of $\dagger$-compact PROPs (at least the ones that do not require the $\dagger$, as we have not defined it yet) are satisfied:

\begin{align*}
\sigma_{n,n}\circ\eta_n = \left(\vphantom{\rule{1pt}{2em}}\ket{\vec x_1, \vec x_2}\mapsto \ket{\vec x_2, \vec x_1}\right)\circ\left(\ket{}\mapsto \sum_{\vec y\in V^n} \ket{\vec y,\vec y}\right)=\ket{}\mapsto \sum_{\vec y\in V^n} \ket{\vec y,\vec y}=\eta_n
\end{align*}

\begin{align*}
(id_n \otimes \sigma_{n,m} \otimes & id_m) \circ (\eta_n\otimes \eta_m)\\
&= \left(\vphantom{\rule{1pt}{2em}}\ket{\vec x_1, \vec x_2,\vec x_3, \vec x_4}\mapsto \ket{\vec x_1, \vec x_3,\vec x_2, \vec x_4}\right)\circ\left(\ket{}\mapsto \sum_{\vec y_1\in V^n,\vec y_2\in V^m} \ket{\vec y_1,\vec y_1, \vec y_2, \vec y_2}\right)\\
&=\ket{}\mapsto \sum_{\vec y_1\in V^n,\vec y_2\in V^m} \ket{\vec y_1,\vec y_2, \vec y_1, \vec y_2}=\ket{}\mapsto \sum_{\vec y\in V^{n+m}} \ket{\vec y,\vec y}
=\eta_{n+m}
\end{align*}
These two equations were shown without a context for simplicity, but still hold with it.

However, the equation:
$$(\epsilon_n\otimes id_n)\circ(id_n\otimes \eta_n) = id_n = (id_n\otimes\epsilon_n)\circ(\eta_n\otimes id_n)$$
is not satisfied, as:
\begin{align*}
(\epsilon_n\otimes id_n)\circ(id_n\otimes \eta_n) = \ket{\vec x}\mapsto \frac12 \sum_{\vec y_1,\vec y_2\in V^n} e^{2i\pi\frac{\vec x\cdot\vec y_2+ \vec y_1\cdot\vec y_2}{2}}\ket{\vec y_1}\neq id_n
\end{align*}

The fact that we have $(\epsilon_n\otimes id_n)\circ(id_n\otimes \eta_n) \neq id_n$ in $\cat{SOP}$ but $\interp{(\epsilon_n\otimes id_n)\circ(id_n\otimes \eta_n)} = \interp{id_n}$ in $\cat{Qubit}$ hints at a way to \emph{rewrite} the first term as the second in $\cat{SOP}$.

\subsection{An Equational Theory}

\begin{figure}[!htb]
\[\sum_{\vec y} e^{2i\pi P}\ket{\vec Q}
\rewrite{y_0\notin\Var(P,\vec Q)} 2\sum_{\vec y\setminus{\{y_0\}}} e^{2i\pi P}\ket{\vec Q}\tag{Elim}\]

\[\sum_{\vec y} e^{2i\pi\left(\frac{y_0}{2} (y_0' + \widehat{Q_2}) + R\right)}\ket{\vec Q}
\rewrite{y_0\notin\Var(R,Q_2,\vec Q)\\y_0'\notin \Var(Q_2)} 2\!\!\sum_{\vec y\setminus{\{y_0,y_0'\}}}\!\! e^{2i\pi\left(R\left[y_0'\leftarrow \widehat{Q_2}\right]\right)}\ket{\vec Q\left[y_0'\leftarrow Q_2\right]}\tag{HH}\]

\[\sum_{\vec y} e^{2i\pi\left(\frac{y_0}{4} + \frac{y_0}{2}\widehat{Q_2} + R\right)}\ket{\vec Q}
\rewrite{y_0\notin\Var(Q_2,R,\vec Q)} \sqrt{2}\sum_{\vec y\setminus{\{y_0\}}} e^{2i\pi\left(\frac{1}{8}-\frac{1}{4}\widehat{Q_2} + R\right)}\ket{\vec Q}\tag{$\omega$}\]
\caption[]{Rewrite strategy $\rewrite{\operatorname{Clif}}$.}
\label{fig:rewrite-rules-1}
\end{figure}

A rewrite strategy is given in \cite{SOP}, and we show in Figure \ref{fig:rewrite-rules-1} the ones we are going to use in the paper.
Each rewrite rule contains a condition, which usually ensures that a variable (the one we want to get rid of) does not appear in some polynomials. We hence use $\Var$ as the operator that gets all the variables from a sequence of polynomials:
$$\begin{cases}
\Var(Q_1,Q_2,\ldots) = \Var(Q_1)\cup\Var(Q_2)\cup\ldots\\
\Var(Q_1\oplus Q_2) = \Var(Q_1)\cup\Var(Q_2)\\
\Var(Q_1Q_2) = \Var(Q_1)\cup\Var(Q_2)\\
\Var(y) = \{y\} \text{ if $y\in V$}\\
\Var(0)=\Var(1)=\varnothing
\end{cases}$$
For simplicity, the input signature is omitted, as well as the parameters in the polynomials.

$\rewrite{\operatorname{Clif}}$ denotes the rewrite system formed by the three rules (Elim), (HH) and ($\omega$). $\overset\ast{\rewrite{\operatorname{Clif}}}$ is the transitive closure of the rewrite system. Notice that all the rules remove at least one variable from the morphism, so we know $\rewrite{\operatorname{Clif}}$ terminates.

When the rules are not oriented, we get an equivalence relation on the morphisms of $\cat{SOP}$. We denote this equivalence $\underset{\operatorname{Clif}}\sim$.

We denote $\cat{SOP}/\underset{\operatorname{Clif}}\sim$ the category $\cat{SOP}$ quotiented by the equivalence relation $\underset{\operatorname{Clif}}\sim$. This newly obtained category is still a PROP. It even has a compact structure, as the last necessary axiom is now derivable:
\begin{align*}
A\circ(\epsilon\otimes id)\circ(id\otimes \eta)\circ B = \ket{ x}\mapsto \frac{s_As_B}2 \sum_{\substack{\vec y_A,\vec y_B\\ y_1, y_2\in V}} e^{2i\pi (P_B+(\widehat{Q_B}+y_1)\frac{y_2}{2}+P_A[x\leftarrow y_1])}\ket{Q_A[x\leftarrow y_1]}\\
\rewrite{\text{(HH)}}\ket{ x}\mapsto {s_As_B} \sum_{\vec y_A,\vec y_B} e^{2i\pi (P_B+P_A[x\leftarrow y_1][y_1\leftarrow \widehat{Q_B}])}\ket{Q_A[x\leftarrow y_1][y_1\leftarrow \widehat{Q_B}]}\\
=\ket{ x}\mapsto {s_As_B} \sum_{\vec y_A,\vec y_B} e^{2i\pi (P_B+P_A[x\leftarrow \widehat{Q_B}])}\ket{Q_A[x\leftarrow \widehat{Q_B}]} = A\circ B
\end{align*}

\subsection{$\dagger$-Functor for $\cat{SOP}$}

To show that $\cat{SOP}/\underset{\operatorname{Clif}}\sim$ is $\dagger$-compact, we lack a notion of $\dagger$-functor $\cat{SOP}$.

Remember that we defined $\overline{(.)}$ as $(.)^{\dagger t}$. Since we have a compact structure, we can already define the functor $(.)^t$. Thanks to the new equivalence relation $\underset{\operatorname{Clif}}\sim$, this functor is involutive. Hence, we have $(.)^\dagger = \overline{(.)}^t$. An appropriate definition of the conjugation can be given:

\begin{definition}
The conjugation is defined on morphisms $f = \ket{\vec x}\mapsto s_f\sum e^{2i\pi P_f}\ket{\vec Q_f}$ as:
$$\overline{f}:= \ket{\vec x}\mapsto {s_f}\sum e^{-2i\pi P_f}\ket{\vec Q_f}$$
\end{definition}

This gives a definition of the $\dagger$. For the record, if $f$ is of the above form:
\begin{align*}
f^t &= \ket{\vec x}\mapsto \frac{s_f}{2^m}\sum e^{2i\pi \left(P_f+\frac{\vec{\widehat{Q_f}}[\vec x_f \leftarrow \vec y]\cdot \vec y'+\vec x\cdot\vec y'}2\right)}\ket{\vec y}\\
f^\dagger &= \ket{\vec x}\mapsto \frac{s_f}{2^m}\sum e^{2i\pi \left(-P_f+\frac{\vec{\widehat{Q_f}}[\vec x_f \leftarrow \vec y]\cdot \vec y'+\vec x\cdot\vec y'}2\right)}\ket{\vec y}
\end{align*}

These three functors are the expected ones:

\begin{proposition}
\label{prop:transpose-conjugate-dagger}
$\interp{(.)^t} = \interp{.}^t\ ,\quad
\interp{\overline{(.)}} = \overline{\interp{.}}\ ,\quad
\interp{(.)^\dagger} = \interp{.}^\dagger$
\end{proposition}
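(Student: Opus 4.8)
The plan is to establish the three identities in the order transpose, conjugate, dagger, exploiting that $\interp{.}$ is already known to be a PROP-functor and that the dagger factors through the other two.

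\emph{Transpose.} Since the transpose is defined purely from the compact structure by $f^t = (\epsilon_m \otimes id_n)\circ(id_m\otimes f\otimes id_n)\circ (id_m\otimes \eta_m)$, and the identical formula defines $(.)^t$ in $\cat{Qubit}$, it suffices to check that $\interp{.}$ preserves the compact structure. We already have $\interp{\epsilon_n} = \epsilon_n$ from the computation above, and $\interp{\eta_n} = \eta_n$ follows immediately, since $\interp{\eta_n} = \sum_{\vec y\in\{0,1\}^n}\ket{\vec y,\vec y}$ is exactly $\eta_n$ in $\cat{Qubit}$. Applying the PROP-functor identities $\interp{.\circ.}=\interp{.}\circ\interp{.}$ and $\interp{.\otimes.}=\interp{.}\otimes\interp{.}$ to the composite defining $f^t$ then yields $\interp{f^t} = (\epsilon_m \otimes id_n)\circ(id_m\otimes \interp{f}\otimes id_n)\circ (id_m\otimes \eta_m) = \interp{f}^t$.

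\emph{Conjugate.} This is a direct computation from the two definitions. Writing $f = \ket{\vec x}\mapsto s_f\sum e^{2i\pi P_f}\ket{\vec Q_f}$, the $\cat{SOP}$ conjugate $\overline{f}$ only flips the sign of the phase, so $\interp{\overline f} = s_f\sum_{\vec x,\vec y} e^{-2i\pi P_f}\ketbra{Q_f}{\vec x}$. On the other side, the $\cat{Qubit}$ conjugate is entrywise complex conjugation; since $s_f$ is real, each phase $e^{2i\pi P_f}$ becomes $e^{-2i\pi P_f}$, and the basis matrices $\ketbra{Q_f}{\vec x}$ are real, so conjugation passes through the sum and produces the very same expression. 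Hence $\interp{\overline{(.)}} = \overline{\interp{.}}$. The only point to keep in mind is that several paths $\vec y$ may feed the same matrix entry, but since conjugation is additive this causes no difficulty.

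\emph{Dagger.} With the first two identities in hand, the dagger follows formally from the factorisation $(.)^\dagger = \overline{(.)}^t$ valid in any $\dagger$-compact PROP: $\interp{f^\dagger} = \interp{(\overline f)^t} = \interp{\overline f}^t = \overline{\interp{f}}^t = \interp{f}^\dagger$, where the second equality uses the transpose case, the third the conjugate case, and the last the same factorisation on the $\cat{Qubit}$ side. One could instead expand the explicit formula for $f^\dagger$ and match it entrywise against the $\cat{Qubit}$ adjoint, but the compositional route avoids that calculation. The only genuinely delicate point is the transpose step, namely the claim that $\interp{.}$ preserves $\eta$ and $\epsilon$; once that is secured (and it is, by the verifications already carried out), everything reduces to functoriality. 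Care is also needed to confirm that the $\cat{Qubit}$-side operations $(.)^t$, $\overline{(.)}$, $(.)^\dagger$ are literally the compact-structure transpose, entrywise conjugation, and adjoint, as recorded in the $\cat{Qubit}$ example, so that both sides are being compared under matching definitions.
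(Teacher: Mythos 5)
Your proposal is correct and follows essentially the same route as the paper's own proof: transpose via the compact-structure definition plus PROP-functoriality of $\interp{.}$ (with $\eta$, $\epsilon$ preserved), conjugate by a direct entrywise computation using that $s_f$ and the basis matrices are real, and dagger by the factorisation $(.)^\dagger = \overline{(.)}^t$ combined with the first two identities. The extra care you take (explicitly checking $\interp{\eta_n}=\eta_n$ and noting that conjugation distributes over sums even when several paths hit the same entry) only makes explicit what the paper leaves implicit.
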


\begin{proof}
In appendix at page \pageref{prf:transpose-conjugate-dagger}.
\end{proof}

We can finally prove the wanted result:

\begin{theorem}
\label{thm:dagger-compact-prop}
$\cat{SOP}/\underset{\operatorname{Clif}}\sim$ is a $\dagger$-compact PROP.
\end{theorem}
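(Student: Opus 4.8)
The plan is to assemble the pieces already in place rather than to verify the $\dagger$-compact axioms from scratch. Three facts are available: first, $\cat{SOP}/\underset{\operatorname{Clif}}\sim$ is a PROP; second, it carries a compact structure, since $\eta_n$ and $\epsilon_n$ satisfy $\sigma_{n,n}\circ\eta_n=\eta_n$ and the additivity law $\eta_{n+m}=(id_n\otimes\sigma_{n,m}\otimes id_m)\circ(\eta_n\otimes\eta_m)$, while the yanking equation $(\epsilon_n\otimes id_n)\circ(id_n\otimes\eta_n)=id_n$ (and its mirror image) becomes derivable in the quotient through the (HH) rule, exactly as shown just before the statement; third, conjugation $\overline{(.)}$, transpose $(.)^t$, and $(.)^\dagger:=\overline{(.)}^t$ are defined. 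Hence it suffices to show that $(.)^\dagger$ is a genuine $\dagger$-functor compatible with this compact structure, i.e. that it descends to the quotient, is involutive, identity-on-objects, contravariant on $\circ$, monoidal on $\otimes$, sends $\sigma_{n,m}$ to $\sigma_{m,n}$, and satisfies the duality $\eta_n^\dagger=\epsilon_n$.

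First I would treat conjugation. Functoriality is immediate: both $\circ$ and $\otimes$ act additively on phase polynomials (after the substitution induced by $\circ$), and negation is additive, so $\overline{f\circ g}=\overline f\circ\overline g$ and $\overline{f\otimes g}=\overline f\otimes\overline g$; involutivity and identity-on-objects are clear. The point requiring care is well-definedness on the quotient: I would check that each rule (Elim), (HH), $(\omega)$ is stable under negating the phase, using that the variables are idempotent and that phases are taken modulo $1$, so that $-\tfrac12 t\equiv\tfrac12 t$ and $-\tfrac14 t\equiv \tfrac14 t+\tfrac12 t$ for any $\{0,1\}$-valued $t$; this lets the conjugate of each left-hand side be matched, up to $\underset{\operatorname{Clif}}\sim$, with the conjugate of the corresponding right-hand side. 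Because the compact structure already makes $(.)^t$ an involutive contravariant strict monoidal functor (this uses only the symmetry and the now-available snake equations), the composite $(.)^\dagger$ is then automatically involutive, identity-on-objects, contravariant on $\circ$, and monoidal on $\otimes$. I would also note that $\overline{\sigma_{n,m}}=\sigma_{n,m}$ and $\sigma_{n,m}^t=\sigma_{m,n}$ give $\sigma_{n,m}^\dagger=\sigma_{m,n}$, and that conjugation fixes both $\eta_n$ and $\epsilon_n$, since their phases negate to themselves modulo $1$; the latter shows $\overline{(.)}\circ(.)^t=(.)^t\circ\overline{(.)}$, so that $(.)^\dagger$ is unambiguously defined and indeed involutive.

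It remains to verify the duality $\eta_n^\dagger=\epsilon_n$. Since $\overline{\eta_n}=\eta_n$, this reduces to $\eta_n^t=\epsilon_n$, which I would obtain by substituting $\eta_n$ into the explicit $\cat{SOP}$ formula for the transpose given above and simplifying the resulting sum with $\rewrite{\operatorname{Clif}}$ down to $\epsilon_n$; this is the symbolic counterpart of the elementary identity $\bigl(\sum_{\vec x}\ket{\vec x,\vec x}\bigr)^t=\sum_{\vec x}\bra{\vec x,\vec x}$ in $\cat{Qubit}$. Together with the compact axioms and the $\dagger$-functor from the previous step, this supplies every piece of data in the definition of a $\dagger$-compact PROP.

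I expect the main obstacle to be precisely the well-definedness of conjugation, and hence of $(.)^\dagger$, on the quotient: confirming that (Elim), (HH) and $(\omega)$ are invariant under phase negation modulo $1$, where $(\omega)$ is the delicate case because its quarter-integer phase does not negate to itself as cleanly as the half-integer phases of the other rules, and matching it requires the reductions $\widehat{1\oplus Q_2}=1-\widehat{Q_2}$ together with the idempotent, mod-$1$ simplifications. The second genuinely nontrivial point is that $\eta_n^t=\epsilon_n$ holds only up to $\underset{\operatorname{Clif}}\sim$ rather than on the nose, so this step truly relies on the rewrite system; everything else is bookkeeping paralleling the PROP-axiom checks already carried out in the paper.
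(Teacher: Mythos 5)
Your proposal is correct, and its skeleton is the same as the paper's: the quotient is still a PROP, the compact-structure axioms are already available (the snake equation being derivable by (HH), as shown in the text), and the remaining work concerns the dagger. The difference lies in scope and in one technique. The paper's own proof is much shorter: it takes well-definedness, functoriality and the duality $\eta_n^\dagger=\epsilon_n$ for granted and only proves involutivity, via the two computations $\overline{\overline f}=f$ and $\overline{f^t}=\overline{f}^{\,t}$, the latter done by a direct symbolic calculation resting on exactly the observation you make — the compact-structure part of the phase is integer-valued, so it negates to itself modulo $1$. You instead obtain $\overline{f^t}=\overline{f}^{\,t}$ categorically (conjugation is a covariant strict monoidal functor fixing $\eta_n$ and $\epsilon_n$, hence commutes with the transpose construction), which is cleaner. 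More importantly, you verify two points that the paper silently skips but that the statement genuinely needs: (i) that conjugation, hence the dagger, descends to the quotient, i.e.\ that (Elim), (HH) and ($\omega$) are stable under phase negation — your handling of ($\omega$) is exactly right, since $-\frac{y_0}4-\frac{y_0}2\widehat{Q_2}\equiv\frac{y_0}4+\frac{y_0}2\widehat{1\oplus Q_2}$ modulo integer-coefficient polynomials, and applying ($\omega$) to this pattern reproduces the negated right-hand side using $\widehat{1\oplus Q_2}=1-\widehat{Q_2}$; and (ii) that $\eta_n^\dagger=\epsilon_n$ holds only up to $\underset{\operatorname{Clif}}\sim$, via a cascade of (HH) steps reducing $\eta_n^t$ to $\epsilon_n$. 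So your route is the paper's route carried out in full; what the extra care buys is a proof that the dagger is actually a well-defined functor on the quotient category, rather than merely an involution on chosen representatives.
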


\begin{proof}
In appendix, at page \pageref{prf:dagger-compact-prop}.
\end{proof}

\section{Redefinition of $\cat{SOP}$}
\label{sec:sop-v2}

In $\cat{Qubit}$, and hence in $\cat{SOP}$, it may feel unnatural to have asymmetrical input and outputs. Why not have morphisms of the form $f= s\sum_{\vec y} e^{2i\pi P}\ketbra{\vec O}{\vec I}$? In this case, we have to change the definition of the composition, and, because of this, the $\cat{SOP}$ morphisms do not form a category. However, it is a category when quotiented by $\underset{\operatorname{Clif}}\sim$. This is the reason why we did not define $\cat{SOP}$ like this at first, although it greatly simplifies the notions of compact structure and $\dagger$-functor.

We now redefine $\cat{SOP}$, and will use this new definition in the rest of the paper:
\begin{definition}[$\cat{SOP}$]
We redefine $\cat{SOP}$ as the collection of objects $\mathbb N$ and morphisms between them:
\begin{itemize}
\item Identity morphisms are $id_n : \sum\limits_{\vec y\in V^n}\ketbra{\vec y}{\vec y}$
\item Morphisms $f:n\to m$ are of the form $f: s\sum\limits_{\vec y\in V^k} e^{2i\pi P(\vec y)}\ketbra{\vec O(\vec y)}{\vec I(\vec y)}$ where $s\in\mathbb{R}$, $P\in \mathbb R[X_1,\ldots,X_{k}]/(1,X_i^2-X_i)$, $\vec O \in \left(\mathbb F_2[X_1,\ldots,X_{k}]\right)^m$ and $\vec I \in \left(\mathbb F_2[X_1,\ldots,X_{k}]\right)^n$
\item Composition is obtained as $f\circ g := \frac{s_fs_g}{2^{\abs{\vec I_f}}}\sum\limits_{\substack{\vec y_f,\vec y_g\\\vec y\in V^m}} e^{2i\pi \left(P_g+P_f+\frac{\vec O_g\cdot \vec y+\vec I_f\cdot \vec y}{2}\right)}\ketbra{\vec O_f}{\vec I_g}$
\item Tensor product is obtained as $f\otimes g := s_fs_g\sum\limits_{\substack{\vec y_f,\vec y_g}} e^{2i\pi (P_g+P_f)}\ketbra{\vec O_f\vec O_g}{\vec I_f\vec I_g}$
\item The symmetric braiding is $\sigma_{n,m}= \sum\limits_{\vec y_1,\vec y_2}\ketbra{\vec y_2,\vec y_1}{\vec y_1,\vec y_2}$
\item The compact structure is $\eta_n= \sum\limits_{\vec y} \ketbra{\vec y,\vec y}{}$ and $\epsilon_n= \sum\limits_{\vec y} \ketbra{}{\vec y,\vec y}$
\item The $\dagger$-functor is given by: $f^\dagger :=  s\sum\limits_{\vec y} e^{-2i\pi P}\ketbra{\vec I}{\vec O}$
\item The functor $\interp{.}$ is defined as: $\interp{f}:= s\sum\limits_{\vec y\in \{0,1\}^k} e^{2i\pi P(\vec y)}\ketbra{\vec O(\vec y)}{\vec I(\vec y)}$
\end{itemize}
\end{definition}

As announced, this is not a category, as $id\circ id = \frac12\sum_{\vec y} e^{2i\pi \frac{y_1+y_2}{2}y_3}\ketbra{y_2}{y_1}\neq \sum_y \ketbra yy = id$. This problem is solved by reintroducing the rewrite rules, that we adapt to the new formalism in Figure~\ref{fig:rewrite-rules-1-bis}.
\begin{figure}[!htb]
\[\sum_{\vec y} e^{2i\pi P}\ketbra{\vec O}{\vec I}
\rewrite{y_0\notin\Var(P,\vec O,\vec I)} 2\sum_{\vec y\setminus{\{y_0\}}} e^{2i\pi P}\ketbra{\vec O}{\vec I}\tag{Elim}\]

\[\sum_{\vec y} e^{2i\pi\left(\frac{y_0}{2} (y_0' + \widehat{Q}) + R\right)}\ketbra{\vec O}{\vec I}
\rewrite{y_0\notin\Var(R,Q,\vec O,\vec I)\\y_0'\notin \Var(Q)} 2\!\!\sum_{\vec y\setminus{\{y_0,y_0'\}}}\!\! e^{2i\pi\left(R\left[y_0'\leftarrow \widehat{Q}\right]\right)}\left(\ketbra{\vec O}{\vec I}\right)\left[y_0'\leftarrow Q\right]\tag{HH}\]

\[\sum_{\vec y} e^{2i\pi\left(\frac{y_0}{4} + \frac{y_0}{2}\widehat{Q} + R\right)}\ketbra{\vec O}{\vec I}
\rewrite{y_0\notin\Var(Q,R,\vec O,\vec I)} \sqrt{2}\sum_{\vec y\setminus{\{y_0\}}} e^{2i\pi\left(\frac{1}{8}-\frac{1}{4}\widehat{Q} + R\right)}\ketbra{\vec O}{\vec I}\tag{$\omega$}\]
\caption[]{Rewrite strategy $\rewrite{\operatorname{Clif}}$.}
\label{fig:rewrite-rules-1-bis}
\end{figure}

Again, we give the same name to the rewrite system, but this last one is the one we will use in the rest of the paper.

\begin{proposition}
$\cat{SOP}/\underset{\operatorname{Clif}}\sim$ is a $\dagger$-compact PROP, and $\interp{.}$ is a $\dagger$-compact PROP-functor.
\end{proposition}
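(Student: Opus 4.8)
The plan is to mirror the development already carried out for the original, asymmetric definition, while exploiting the fact that the redefinition makes both the $\dagger$-functor and the compact structure strictly symmetric and hence easier to handle. Concretely I would proceed in three stages: (i) prove that the three rewrite rules are \emph{sound}, i.e.\ that they preserve $\interp{.}$, so that $\interp{.}$ descends to a well-defined functor on the quotient; (ii) check that $\underset{\operatorname{Clif}}\sim$ is a congruence, so that $\cat{SOP}/\underset{\operatorname{Clif}}\sim$ genuinely is a category with well-defined $\circ$ and $\otimes$; and (iii) verify the PROP and $\dagger$-compact axioms, separating those that hold as strict syntactic identities from the few that hold only modulo $\sim$.

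For (i), which is the real content, I would check each rule by an elementary sum computation over $\{0,1\}$. For (Elim), a variable $y_0$ absent from $P,\vec O,\vec I$ contributes a factor $\sum_{y_0\in\{0,1\}}1=2$. For (HH), summing the fresh pair $y_0,y_0'$ in $e^{2i\pi\frac{y_0}{2}(y_0'+\widehat{Q})}$ forces $y_0'=\widehat{Q}$ and produces the factor $2$, which is exactly the $\cat{Qubit}$ identity underlying Hadamard cancellation. For ($\omega$), I would invoke the Gauss-sum identity $\sum_{y_0\in\{0,1\}}e^{2i\pi(\frac{y_0}{4}+\frac{y_0}{2}\widehat{Q})}=\sqrt2\,e^{2i\pi(\frac18-\frac14\widehat{Q})}$, valid for $\widehat{Q}\in\{0,1\}$. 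Since $\interp{.}$ sends each of these local equalities to a genuine equality in $\cat{Qubit}$, soundness follows, and $\interp{.}$ is well defined on $\cat{SOP}/\underset{\operatorname{Clif}}\sim$.

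For (ii), I would note that both $\circ$ and $\otimes$ act by substituting polynomials and by renaming and merging variable sets, so any redex present in $f$ is still present, with the same side conditions, in $A\circ f\circ B$ and in $f\otimes g$; hence $\sim$ is a congruence and the quotient operations are well defined. The unit law is where the quotient is indispensable: the text already records $id\circ id\neq id$ syntactically while $id\circ id\,\overset\ast{\rewrite{\operatorname{Clif}}}\,id$ via (HH), and the general $id\circ f\sim f$, $f\circ id\sim f$ follow the same way. Associativity of $\circ$, associativity and unitality of $\otimes$, bifunctoriality, and the braiding axioms hold as strict equalities modulo the $\alpha$-renaming of the bound summation variables introduced by composition: as in the displayed associativity calculation for the original definition, the gluing variables of $(f\circ g)\circ h$ and of $f\circ(g\circ h)$ match after a relabelling, and the scalar prefactors $2^{-\abs{\vec I}}$ coincide.

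For (iii), the $\dagger$-compact axioms are now almost immediate. The map $f^\dagger=s\sum e^{-2i\pi P}\ketbra{\vec I}{\vec O}$ is \emph{strictly} involutive, and $(f\circ g)^\dagger=g^\dagger\circ f^\dagger$, $(f\otimes g)^\dagger=f^\dagger\otimes g^\dagger$, $\sigma_{n,m}^\dagger=\sigma_{m,n}$, $\eta_n^\dagger=\epsilon_n$ hold by inspection of the formulas. The snake equations $(\epsilon_n\otimes id_n)\circ(id_n\otimes\eta_n)\sim id_n$ hold only modulo $\sim$, but this is precisely the (HH)-based computation displayed earlier for the compact structure; the remaining axioms $\sigma_{n,n}\circ\eta_n=\eta_n$ and the decomposition of $\eta_{n+m}$ are strict, as shown there. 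Finally, to conclude that $\interp{.}$ is a $\dagger$-compact PROP-functor I would reuse the earlier PROP-functoriality Proposition and Proposition~\ref{prop:transpose-conjugate-dagger} (preservation of $\circ,\otimes,\sigma$ and of $\dagger$), and additionally compute $\interp{\eta_n}=\sum_{\vec x}\ket{\vec x,\vec x}$ and $\interp{\epsilon_n}=\sum_{\vec x}\bra{\vec x,\vec x}$, matching the compact structure of $\cat{Qubit}$. The main obstacle is step (ii): because the raw composition is genuinely non-unital, one must argue carefully and uniformly that passing to the quotient repairs exactly the failing axioms while leaving the strict ones untouched, and in particular that the rewrites needed for the unit and snake laws can always be performed inside an arbitrary context.
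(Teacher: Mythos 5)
Your overall architecture --- soundness of the three rules so that $\interp{.}$ descends to the quotient, closure of $\underset{\operatorname{Clif}}\sim$ under contexts, then verification of the PROP and $\dagger$-compact axioms modulo $\underset{\operatorname{Clif}}\sim$ --- is the right one, and it is essentially what the paper leaves implicit (it states this proposition without proof, deferring to the development made for the first definition and for Theorem \ref{thm:dagger-compact-prop}). Your soundness computations for (Elim), (HH) and ($\omega$) are correct, and so is the key observation that redexes survive placement in a context, because composition in the redefined formalism never substitutes into $f$ but only adjoins disjoint variables and phase terms not involving the eliminated variable.

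The genuine flaw is your classification of which axioms hold strictly. In the redefined $\cat{SOP}$, \emph{every} instance of $\circ$ inserts fresh connecting variables and a scalar $2^{-\abs{\vec I_f}}$; consequently no equation whose two sides contain different numbers (or arities) of compositions can hold syntactically. This invalidates several of your ``strict'' claims: $\sigma_{m,n}\circ\sigma_{n,m}=id_{n+m}$, the naturality $(id_p\otimes f)\circ\sigma_{n,p}=\sigma_{m,p}\circ(f\otimes id_p)$ (the two sides even carry different scalars, $2^{-(n+p)}$ versus $2^{-(m+p)}$), and in particular $\sigma_{n,n}\circ\eta_n=\eta_n$ and $\eta_{n+m}=(id_n\otimes\sigma_{n,m}\otimes id_m)\circ(\eta_n\otimes\eta_m)$, which you assert are strict ``as shown there''. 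The computations you are pointing to were carried out for the \emph{original}, substitution-based composition, where these equations are indeed strict; they do not transfer to the redefinition. All of them still hold modulo $\underset{\operatorname{Clif}}\sim$ --- each side reduces by (HH) and (Elim) exactly like your unit and snake laws --- so the proposition survives, but several steps of the proof as written are false. The same slip affects the functor part: you invoke the earlier PROP-functoriality proposition and Proposition \ref{prop:transpose-conjugate-dagger}, both proved for the old definition; with the new composition, $\interp{f\circ g}=\interp{f}\circ\interp{g}$ is not inherited from them but requires the (easy) computation that summing each connecting variable over $\{0,1\}$ produces the factor $2^{\abs{\vec I_f}}$ exactly on the assignments where $\vec O_g$ and $\vec I_f$ agree, cancelling the prefactor $2^{-\abs{\vec I_f}}$. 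Redo these verifications natively in the new formalism and the argument is complete.
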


\begin{remark}
In this new formalism, it is fairly easy to perform \emph{weak simulation}: given a quantum circuit $\mathcal C$ and two quantum states $\ket{\psi_i}$ and $\ket{\psi_o}$, what is the probability of outputting $\ket{\psi_o}$ when the circuit $\mathcal{C}$ is applied to the input $\ket{\psi_i}$?

Given $\cat{SOP}$-morphisms $t_{\mathcal C}$ for the circuit and $t_i$ and $t_o$ for the states $\ket{\psi_i}$ and $\ket{\psi_o}$, one simply needs to compute $t_o^{\dagger}\circ t_{\mathcal C}\circ t_i$ and simplify the term (which represents a scalar), before evaluating it.

Obviously, the efficiency of this method is conditioned by the simplification strategy used before evaluation.
\end{remark}

\begin{remark}
When building a $\cat{SOP}$ morphism $t$ from a circuit (or a diagram as we will show in the following) in this formalism, the resulting $t$ is always of size $O(d\times n)$ where $n$ is the size of the register, and $d$ the \emph{depth} of the circuit (and for a diagram in $O(G\times a)$ where $G$ is the number of generators and $a$ the maximum arity of these generators).

This contrasts with the first definition of $\cat{SOP}$, where the size of the constructed $\cat{SOP}$ term is polynomial for fixed restrictions of quantum mechanics (where the gates $R_Z$ are restricted to parameters that are multiples of $\frac\pi{2^{p-1}}$ for a fixed $p$), but exponential in general. This is due to the substitution $[x\leftarrow \widehat Q]$ in the composition. Indeed, if $Q$ contains $\ell$ monomials, $\widehat Q$ contains in the worst case $2^\ell-1$ monomials. In the considered fragment, the size is constrained as $\frac1{2^p}\widehat Q \bmod 1$ has at most $\sum\limits_{k=1}^p \begin{pmatrix}\ell\\k\end{pmatrix}\leq p\ell^p$ monomials.
\end{remark}

\section{$\cat{SOP}$ and Graphical Languages}
\label{sec:sop-zh}

The sum-over-paths formalism was initially intended to be used for isometries. As such, it was given a weak form of completeness -- as we will discuss in the next section. However, if transforming a quantum circuit -- that describes an isometry -- into an $\cat{SOP}$ morphism is easy, the converse, transforming a $\cat{SOP}$ morphism into a circuit is not. And actually, all $\cat{SOP}$ morphisms do not represent an isometry. For instance, the morphism $\epsilon_1$ described above is not an isometry. An even smaller example is $\sum_y\ketbra{}y$ 
which is a valid $\cat{SOP}$ morphism, but clearly does not represent an isometry.

The fact that $\cat{SOP}$ is $\dagger$-compact hints at another (family) of language(s) more suited for representing it: the Z$\ast$-Calculi: ZX, ZW and ZH. These are all $\dagger$-compact graphical languages, that have an interpretation in $\cat{Qubit}$, and are universal for $\cat{Qubit}$. This means that any morphism of $\cat{Qubit}$ can be represented as a morphism of either of these 3 languages.

The language that happens to be the closest to $\cat{SOP}$ is the ZH-Calculus. This is the one we are going to present in the following. However, bear in mind that, as we have semantics-preserving functors between any two of these three languages, one can do the same work with ZX and ZW-Calculi.

\subsection{The ZH-Calculus}


$\cat{ZH}$ is a PROP whose morphisms are composed (sequentially $(.\circ.)$ or in parallel $(.\otimes.)$) from Z-spiders and H-spiders:
\begin{itemize}
\item $Z_m^n:n\to m::$\tikzfig{Z-spider}, called Z-spider
\item $H_m^n(r):n\to m::$\tikzfig{H-spider}, called H-spider, with a parameter $r\in\mathbb C$
\end{itemize}
When $r$ is not specified, the parameter in the H-spider is taken to be $-1$.

$\cat{ZH}$ is made a $\dagger$-compact PROP, which means it also has the symmetric structure $\sigma$, the compact structure ($\eta$, $\epsilon$), and a $\dagger$-functor $(.)^\dagger:\cat{ZH}^{\operatorname{op}}\to\cat{ZH}$. It is defined by:\\
$$(Z_m^n)^\dagger := Z_n^m \qquad\text{and}\qquad(H_m^n(r))^\dagger := H_n^m(\overline{r})$$

For convenience, we define two additional spiders:\\
\tikzfig{X-spider} and \tikzfig{X-spider-neg}

The language comes with a way of interpreting the morphisms as morphisms of $\cat{Qubit}$. The standard interpretation $\interp{.}:\cat{ZH}\to\cat{Qubit}$ is a $\dagger$-compact-PROP-functor, defined as:
$$\interp{\tikzfig{Z-spider}} = \ketbra{0^m}{0^n} + \ketbra{1^m}{1^n}$$
$$\interp{\tikzfig{H-spider}} = \sum_{j_k,i_k\in\{0,1\}}r^{j_1\ldots j_mi_1\ldots i_n}\ketbra{j_1,\ldots,j_m}{i_1,\ldots,i_n}$$
Notice that we used the same symbol for two different functors: the two interpretations $\interp{.}:\cat{SOP}\to\cat{Qubit}$ and $\interp{.}:\cat{ZH}\to\cat{Qubit}$. It should be clear from the context which one is to be used.

The language is universal: $\forall f\in \cat{Qubit},~\exists D_f\in\cat{ZH},~~\interp{D_f} = f$. In other words, the interpretation $\interp{.}$ is onto. This is shown using a notion of normal form. This means there is a functor $\mathcal N:\cat{Qubit}\to\cat{ZH}$.

The language comes with an equational theory, which in particular gives the axioms for a $\dagger$-compact PROP. We will not present it here.

\subsection{From $\cat{ZH}$ to $\cat{SOP}$}

We show in this section how any $\cat{ZH}$ morphism can be turned into a $\cat{SOP}$ morphism in a way that preserves the semantics. We define $\left[.\right]^{\operatorname{sop}}:\cat{ZH}\to\cat{SOP}$ as the $\dagger$-compact PROP-functor such that:
$$\left[~\tikzfig{H-spider-0}~\right]^{\operatorname{sop}} := \sum \ketbra{y_1,\ldots, y_m}{x_1,\ldots,x_n}$$
$$\left[~\tikzfig{H-spider-phase}~\right]^{\operatorname{sop}} := \sum e^{2i\pi \frac{\alpha}{2\pi}x_1\ldots x_ny_1\ldots y_m}\ketbra{y_1,\ldots, y_m}{x_1,\ldots,x_n}$$
$$\left[~\tikzfig{H-scalar}~\right]^{\operatorname{sop}} := s\ketbra{}{} \quad\text{ for }s\in\mathbb R$$
$$\left[~\tikzfig{Z-spider}~\right]^{\operatorname{sop}} := \sum_y \ketbra{y,\ldots, y}{y,\ldots, y}$$
This does not give a full description of $[.]^{\operatorname{sop}}$, as we did not describe the interpretation of the H-spider for all parameters, but only for phases and $0$. However, any H-spider can be decomposed using the previous ones:

\begin{lemma}
\label{lem:H-spider-decomp}
For any $r\in\mathbb C$ such that $\abs{r}\notin\{0,1\}$, there exist $s\in\mathbb C$, $\alpha,\beta\in\mathbb R$ such that:
$$\interp{\tikzfig{H-spider}} = \interp{\tikzfig{H-spider-decomp}}$$
\end{lemma}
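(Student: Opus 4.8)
The plan is to prove the identity purely semantically, i.e.\ as an equality of matrices in $\cat{Qubit}$, since both sides are already written under $\interp{.}$ and the interpretation of $\cat{ZH}$ is fixed. First I would compute the left-hand side explicitly: by the definition of the standard interpretation, $\interp{H_m^n(r)}$ is the $2^m\times 2^n$ matrix whose entry indexed by $(\vec j,\vec i)$ is $r^{j_1\cdots j_m i_1\cdots i_n}$. Since the exponent $j_1\cdots j_m i_1\cdots i_n$ equals $1$ exactly when every leg is set to $1$ and is $0$ otherwise, this matrix is constant equal to $1$ on all entries except the single ``all-ones'' entry $\ketbra{1^m}{1^n}$, which carries the value $r$. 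The phase H-spiders $H_m^n(e^{i\alpha})$ and $H_m^n(e^{i\beta})$ have exactly the same shape, with the distinguished entry equal to $e^{i\alpha}$, respectively $e^{i\beta}$, and all other entries equal to $1$.

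Next I would compute the interpretation of the right-hand diagram and observe that it too is ``all-ones structured'': every entry off the distinguished one is forced to take a common value $C(s,\alpha,\beta)$, while the distinguished entry takes a value $E(s,\alpha,\beta)$. Establishing equality of the two sides therefore collapses, from a $2^{m+n}$-entry matrix identity, to just two scalar equations, namely the normalisation $C(s,\alpha,\beta)=1$ matching the generic entry and $E(s,\alpha,\beta)=r$ matching the distinguished entry. This reduction is the conceptual heart of the argument, and it is where the precise shape of the chosen decomposition is used; crucially, the multi-leg case is handled uniformly with the single-leg case, because only the all-ones condition ever distinguishes one entry from the rest.

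It then remains to solve these two equations for $s\in\mathbb C$ and $\alpha,\beta\in\mathbb R$. The two phase parameters control a combination of two unit-modulus numbers (typically $e^{i\alpha}+e^{i\beta}$), whose modulus ranges over $[0,2]$ and whose argument is free, while the complex scalar $s$ rescales both entries simultaneously; eliminating $s$ through the normalisation leaves a single equation $E/C=r$ for the angles. Here the hypothesis $\abs r\notin\{0,1\}$ enters: the value $\abs r=0$ is the trivial zero box, and $\abs r=1$ is excluded because then $r=e^{i\theta}$ is already a phase and needs no decomposition (it is also the boundary case separating the two regimes of the construction). For every other $r$ one can choose the angles so that the required ratio is attained and then read off $s$.

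The step I expect to be the main obstacle is the modulus bookkeeping, in particular the regime $\abs r>1$: a naive average of two phase spiders produces a distinguished entry whose modulus is at most that of the generic entry, so the distinguished-to-generic ratio is a priori bounded by $1$. Overcoming this is precisely what forces $s$ to be genuinely complex, rather than the real scalar allowed by the bare functor clause, and what dictates the exact form of the decomposition figure. Once the ratio $E/C$ is shown to sweep out all of $\mathbb C\setminus(\{0\}\cup\{z:\abs z=1\})$ as $\alpha$ and $\beta$ vary, the existence of suitable $s,\alpha,\beta$ follows, and the matrix identity then holds entrywise, which is exactly the claimed equality $\interp{H_m^n(r)}=\interp{\text{decomp}}$.
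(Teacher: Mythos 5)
Your proposal is correct and takes essentially the paper's route: the paper also argues semantically, first invoking a sound ZH rule to restructure the multi-legged box and then matching exactly your two scalar quantities, for which it exhibits explicit solutions $\rho e^{i\theta}:=\frac{1-r}{1+r}$, $\alpha:=2\arctan\frac\rho2$, $\beta:=\theta+\frac\pi2$, $s:=\frac{1+r}{2(1+e^{i\alpha})}$ rather than running your surjectivity argument (note these formulas need only $r\neq\pm1$, confirming your reading of the hypothesis $\abs{r}\notin\{0,1\}$ as definitional -- those cases already have a direct $[.]^{\operatorname{sop}}$ clause -- rather than a genuine obstruction). One correction to your ``main obstacle'' paragraph: since $s$ multiplies the generic and the distinguished entry alike, it cancels in the ratio $E/C$, so allowing $s$ to be complex cannot be what makes $\abs{r}>1$ reachable; that regime is handled entirely by the shape of the decomposition, whose ratio is the M\"obius involution $r=\frac{1-u}{1+u}$ of the unbounded quantity $u=-2ie^{i\beta}\tan{\frac{\alpha}{2}}$ (hence it sweeps $\mathbb C\setminus\{\pm1\}$, unlike the closed unit disk of a naive average), while the complex $s$ merely normalises the generic entry to $1$, exactly as in your own elimination step.
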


\begin{proof}
In appendix at page \pageref{prf:H-spider-decomp}.
\end{proof}

As a consequence, we extend the definition of $[.]^{\operatorname{sop}}$ by:
$$\left[~\tikzfig{H-spider}~\right]^{\operatorname{sop}} :=\left[~\tikzfig{H-spider-decomp}~\right]^{\operatorname{sop}} $$

This interpretation of ZH-diagrams as $\cat{SOP}$-morphisms preserves the semantics:
\begin{proposition}
\label{prop:sop-preserves-semantics}
$\interp{[.]^{\operatorname{sop}}} = \interp{.}$. In other words, the following diagram commutes:
$$\tikzfig{sop-interp-cd}$$
\end{proposition}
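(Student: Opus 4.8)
The plan is to exploit that every map in sight is a $\dagger$-compact PROP-functor, so that the desired equality can be reduced to checking agreement on generators. First I would observe that $\interp{[.]^{\operatorname{sop}}}$ is the composite of the $\dagger$-compact PROP-functor $[.]^{\operatorname{sop}}:\cat{ZH}\to\cat{SOP}$ with the $\dagger$-compact PROP-functor $\interp{.}:\cat{SOP}\to\cat{Qubit}$; since composing two structure-preserving functors yields another one, $\interp{[.]^{\operatorname{sop}}}$ is itself a $\dagger$-compact PROP-functor $\cat{ZH}\to\cat{Qubit}$, as is the standard interpretation $\interp{.}:\cat{ZH}\to\cat{Qubit}$, and both are the identity on objects. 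Because every morphism of $\cat{ZH}$ is obtained from the generators (the Z-spiders and the H-spiders) together with the structural maps $\sigma,\eta,\epsilon$ by means of $(.\circ.)$ and $(.\otimes.)$, and because any two $\dagger$-compact PROP-functors necessarily send $\sigma,\eta,\epsilon$ to the same images and strictly preserve both compositions, it suffices to check that the two functors coincide on each generator. No relation of $\cat{ZH}$ needs to be revisited, since both functors are already known to be well defined.

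Next I would verify the generators by direct computation from the interpretation formulas. For the Z-spider, $[Z_m^n]^{\operatorname{sop}} = \sum_y\ketbra{y,\dots,y}{y,\dots,y}$, whose $\cat{SOP}$-interpretation sums the single variable $y$ over $\{0,1\}$, giving $\ketbra{0^m}{0^n}+\ketbra{1^m}{1^n} = \interp{Z_m^n}$. For the phase H-spider, the $\cat{SOP}$-term $\sum e^{2i\pi\frac{\alpha}{2\pi}x_1\cdots x_n y_1\cdots y_m}\ketbra{\vec y}{\vec x}$ interprets to $\sum_{\vec x,\vec y\in\{0,1\}}e^{i\alpha\,x_1\cdots x_n y_1\cdots y_m}\ketbra{\vec y}{\vec x}$; since the monomial $x_1\cdots x_n y_1\cdots y_m$ equals $1$ exactly when all legs carry $1$ and $0$ otherwise, this coincides with $\sum_{\vec x,\vec y}r^{\,y_1\cdots y_m x_1\cdots x_n}\ketbra{\vec y}{\vec x}$ for $r=e^{i\alpha}$, i.e.\ with $\interp{H_m^n(r)}$ when $\abs{r}=1$. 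The phase-$0$ H-spider (the case $r=1$) and the real H-scalar $s\ketbra{}{}\mapsto s$ are the obvious specializations of these formulas.

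Finally I would treat an arbitrary H-spider $H_m^n(r)$ with $\abs{r}\notin\{0,1\}$. Here $[H_m^n(r)]^{\operatorname{sop}}$ is \emph{defined}, through Lemma~\ref{lem:H-spider-decomp}, as $[D]^{\operatorname{sop}}$ where $D$ is a diagram built only from Z-spiders, phase/zero H-spiders and H-scalars. Applying functoriality of $\interp{[.]^{\operatorname{sop}}}$ together with the previous paragraph yields $\interp{[H_m^n(r)]^{\operatorname{sop}}} = \interp{[D]^{\operatorname{sop}}} = \interp{D}$, and Lemma~\ref{lem:H-spider-decomp} states precisely that $\interp{D} = \interp{H_m^n(r)}$, which closes this last case.

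The hard part will not be the per-generator arithmetic, which is routine, but rather making the reduction itself airtight: I would spell out explicitly why agreement on generators propagates to all of $\cat{ZH}$, namely that $\sigma,\eta,\epsilon$ have identical images under any $\dagger$-compact PROP-functor and that both functors strictly preserve $\circ$ and $\otimes$, so as to avoid invoking a universal property that would force us to recheck the equational theory of $\cat{ZH}$. A secondary subtlety is that the decomposition of Lemma~\ref{lem:H-spider-decomp} may carry a complex global scalar; this must be recognized as the product of a real H-scalar with a $0$-legged phase H-spider, so that it too falls among the already-verified generators.
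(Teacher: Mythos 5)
Your reduction-to-generators strategy is exactly what the paper's ``straightforward verification'' amounts to: $\interp{[.]^{\operatorname{sop}}}$ and $\interp{.}$ are both $\dagger$-compact PROP-functors out of $\cat{ZH}$, identity on objects, so it suffices to compare them on the generating spiders; and your computations for the Z-spider, for the phase H-spiders (recovering $r=e^{i\alpha}$), for the real scalars, and the dispatch of $\abs{r}\notin\{0,1\}$ through Lemma~\ref{lem:H-spider-decomp} are all correct, as is your remark that a complex scalar arising in the decomposition must be split into a real H-scalar times a zero-legged phase H-box.

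However, there is a genuine gap: the H-box with parameter $r=0$ is never verified. You identify the generator ``H-spider-0'' with the phase-$0$ spider, i.e.\ the case $r=1$; but that case is already subsumed by your phase computation, whereas the paper's base cases are meant to be the phases \emph{and} the $0$-labelled box --- this is precisely why Lemma~\ref{lem:H-spider-decomp} only treats $\abs{r}\notin\{0,1\}$, the cases $\abs{r}\in\{0,1\}$ being those covered by the direct definition of $[.]^{\operatorname{sop}}$. Your case analysis thus covers $\{e^{i\alpha}\}\cup\{r\in\mathbb C : \abs{r}\notin\{0,1\}\}=\mathbb C\setminus\{0\}$, and none of your arguments can absorb the remaining point: $0$ is not of the form $e^{i\alpha}$, and Lemma~\ref{lem:H-spider-decomp} explicitly excludes it. Nor is this case a harmless specialization of the phase formula: $\interp{H_m^n(0)}$ is the matrix with every entry equal to $1$ except the $(1\cdots1,1\cdots1)$ entry, which is $0$, and producing that single zero entry in a sum-over-paths term requires an extra summed variable --- for instance $\frac12\sum e^{2i\pi\frac{y_0x_1\cdots x_ny_1\cdots y_m}{2}}\ketbra{y_1,\ldots,y_m}{x_1,\ldots,x_n}$, using $\frac12\left(1+(-1)^{c}\right)=1-c$ for $c\in\{0,1\}$ --- so this generator needs its own short but separate check before the proposition is established on all of $\cat{ZH}$.
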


\begin{proof}
This is a straightforward verification.
\end{proof}

\subsection{From $\cat{SOP}$ to $\cat{ZH}$}

As we explained previously, there exists a functor from $\cat{Qubit}$ to $\cat{ZH}$. Hence, we have the following (non commutative) diagram:
$$\tikzfig{sop-interp-ncd}$$
We could hence define the interpretation from $\cat{SOP}$ to $\cat{ZH}$ as $\mathcal N(\interp{.})$. This would preserve the semantics, as $\mathcal N$ does. However, this would yield in general a diagram of exponential size in the size of the $\cat{SOP}$ morphism. Instead, we define in the following an interpretation of $\cat{SOP}$ morphisms as $\cat{ZH}$-diagrams of roughly the same size.

We define $\left[.\right]^{\operatorname{ZH}}:\cat{SOP}\to \cat{ZH}$ on arbitrary $\cat{SOP}$ morphisms as:
$$\left[s \sum_{\vec y} e^{2i\pi P}\ketbra{O_1,\ldots,O_m}{I_1,\ldots,I_n}\right]^{\operatorname{ZH}}:=\tikzfig{ZH-NF}$$
where the row of Z-spiders represents the variables $y_1,\ldots,y_k$.

The inputs of $O_i$ are linked to $y_1,\ldots,y_k$. The nodes $O_i$ can be inductively defined as:
\begin{center}
\tikzfig{Q_i-0}\hspace*{3em}\tikzfig{Q_i-1}\hspace*{3em}\tikzfig{Q_i-y}\\[1em]
\tikzfig{Q_i-xor}\hspace*{3em}\tikzfig{Q_i-and}
\end{center}

The nodes $I_i$ are defined similarly, but upside-down.

The node $P$ can be inductively defined as:\\
\tikzfig{P-plus}\hfill\tikzfig{P-mon}\hfill~

The obtained diagram may be simplified using the simple ZH-rules:\\
\tikzfig{Z-spider-rule}\hfill\tikzfig{H-spider-rule}\hfill\tikzfig{ZH-scalar-mult}\hfill~\\\tikzfig{X-spider-rule-1}\hfill\tikzfig{X-spider-rule-2}\hfill\tikzfig{X-spider-rule-3}\hfill~

For instance, the polynomial $1\oplus x_1 \oplus x_1y_1y_2$ in a diagram that contains variables $x_1,x_2,y_1,y_2$ will be represented after simplification as: \tikzfig{polynomial-example}

\begin{example}
The $\cat{SOP}$ morphism:
$$\frac{1}{2\sqrt{2}}\sum\limits_{\vec y}e^{2i\pi\left(\frac14y_0+\frac{1}{2}y_4y_0+\frac{1}{8}y_5y_0y_1+\frac{3}{4}y_1y_2y_3+\frac{1}{2}y_0y_3\right)}\ketbra{0,1{\oplus} y_0{\oplus} y_4y_2,y_5}{y_4,y_5{\oplus} y_2{\oplus} y_3}$$
is mapped to \tikzfig{SOP-to-ZH-example}
\end{example}

\begin{proposition}
\label{prop:double-interp-is-identity}
$\left[[.]^{\operatorname{ZH}}\right]^{\operatorname{sop}} \underset{\operatorname{Clif}}\sim (.)$
\end{proposition}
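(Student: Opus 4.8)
The plan is to use the fact (stated just above the proposition) that $[.]^{\operatorname{sop}}:\cat{ZH}\to\cat{SOP}$ is a $\dagger$-compact PROP-functor, so that $\left[[f]^{\operatorname{ZH}}\right]^{\operatorname{sop}}$ can be evaluated compositionally from the canonical diagram defining $[f]^{\operatorname{ZH}}$. Fixing $f = s\sum_{\vec y}e^{2i\pi P}\ketbra{O_1,\dots,O_m}{I_1,\dots,I_n}$, that diagram is assembled (by $\circ$ and $\otimes$) from: a row of $k$ Z-spiders, one per summation variable $y_j$; the output gadgets realising the $O_i$; the input gadgets realising the $I_i$; the gadget realising the phase $P$; and the scalar $s$. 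Since $[.]^{\operatorname{sop}}$ commutes with $\circ$, $\otimes$, $\sigma$, $\eta$, $\epsilon$ and $(.)^\dagger$, I would back-translate each of these pieces separately, recombine them with the SOP composition/tensor rules, and then show the result reduces to $f$ under $\overset\ast{\rewrite{\operatorname{Clif}}}$ (which implies $\underset{\operatorname{Clif}}\sim$, since $\rewrite{\operatorname{Clif}}\;\subseteq\;\underset{\operatorname{Clif}}\sim$).

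The heart of the argument is a structural induction on the boolean polynomials $O_i$, $I_i$ and on $P$, following the grammars used to define the gadgets. For the $O$-gadgets I would prove the sub-claim: the back-translation of the gadget for a polynomial $Q(\vec y)$, read as a morphism from the variable wires to its single output wire, is $\overset\ast{\rewrite{\operatorname{Clif}}}$-equivalent to $\sum_{\vec y}\ketbra{Q(\vec y)}{\dots}$, i.e.\ its output wire carries exactly the value $Q$. The base cases $Q\in\{0,1,y\}$ are immediate from the definition of $[.]^{\operatorname{sop}}$ on the constant and Z-spider gadgets. The AND case $Q = pq$ uses the multiplication gadget (an H-spider), whose $[.]^{\operatorname{sop}}$ contributes the product $\widehat{p}\,\widehat{q}$, matching $\widehat{pq}$. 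The XOR case $Q = p\oplus q$ is the crucial one: the gadget is built from an X-spider, and its back-translation introduces a fresh summation variable together with a phase term of exactly the shape $\frac{y_0}{2}(y_0'+\widehat{Q'})+R$ — the left-hand pattern of rule (HH); applying (HH) forces the output wire to carry $p\oplus q$ and discards the auxiliary variables. The statement for the $I$-gadgets is the $\dagger$ (upside-down) version of this. For the phase gadget I would show that a monomial maps to its corresponding phase contribution and that the ``sum'' gadget places monomials in parallel, so that phases add in the exponent, reconstructing $P$ exactly.

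With these in hand, reassembling the whole diagram gives $\left[[f]^{\operatorname{ZH}}\right]^{\operatorname{sop}} = s'\sum_{\vec y,\vec z} e^{2i\pi\left(P + (\text{HH-removable terms})\right)}\ketbra{O_1,\dots,O_m}{I_1,\dots,I_n}$, where $\vec z$ are the auxiliary variables created by the XOR and phase gadgets (plus any copy wires left idle by the Z-spiders) and $s'$ gathers the scalar prefactors. Eliminating each $z$ by (HH), and each remaining unused variable by (Elim), restores the phase to $P$ and the input/output signatures to $\vec I,\vec O$; I would then check that the scalar factors $2$ (produced by (HH) and (Elim)) and the normalising $\tfrac{1}{2^{\,\abs{\vec I}}}$ factors from SOP composition cancel, so that $s'$ reduces back to $s$. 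This yields $\left[[f]^{\operatorname{ZH}}\right]^{\operatorname{sop}}\overset\ast{\rewrite{\operatorname{Clif}}} f$, hence the claim.

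The main obstacle I anticipate is bookkeeping rather than any conceptual difficulty: one must verify, uniformly across the induction, that every auxiliary variable is introduced in precisely the position required by (HH) or (Elim) — in particular that the XOR gadget always yields the exact syntactic form $\frac{y_0}{2}(y_0'+\widehat{Q'})+R$ with $y_0,y_0'$ genuinely fresh and $y_0'\notin\Var(Q')$ — and that the accumulated scalar prefactors from the rewrite steps and from the $2^{\abs{\vec I}}$-normalised composition combine back exactly to the original scalar $s$.
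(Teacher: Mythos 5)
Your proposal is correct and follows essentially the same route as the paper: both exploit that $[.]^{\operatorname{sop}}$ is a $\dagger$-compact PROP-functor to evaluate $\left[[f]^{\operatorname{ZH}}\right]^{\operatorname{sop}}$ piecewise, then prove by structural induction the same sub-claims (the $O_i$/$I_i$ gadgets back-translate to $\sum\ketbra{Q(\vec y)}{\vec y}$ and its dagger, with XOR resolved by (HH) and AND by direct computation, and the phase gadget reconstructs $P$ additively), before reassembling and eliminating auxiliary variables so the scalars cancel. The only cosmetic difference is that you correctly attribute the XOR reduction to (HH), whereas the paper's displayed computation labels those arrows (Elim) even though the variables being removed still occur in the phase polynomial, so (HH) is indeed the rule actually being applied.
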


\begin{proof}
In appendix at page \pageref{prf:double-interp-is-identity}.
\end{proof}

\begin{corollary}
\label{cor:ZH-functor-preserves-semantics}
$\interp{[.]^{\operatorname{ZH}}} = \interp{.}$. In other words, the following diagram commutes:
$$\tikzfig{ZH-interp-cd}$$
\end{corollary}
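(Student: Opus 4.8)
The plan is to obtain this corollary as an immediate consequence of the two results that precede it: Proposition \ref{prop:sop-preserves-semantics}, which states that the functor $[.]^{\operatorname{sop}}:\cat{ZH}\to\cat{SOP}$ preserves the semantics (that is, $\interp{[.]^{\operatorname{sop}}}=\interp{.}$), and Proposition \ref{prop:double-interp-is-identity}, which states that the round-trip $\left[[.]^{\operatorname{ZH}}\right]^{\operatorname{sop}}$ is the identity up to $\underset{\operatorname{Clif}}\sim$. The key extra ingredient is the soundness of the rewrite system, namely that $f\underset{\operatorname{Clif}}\sim g$ implies $\interp{f}=\interp{g}$; this is exactly the statement that the functor $\interp{.}:\cat{SOP}\to\cat{Qubit}$ factors through the quotient, which is already recorded in the proposition asserting that $\interp{.}$ is a $\dagger$-compact PROP-functor on $\cat{SOP}/\underset{\operatorname{Clif}}\sim$.

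Concretely, first I would fix an arbitrary $\cat{SOP}$ morphism $f$ and consider the $\cat{ZH}$-diagram $D:=[f]^{\operatorname{ZH}}$. Applying Proposition \ref{prop:sop-preserves-semantics} to $D$ gives $\interp{[D]^{\operatorname{sop}}}=\interp{D}$, that is $\interp{\left[[f]^{\operatorname{ZH}}\right]^{\operatorname{sop}}}=\interp{[f]^{\operatorname{ZH}}}$, where the left-hand interpretation is the $\cat{SOP}\to\cat{Qubit}$ functor and the right-hand one is the $\cat{ZH}\to\cat{Qubit}$ functor. Next, Proposition \ref{prop:double-interp-is-identity} gives $\left[[f]^{\operatorname{ZH}}\right]^{\operatorname{sop}}\underset{\operatorname{Clif}}\sim f$, and the soundness of $\underset{\operatorname{Clif}}\sim$ then yields $\interp{\left[[f]^{\operatorname{ZH}}\right]^{\operatorname{sop}}}=\interp{f}$ in $\cat{Qubit}$. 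Chaining these two equalities gives $\interp{[f]^{\operatorname{ZH}}}=\interp{f}$, which is the claim; since $f$ was arbitrary and both sides are functors $\cat{SOP}\to\cat{Qubit}$, the diagram commutes.

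There is essentially no computational obstacle here, as all the genuine work has been absorbed into the two cited propositions; the only point requiring care is bookkeeping of the overloaded symbol $\interp{.}$ at each step (the $\cat{ZH}$-interpretation versus the $\cat{SOP}$-interpretation), and making explicit the invariance of the $\cat{SOP}$-interpretation under $\underset{\operatorname{Clif}}\sim$, which I would cite from the $\dagger$-compact PROP-functor statement rather than re-prove.
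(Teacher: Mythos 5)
Your proposal is correct and follows essentially the same route as the paper's own proof: both combine Proposition \ref{prop:double-interp-is-identity} with Proposition \ref{prop:sop-preserves-semantics}, using the soundness of $\underset{\operatorname{Clif}}\sim$ to pass from the rewrite-equivalence to an equality of interpretations. Your write-up merely makes explicit the bookkeeping of the two overloaded $\interp{.}$ functors, which the paper leaves implicit.
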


\begin{proof}
Since $\underset{\operatorname{Clif}}\sim$ preserves the semantics, we have $\interp{.} = \interp{\left[[.]^{\operatorname{ZH}}\right]^{\operatorname{sop}}} = \interp{[.]^{\operatorname{ZH}}}$ by Propositions \ref{prop:double-interp-is-identity} and \ref{prop:sop-preserves-semantics}.
\end{proof}

\section{$\cat{SOP}$ for Clifford}
\label{sec:sop-clifford}

The \emph{Clifford} fragment of Quantum Mechanics is the one that represents $\cat{Stab}$. We would like to have a characterisation of this fragment for $\cat{SOP}$. Thankfully, this fragment is well known in $\cat{ZH}$. It can hence be inferred in $\cat{SOP}$ thanks to $[.]^{\operatorname{sop}}$.

\subsection{The Subcategories of $\cat{ZH}$ and $\cat{SOP}$ for Clifford}

\begin{definition}
$\cat{ZH}_{\operatorname{Clif}}$ is the $\dagger$-compact subPROP of $\cat{ZH}$ with the same objects, and generated by:\\
\tikzfig{Z-spider}, \tikzfig{H-2}, \tikzfig{H-phase} ($\alpha\in\{0,\frac\pi2,\pi,-\frac\pi2\}$), \tikzfig{H-scalar-1_sqrt2}, \tikzfig{H-scalar-omega}.
\end{definition}
Defining $\tikzfig{H-scalar-1_2}:=\tikzfig{H-scalar-1_sqrt2}~\tikzfig{H-scalar-1_sqrt2}$, we can still define the black spiders in this fragment.

\begin{proposition}
\label{prop:zh-clif-onto}
$\interp{.}:\cat{ZH}_{\operatorname{Clif}} \to \cat{Stab}$, the standard interpretation of $\cat{ZH}$-diagrams restricted to the Clifford fragment in $\cat{Stab}$ is onto.
\end{proposition}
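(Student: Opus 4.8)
The plan is to exploit functoriality together with the minimality built into the definition of $\cat{Stab}$. Since $\interp{.}$ is a $\dagger$-compact PROP-functor and $\cat{ZH}_{\operatorname{Clif}}$ is a $\dagger$-compact subPROP of $\cat{ZH}$, its image $\interp{\cat{ZH}_{\operatorname{Clif}}}$ is itself a $\dagger$-compact subPROP of $\cat{Qubit}$: it is closed under $\circ$, $\otimes$ and $(.)^\dagger$, and it contains $\sigma$, $\eta$ and $\epsilon$ (these being preserved by the functor). As $\cat{Stab}$ is \emph{by definition} the smallest $\dagger$-compact subPROP of $\cat{Qubit}$ containing $\ket0$, $H$, $S$ and $CZ$, it suffices to exhibit, for each of these four morphisms, a $\cat{ZH}_{\operatorname{Clif}}$-diagram whose interpretation is \emph{exactly} that morphism; surjectivity onto $\cat{Stab}$ then follows formally.

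First I would write down the four gadgets and check their interpretations, a routine but scalar-sensitive computation. The binary H-box interprets as $\sqrt2\,H$, so $H = \frac1{\sqrt2}\cdot(\text{H-2})$ using the scalar generator $\frac1{\sqrt2}$. For the state, $Z_1^0$ interprets as $\ket0+\ket1$ and $(\text{H-2})\circ Z_1^0$ interprets as $2\ket0$, whence $\ket0 = \frac12\,(\text{H-2})\circ Z_1^0$ with $\frac12 = \frac1{\sqrt2}\cdot\frac1{\sqrt2}$. For the phase gate, capping one output of $Z_2^1$ with a single-legged phase box $H_0^1(e^{i\pi/2})$ gives $\ketbra00 + i\ketbra11 = S$. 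Finally, joining the two inner legs of $Z_2^1\otimes Z_2^1$ by a leg-bent H-2, which interprets as $H_0^2(-1)$, produces $\sum_{a,b}(-1)^{ab}\ketbra{ab}{ab} = CZ$. Here I rely on the fact that bending legs of an H-box through the compact structure preserves its parameter and total arity, so both $H_0^2(-1)$ and $H_0^1(e^{i\pi/2})$ are genuinely available from the listed generators.

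To justify that the functor really lands in $\cat{Stab}$ (well-definedness), I would dually verify that each generator of $\cat{ZH}_{\operatorname{Clif}}$ interprets into $\cat{Stab}$; then closure of $\cat{Stab}$ under the $\dagger$-compact-PROP operations yields $\interp{\cat{ZH}_{\operatorname{Clif}}}\subseteq\cat{Stab}$, and together with the previous paragraph this even gives equality. Z-spiders are standard Clifford maps; the scalar $\frac1{\sqrt2} = \bra0 H\ket0$ lies in $\cat{Stab}$, hence so does $\sqrt2 = (\epsilon_1\circ\eta_1)\cdot\frac1{\sqrt2} = 2\cdot\frac1{\sqrt2}$, so $\text{H-2} = \sqrt2\,H\in\cat{Stab}$; and each single-legged phase box equals $\sqrt2\,\bra{+}$ post-composed with one of $I,S,Z,S^\dagger$ (the four Clifford phases), hence also lies in $\cat{Stab}$.

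The one genuinely non-obvious point, which I would flag as the main obstacle, is that the scalar generator $\omega = e^{i\pi/4}$ belongs to $\cat{Stab}$, since it is not clear a priori that a phase of $\pi/4$ is producible from purely Clifford data. This is settled by the closed-diagram identity $\bra{+}S\ket{+} = \tfrac{1+i}{2} = \tfrac1{\sqrt2}\,\omega$; combining it with $2 = \epsilon_1\circ\eta_1$ and $\frac1{\sqrt2} = \bra0 H\ket0$ gives $\omega = 2\cdot\frac1{\sqrt2}\cdot\frac1{\sqrt2}\,\omega\in\cat{Stab}$. Everything else is bookkeeping: the gadget computations are elementary, and the reduction to the four generators is purely formal once functoriality and the minimality of $\cat{Stab}$ are invoked.
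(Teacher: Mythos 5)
Your proposal is correct and takes essentially the same route as the paper's proof: first check that every generator of $\cat{ZH}_{\operatorname{Clif}}$ interprets into $\cat{Stab}$ (with the scalar $\omega=e^{i\pi/4}$ recovered from a closed Clifford diagram, exactly as the paper does), then exhibit $\cat{ZH}_{\operatorname{Clif}}$-preimages of $\ket0$, $H$, $S$ and $CZ$ and conclude by minimality of $\cat{Stab}$ together with functoriality. The only point where you are more terse than the paper is the assertion that arbitrary-arity Z-spiders are ``standard Clifford maps'': the paper makes this explicit by decomposing a general $Z^n_m$ through the compact structure into the spiders $Z^1_2$, $Z^2_1$, $Z^1_0$, $Z^0_1$ and verifying $\interp{Z^1_2}=(id\otimes H)\circ CZ\circ(id\otimes(H\circ\ket0))\in\cat{Stab}$, a one-line check you may want to include for completeness.
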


\begin{proof}
In appendix at page \pageref{prf:zh-clif-onto}.
\end{proof}

We hence propose a restriction of $\cat{SOP}$ for the Clifford fragment, and show afterwards that it does indeed capture exactly $\cat{Stab}$.

\begin{definition}
$\cat{SOP}_{\operatorname{Clif}}$ is the subPROP of $\cat{SOP}$ with the same objects, and whose morphisms are of the form:
\[\frac{1}{\sqrt{2}^p}\sum e^{2i\pi\left(\frac{1}{8}P^{(0)} + \frac{1}{4}P^{(1)} + \frac{1}{2}P^{(2)}\right)}\ketbra{\vec O}{\vec I}\]
where $P^{(i)}$ is a polynomial with integer coefficients of degree at most $i$ (hence $P^{(0)}$ is in fact merely an integer); and where all the $O_i$ and $I_i$ are linear.
\end{definition}

\begin{proposition}
\label{prop:sop-clif-onto}
$\interp{.}:\cat{SOP}_{\operatorname{Clif}}\to\cat{Stab}$, the restriction of the standard interpretation to $\cat{SOP}_{\operatorname{Clif}}$ is onto $\cat{Stab}$.
\end{proposition}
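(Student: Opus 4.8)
The plan is to transport the surjectivity already obtained on the ZH side (Proposition~\ref{prop:zh-clif-onto}) through the semantics-preserving functor $[.]^{\operatorname{sop}}$ (Proposition~\ref{prop:sop-preserves-semantics}). The single fact that makes this work is the claim that \emph{$[.]^{\operatorname{sop}}$ maps $\cat{ZH}_{\operatorname{Clif}}$ into $\cat{SOP}_{\operatorname{Clif}}$}. Granting this, the proposition is immediate: given $g\in\cat{Stab}$, Proposition~\ref{prop:zh-clif-onto} provides a diagram $D\in\cat{ZH}_{\operatorname{Clif}}$ with $\interp{D}=g$; then $[D]^{\operatorname{sop}}\in\cat{SOP}_{\operatorname{Clif}}$, and $\interp{[D]^{\operatorname{sop}}}=\interp{D}=g$ by Proposition~\ref{prop:sop-preserves-semantics}, so every morphism of $\cat{Stab}$ is hit.

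To prove the claim I would use that $[.]^{\operatorname{sop}}$ is a $\dagger$-compact PROP-functor and that $\cat{ZH}_{\operatorname{Clif}}$ is generated, as a $\dagger$-compact PROP, by its listed generators together with the structural morphisms $id$, $\sigma$, $\eta$ and $\epsilon$. It therefore suffices to check (a) that the image of each generator and of each structural morphism is of Clifford form, and (b) that the Clifford form is stable under $(.\circ.)$, $(.\otimes.)$ and $(.)^{\dagger}$. For (a): the Z-spider gives $\sum_y\ketbra{y,\ldots,y}{y,\ldots,y}$ (linear signatures, trivial phase, scalar $1=\tfrac1{\sqrt2^0}$); the Hadamard box ($\alpha=\pi$) maps to the phase $\tfrac12\,xy$, a degree-$2$ monomial sitting at the $\tfrac12$ level; the single-legged phase spiders with $\alpha\in\{0,\tfrac\pi2,\pi,-\tfrac\pi2\}$ map to a phase $\tfrac{\alpha}{2\pi}x$, a degree-$1$ monomial at the $\tfrac14$ or $\tfrac12$ level; the scalar $\tfrac1{\sqrt2}$ is $\tfrac1{\sqrt2^1}\ketbra{}{}$; and the scalar $\omega=e^{i\pi/4}$ maps to the phase $\tfrac18$, i.e.\ $\tfrac18P^{(0)}$ with $P^{(0)}=1$. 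Each lies in $\cat{SOP}_{\operatorname{Clif}}$, and likewise $\sigma$, $\eta$, $\epsilon$ have linear signatures, trivial phase and unit scalar. Note that Lemma~\ref{lem:H-spider-decomp} is never invoked here, since every H-box occurring has $\abs r=1$ or is treated by the real-scalar clause.

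The heart of the argument, and the step I expect to be the main obstacle, is (b): checking that composition preserves the Clifford shape. Stability under $(.)^{\dagger}$ is immediate, as it merely swaps the (linear) signatures and negates the phase, preserving every degree and the scalar; tensor product is immediate too, juxtaposing signatures and adding phases. For composition $f\circ g$, the new scalar $\tfrac1{2^{\abs{\vec I_f}}}=\tfrac1{\sqrt2^{2\abs{\vec I_f}}}$ is again a power of $\tfrac1{\sqrt2}$; the signatures of $f\circ g$ are $\vec O_f$ and $\vec I_g$, which stay linear; and the only phase introduced beyond $P_f+P_g$ is $\tfrac{\vec O_g\cdot\vec y+\vec I_f\cdot\vec y}{2}$, a sum of products of two affine forms with $0/1$ coefficients, hence exactly $\tfrac12$ times an integer-coefficient polynomial of degree at most $2$. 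It thus merges into the $\tfrac12P^{(2)}$ part without exceeding the allowed degree, while the $\tfrac18$- and $\tfrac14$-levels of $P_f+P_g$ keep their degree bounds under addition. Hence $f\circ g$ is again of Clifford form, which establishes (b) and the claim. The delicate points to verify carefully are precisely that $\vec O_g\cdot\vec y$ and $\vec I_f\cdot\vec y$ never exceed degree $2$ (relying on all signatures being linear) and that the arities of the chosen generators are small enough that their image monomials respect the degree-per-level bounds.
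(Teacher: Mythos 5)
Your argument for the onto-ness half is correct, and it takes a genuinely different route from the paper. The paper does not transport surjectivity through $[.]^{\operatorname{sop}}$: it simply exhibits explicit preimages in $\cat{SOP}_{\operatorname{Clif}}$ of the four generators of $\cat{Stab}$, namely $\interp{\frac1{\sqrt2}\sum e^{2i\pi\frac{y_1y_2}2}\ketbra{y_2}{y_1}}=H$, $\interp{\sum e^{2i\pi\frac y4}\ketbra yy}=S$, $\interp{\sum e^{2i\pi\frac{y_1y_2}{2}}\ketbra{y_1,y_2}{y_1,y_2}}=\textit{CZ}$ and $\interp{\ketbra0{}}=\ket0$, and concludes because $\cat{Stab}$ is the smallest $\dagger$-compact subcategory containing these. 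Note that both routes rest on the same background fact --- that $\cat{SOP}_{\operatorname{Clif}}$ is closed under $(.\circ.)$, $(.\otimes.)$, $(.)^\dagger$ and contains the structural morphisms --- which the paper buries in the word ``subPROP'' of the definition and you prove explicitly in your step (b); that is a point in your favour. Your verification there has one imprecision worth flagging: with the implicit immersion $\widehat{(.)}$, the term $\frac{\widehat{O_{g,i}}\,y_i}2$ \emph{does} exceed degree $2$ as soon as $O_{g,i}$ contains two or more variables (e.g.\ $\widehat{y_1\oplus y_2}\,y=(y_1+y_2-2y_1y_2)y$ has a cubic monomial); the saving grace is that all excess monomials carry even integer coefficients, so after the division by $2$ they are integer-valued and vanish modulo $1$, leaving a genuine $\frac12P^{(2)}$ contribution. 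The conclusion of (b) stands, but for that reason, not because the degree never exceeds $2$.

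However, your proposal only proves half of the proposition. The statement asserts that the restriction is a functor $\cat{SOP}_{\operatorname{Clif}}\to\cat{Stab}$ which is onto, and the paper's proof accordingly has two parts: first the containment $\interp{\cat{SOP}_{\operatorname{Clif}}}\subseteq\cat{Stab}$ (without which the typing of the statement, and the following claim that $\cat{SOP}_{\operatorname{Clif}}$ captures \emph{exactly} the Clifford fragment, fails), then the surjectivity. Your proof establishes only the second, and the containment cannot be extracted from your key claim, which points in the wrong direction: knowing $[\cat{ZH}_{\operatorname{Clif}}]^{\operatorname{sop}}\subseteq\cat{SOP}_{\operatorname{Clif}}$ says nothing about where an arbitrary Clifford SOP morphism is sent by $\interp{.}$. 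The paper handles this with the reverse translation: it checks that $[\cat{SOP}_{\operatorname{Clif}}]^{\operatorname{ZH}}\subseteq\cat{ZH}_{\operatorname{Clif}}$ (the scalar $\frac1{\sqrt2^p}$, the three phase levels $\frac18P^{(0)}$, $\frac14P^{(1)}$, $\frac12P^{(2)}$, and the linear polynomials $O_i$, $I_i$ are each built from Clifford ZH generators), and then concludes $\interp{\cat{SOP}_{\operatorname{Clif}}}=\interp{[\cat{SOP}_{\operatorname{Clif}}]^{\operatorname{ZH}}}\subseteq\interp{\cat{ZH}_{\operatorname{Clif}}}\subseteq\cat{Stab}$, using Corollary \ref{cor:ZH-functor-preserves-semantics} and the first half of Proposition \ref{prop:zh-clif-onto}. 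You would need to add this mirror argument (or a direct computation showing that Clifford-form sums interpret to stabilizer maps) to complete the proof.
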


\begin{proof}
In appendix at page \pageref{prf:sop-clif-onto}.
\end{proof}

Hence, $\cat{SOP}_{\operatorname{Clif}}$ does capture the Clifford fragment of quantum mechanics.

\subsection{A Complete Rewrite System for Clifford}

In \cite{SOP}, where the rewrite rules are introduced, the author gives a notion of completeness for Clifford \emph{unitaries}, that we will refer to in the following as ``weak completeness'':

\begin{proposition}[Weak Completeness for Clifford Unitaries]
Given two terms $t_1$, $t_2$ of $\cat{SOP}_{\operatorname{Clif}}$ such that $\interp{t_i}\circ\interp{t_i}^{\dagger}=id=\interp{t_i}^{\dagger}\circ\interp{t_i}$, we have:
$$t_1\circ t_2^\dagger \overset{*}{\rewrite{\operatorname{Clif}}} id ~~\iff~~ \interp{t_1} = \interp{t_2}$$
\end{proposition}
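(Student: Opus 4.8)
The plan is to treat the two directions separately: the forward implication is soundness of the rewrite system, while the backward implication carries the actual completeness content.

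For the ($\Rightarrow$) direction I would first record that each of the rules (Elim), (HH) and $(\omega)$ preserves the standard interpretation — this is exactly the semantics-preservation of $\underset{\operatorname{Clif}}\sim$ used in the proof of Corollary~\ref{cor:ZH-functor-preserves-semantics}, and is otherwise checked rule by rule by the same Gauss-sum computations that validated $\epsilon_1$ in Section~\ref{sec:category-sop}. Hence $t_1\circ t_2^\dagger \overset{*}{\rewrite{\operatorname{Clif}}} id$ gives $\interp{t_1}\circ\interp{t_2}^\dagger = \interp{id} = id$, using that $\interp{.}$ is a $\dagger$-compact PROP-functor. Since $\interp{t_2}$ is unitary, $\interp{t_2}^\dagger$ is its inverse, and composing on the right with $\interp{t_2}$ yields $\interp{t_1}=\interp{t_2}$.

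For the ($\Leftarrow$) direction, write $U:=\interp{t_1}=\interp{t_2}$ and $t:=t_1\circ t_2^\dagger$; then $\interp{t}=U\,U^\dagger=id$. As $\cat{SOP}_{\operatorname{Clif}}$ is closed under $\circ$ and under $(.)^\dagger$ (the composition formula only adds a phase $\frac12(\vec O_g+\vec I_f)\cdot\vec y$ of the allowed Clifford shape, and $(.)^\dagger$ merely negates $P$ and swaps $\vec I,\vec O$), the term $t$ is itself Clifford. The statement therefore reduces to the single claim: \emph{every $t\in\cat{SOP}_{\operatorname{Clif}}$ with $\interp{t}=id$ satisfies $t\overset{*}{\rewrite{\operatorname{Clif}}} id$.} I would prove this by the normal-form method: since each rule strictly lowers the number of summed variables, $\rewrite{\operatorname{Clif}}$ terminates, so $t$ rewrites to some $t^\ast$ on which no rule fires, and $\interp{t^\ast}=id$ by soundness. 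Note that global confluence is not needed here, since I will show that \emph{any} reachable normal form of an identity-representing term equals $id$.

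The heart of the proof — and the main obstacle — is showing that this normal form must be $id$ itself, and this is exactly where the unitarity hypothesis is indispensable (and why the result is only \emph{weak} completeness: for non-isometric Clifford terms the system can stall on genuinely irreducible gadgets). The key lemma I would isolate is that a normal Clifford term representing an isometry has no superfluous variable: normality forbids the (Elim) pattern (every variable occurs in $P$, $\vec O$ or $\vec I$), the (HH) pattern (no variable couples as $\frac{y_0}{2}(y_0'+\widehat Q)$ to a fresh companion), and the $(\omega)$ pattern (no variable occurs as $\frac{y_0}{4}+\frac{y_0}{2}\widehat Q$), and a dimension/rank count then forces each surviving variable to be pinned by the input forms $\vec I$, which must be $\mathbb F_2$-independent. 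Once there are no auxiliary variables, $t^\ast$ is an affine form whose data — the linear map $\vec I\mapsto\vec O$, the $\mathbb Z/8$-valued phase $P$, and the scalar — is completely determined by $\interp{t^\ast}$; matching this against $id$ (unit entries on the diagonal, zero off it) forces the permutation to be trivial, the phase to vanish and the scalar to be $1$, i.e.\ $t^\ast=id$. The delicate parts are the simultaneous $\mathbb F_2$-linear and $\mathbb Z/8$-phase bookkeeping and the verification that no irreducible pattern can carry a nontrivial isometry; the residual case analysis over which rule could still fire is routine once normality is assumed.
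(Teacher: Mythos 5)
First, a point of reference: the paper itself contains \emph{no proof} of this proposition. It is quoted from \cite{SOP} as a known result about Amy's original formalism (where a term is $\ket{\vec x}\mapsto s\sum e^{2i\pi P}\ket{\vec Q}$ with \emph{free} input variables), stated purely as motivation; the paper immediately demonstrates that $\rewrite{\operatorname{Clif}}$ is too weak for general Clifford terms and develops $\rewrite{\operatorname{Clif+}}$ instead. So there is no proof to compare yours against, and your attempt must stand on its own. Your forward (soundness) direction is fine. The genuine gap is in the backward direction, namely in the key lemma you reduce everything to: \emph{every $t\in\cat{SOP}_{\operatorname{Clif}}$ with $\interp{t}=id$ satisfies $t\overset{*}{\rewrite{\operatorname{Clif}}}id$}. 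This lemma is false. Consider
$$t^\ast := \sum_{y_1,y_2}\ketbra{y_1{\oplus} y_2,\,y_2}{y_1{\oplus} y_2,\,y_2}.$$
It lies in $\cat{SOP}_{\operatorname{Clif}}$ (zero phase, linear ket/bra entries), and $\interp{t^\ast}=id_2$ because $(y_1,y_2)\mapsto(y_1{\oplus} y_2,y_2)$ permutes $\{0,1\}^2$. Yet $t^\ast$ is irreducible: every variable occurs in the ket and the bra, so the side conditions of (Elim), (HH) and ($\omega$) all fail, and $t^\ast\neq id_2$ syntactically. This is exactly where your final step collapses: the interpretation determines an irreducible Clifford term only up to an invertible $\mathbb F_2$-linear change of the summation variables inside the kets and bras, and $\rewrite{\operatorname{Clif}}$ has no rule capable of normalizing away this $GL(n,\mathbb F_2)$ gauge freedom. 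Eliminating it is precisely the job of the rules (ket) and (bra) that the paper adds in $\rewrite{\operatorname{Clif+}}$; so ``matching against $id$ forces the permutation to be trivial'' is the step that does not follow.

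Worse, the unitarity hypothesis cannot rescue your reduction, because the defect infects the proposition itself as stated in the paper's reformulated (bra-ket) calculus. Take $t_1=t_2=t^\ast$; the hypotheses hold, and the composition formula gives
$$t_1\circ t_2^\dagger = \frac14\sum_{y_1,\ldots,y_4,z_1,z_2} e^{2i\pi\left(\frac{z_1}{2}\left(\widehat{y_1\oplus y_2}+\widehat{y_3\oplus y_4}\right)+\frac{z_2}{2}(y_2+y_4)\right)}\ketbra{y_1{\oplus} y_2,y_2}{y_3{\oplus} y_4,y_4},$$
whose phase is, modulo $1$, $\frac{z_1}{2}(y_1+y_2+y_3+y_4)+\frac{z_2}{2}(y_2+y_4)$. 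The only applicable rule is (HH), on $z_1$ or $z_2$, and one checks that every admissible choice of $(y_0',Q)$ substitutes one $y_i$ by an XOR of the others, so that after the two (HH) steps every reduction path lands on a disguised identity $\sum\ketbra{u{\oplus} v,v}{u{\oplus} v,v}$, never on $id$. Hence $t_1\circ t_2^\dagger$ does not reduce to $id$ although $\interp{t_1}=\interp{t_2}$. The weak-completeness statement is genuinely a theorem about Amy's original formalism, where composition is substitution into free input variables and the boolean simplification $x{\oplus} x=0$ undoes such disguises automatically; any proof carried out inside the paper's redefined $\cat{SOP}$ with only (Elim), (HH), ($\omega$) — as yours is — will hit this wall, which is the very reason the paper introduces (ket), (bra) and (Z) before proving its own completeness theorem (Theorem \ref{thm:clif-completeness}).
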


In practice, this is sufficient for deciding the equivalence of two Clifford quantum circuits, as they are represented as unitary morphisms of $\cat{SOP}_{\operatorname{Clif}}$. However, in our case, where we deal with more than unitaries, we cannot use this trick. Instead, we aim at a result like ``$t_1\overset{*}\longrightarrow t\overset{*}\longleftarrow t_2 \iff \interp{t_1} = \interp{t_2}$''. In other words, we want a rewrite system that will transform any term of $\cat{SOP}_{\operatorname{Clif}}$ into a unique normal form.

The rewrite system $\rewrite{\operatorname{Clif}}$ is not enough for this:
\begin{lemma}
There exist $t_1$ and $t_2$ two morphisms of $\cat{SOP}_{\operatorname{Clif}}$ such that:
\begin{itemize}
\item $\interp{t_1}=\interp{t_2}$
\item there is no $t_i'$ such that $t_i\rewrite{\operatorname{Clif}}t_i'$
\item $t_1\neq t_2$
\end{itemize}
\end{lemma}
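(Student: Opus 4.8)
The statement is a pure existence claim, so the plan is to exhibit two explicit morphisms rather than to develop any theory; the real content is not finding terms with equal semantics (easy) but guaranteeing that \emph{both} are normal forms, i.e.\ that none of (Elim), (HH), ($\omega$) can fire on either of them. The cleanest witnesses are two distinct representations of the zero scalar. I would take
\[ t_1 := \sum_{y} e^{2i\pi \frac{y}{2}}\ketbra{}{} \qquad\text{and}\qquad t_2 := \sum_{y_1,y_2} e^{2i\pi \frac{y_1+y_2}{2}}\ketbra{}{}, \]
both morphisms $0\to 0$ of $\cat{SOP}_{\operatorname{Clif}}$: their phases have the form $\tfrac12 P^{(2)}$ with $P^{(2)}$ linear, the scalar prefix is $1$, and there are no input/output signatures. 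Note incidentally that $t_2 = t_1\otimes t_1$.

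First I would compute the semantics. Since $\sum_{y\in\{0,1\}} e^{2i\pi \frac{y}{2}} = 1 + e^{i\pi} = 0$, we get $\interp{t_1}=0$, and by the same computation (or by functoriality of $\interp{.}$) $\interp{t_2} = 0\cdot 0 = 0$; hence $\interp{t_1}=\interp{t_2}$, establishing the first bullet.

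Next I would verify irreducibility by inspecting each rule in turn. (Elim) requires a summed variable absent from the phase and both signatures, but in each term every variable occurs in the phase, so it cannot fire. (HH) requires the phase to contain a mixed quadratic monomial $\tfrac12 y_0 y_0'$ in two distinct summed variables (precisely the $\tfrac{y_0}{2}y_0'$ part of the pattern $\tfrac{y_0}{2}(y_0'+\widehat Q)$), whereas the phases of $t_1$ and $t_2$ are purely \emph{linear}, so no such monomial exists and (HH) cannot fire. Finally, ($\omega$) requires, after isolating a single summed variable $y_0$, a summand $\tfrac{y_0}{4}$; but every monomial coefficient here is $\tfrac12$, so no $\tfrac14$-summand can be produced and ($\omega$) cannot fire either. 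Thus neither term admits a rewrite, giving the second bullet. Since $t_1$ has one summed variable and $t_2$ has two, they are syntactically distinct, so $t_1\neq t_2$, the third bullet.

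The only delicate step is the irreducibility check, and within it the elimination of (HH): one must be certain that no re-association of the phase yields the required cross term. This is immediate because (HH) structurally needs a degree-two monomial with coefficient $\tfrac12$, and a purely linear phase has none. The construction is moreover very robust: each $t_1^{\otimes k}$ is an irreducible $\cat{SOP}_{\operatorname{Clif}}$ representation of the zero scalar with a distinct number of variables, so $\rewrite{\operatorname{Clif}}$ admits infinitely many normal forms for a single morphism, which is exactly the failure of confluence that motivates enriching the rewrite system in the sequel.
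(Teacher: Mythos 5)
Your proof is correct, but it takes a genuinely different route from the paper's. The paper exhibits two $2\to 1$ morphisms,
$t_1 := \sum e^{2i\pi\left(\frac{y_1y_2}2+\frac{y_2y}2\right)}\ketbra y{y_1,y_2}$ and
$t_2 := \sum e^{2i\pi\frac{y_2y}2}\ketbra{y_1{\oplus}y}{y_1,y_2}$,
which denote the same \emph{nonzero} map and are related by a linear change of variables in the ket; irreducibility there is immediate for a structural reason: every summation variable occurs in the ket or bra, while each of (Elim), (HH), ($\omega$) can only eliminate a variable absent from $\vec O,\vec I$. Your witnesses are instead two scalar ($0\to 0$) representations of the zero map, whose variables are all internal, and irreducibility holds for arithmetic reasons: a purely linear phase with coefficients $\tfrac12$ can match neither the quadratic cross-term $\tfrac{y_0y_0'}2$ demanded by (HH) (the side conditions $y_0\notin\Var(R,Q)$, $y_0'\notin\Var(Q)$ indeed prevent that term from being cancelled, as you note) nor the linear coefficient $\tfrac14$ (or $\tfrac34$, after absorbing the constant term of $\widehat{Q}$ --- either way not $\tfrac12$) demanded by ($\omega$). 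Both examples are valid, and they isolate \emph{complementary} deficiencies of $\rewrite{\operatorname{Clif}}$, each repaired by a different rule of the extended system $\rewrite{\operatorname{Clif+}}$: the paper's pair is merged by the change-of-variable rules (ket)/(bra), whereas your pair is merged by rule (Z) --- indeed your $t_2$ rewrites to your $t_1$ by (Z), and your $t_1$ is exactly the normal form of the null scalar. Your closing observation that $t_1^{\otimes k}$ yields infinitely many syntactically distinct irreducible representations of one morphism is a bonus the paper's single pair does not provide, though for the lemma as stated one pair suffices.
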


\begin{proof}
An example of such behaviour can be obtained with:
$$t_1 := \sum e^{2i\pi\left(\frac{y_1y_2}2+\frac{y_2y}2\right)}\ketbra y{y_1,y_2}\qquad\qquad
t_2 := \sum e^{2i\pi\frac{y_2y}2}\ketbra{y_1{\oplus}y}{y_1,y_2}$$
\end{proof}

To palliate this problem, we propose to add three rewrite rules to the previously presented ones. These new rewrite rules are shown in Figure \ref{fig:rewrite-rules-2}.

\begin{figure}[!htb]
\[\sum_{\vec y} e^{2i\pi\left(P\right)}|O_1,\cdots,\overset{O_i}{\overbrace{y_0\oplus O_i'}},\cdots,O_m\rangle\!\!\bra{\vec I}
\rewrite{y_0\notin \Var(O_1,\ldots,O_{i-1},O_i')\\ O_i'\neq 0} \sum_{\vec y} e^{2i\pi\left(P[y_0\leftarrow \widehat{O_i}]\right)}\left(\ketbra{\vec O}{\vec I}\right)[y_0\leftarrow O_i]\tag{ket}\]
\[\sum_{\vec y} e^{2i\pi\left(P\right)}\ket{\vec O}\!\!\langle I_1,\cdots,\overset{I_i}{\overbrace{y_0\oplus I_i'}},\cdots,I_m|
\rewrite{y_0\notin \Var(\vec O,I_1,\ldots,I_{i-1},I_i')\\ I_i'\neq 0} \sum_{\vec y} e^{2i\pi\left(P[y_0\leftarrow \widehat{I_i}]\right)}\left(\ketbra{\vec O}{\vec I}\right)[y_0\leftarrow I_i]\tag{bra}\]

\[s\sum_{\vec y} e^{2i\pi\left(\frac{y_0}{2} + R\right)}\ketbra{\vec O}{\vec I}
\rewrite{R\neq 0 \text{ or } \vec O\vec I\neq \vec 0\\y_0\notin\Var(R,\vec O,\vec I)} \sum_{y_0} e^{2i\pi\left(\frac{y_0}{2}\right)}\ketbra{0,\cdots,0}{0,\cdots,0}\tag{Z}\]

\caption[]{Additional rewrite rules. Together with those of $\rewrite{\operatorname{Clif}}$, they constitute the rewrite system $\rewrite{\operatorname{Clif+}}$.}
\label{fig:rewrite-rules-2}
\end{figure}

The last rule (Z) describes what happens for a term that represents the linear map 0. Rule (bra) is simply the continuation of (ket). They explain how to operate suitable changes of variables.

\begin{proposition}
\label{prop:clif+-terminates}
The rewrite system $\rewrite{\operatorname{Clif+}}$ terminates.
\end{proposition}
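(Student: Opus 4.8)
The plan is to exhibit a termination measure: a quantity assigned to each $\cat{SOP}_{\operatorname{Clif}}$ morphism that is drawn from a well-founded order and that strictly decreases under every rule of $\rewrite{\operatorname{Clif+}}$. Since the excerpt already notes that every rule of $\rewrite{\operatorname{Clif}}$ removes at least one variable, the natural first component of the measure is the number of bound variables $|V|$ (the size of $\vec y$). The three rules (Elim), (HH) and ($\omega$) each strictly decrease this count, as does (ket) and (bra) — rule (ket), for instance, substitutes $y_0 \leftarrow O_i'$ and thereby eliminates the variable $y_0$ from the term (the condition $y_0\notin\Var(O_1,\ldots,O_{i-1},O_i')$ guarantees this substitution does not reintroduce $y_0$). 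So the only genuinely problematic rule is (Z), which keeps a single variable $y_0$ and erases all the others while resetting $\vec O,\vec I$ to zero; here $|V|$ does \emph{not} decrease in general, since the right-hand side still binds $y_0$.

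The key observation is therefore that (Z) must be handled by a \emph{separate} component of the measure, and that the resulting lexicographic order is well-founded. First I would check that (Z) applied to a term already of the form $\sum_{y_0} e^{2i\pi\frac{y_0}{2}}\ketbra{\vec 0}{\vec 0}$ is not enabled: its side condition $R\neq 0$ or $\vec O\vec I\neq\vec 0$ fails precisely on such a term, so (Z) cannot fire on its own output. Thus (Z) can be applied at most once before reaching a (Z)-irreducible form. The clean way to package this is a lexicographic measure
\[
\mu(t) := \left(\,[\,\vec O\vec I\neq\vec 0 \text{ or } R\neq 0\,],\ |V|\,\right),
\]
where the first coordinate is a Boolean flag (say $1$ if the term is ``non-trivially zero-shaped'', $0$ once it has been collapsed) ordered with $1 > 0$, and the second is the variable count. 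Rule (Z) strictly drops the first coordinate from $1$ to $0$; all the other rules leave the first coordinate non-increasing (they never turn a collapsed term back into a non-collapsed one, since once $\vec O=\vec I=\vec 0$ there is nothing for them to act on) while strictly decreasing $|V|$. Lexicographic order on $\{0,1\}\times\mathbb N$ is well-founded, so no infinite rewrite sequence exists.

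The step I expect to be the main obstacle is verifying that the non-(Z) rules genuinely do not increase the first coordinate of $\mu$, i.e.\ that they cannot transform a trivial term into a non-trivial one, \emph{and} that each strictly decreases $|V|$ even after substitutions are performed. For (ket) and (bra) this requires care: the substitution $[y_0\leftarrow O_i]$ could in principle alter $P$, $\vec O$ and $\vec I$ in complicated ways, so I would argue via the side conditions that $y_0$ disappears and no new bound variable is created, hence $|V|$ strictly decreases. Once each rule is checked against $\mu$ — routine case analysis using the variable-elimination bookkeeping already set up for $\rewrite{\operatorname{Clif}}$ — well-foundedness of the lexicographic order closes the argument.
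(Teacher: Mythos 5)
There is a genuine gap, and it comes from a misreading of rule (ket). The substitution in (ket) is $[y_0\leftarrow O_i]$ with $O_i = y_0\oplus O_i'$, i.e.\ $y_0\mapsto y_0\oplus O_i'$: this is an invertible \emph{change of variables}, not an elimination of $y_0$. Notice that the right-hand side of (ket) still sums over the full set $\vec y$, in contrast with (Elim), (HH) and ($\omega$), whose right-hand sides sum over $\vec y\setminus\{y_0\}$. Indeed $y_0$ could not be eliminated there soundly, since the term genuinely depends on it; what the rule achieves is that, because $y_0\notin\Var(O_i')$, the $i$-th ket entry becomes $O_i[y_0\leftarrow y_0\oplus O_i'] = y_0\oplus O_i'\oplus O_i' = y_0$. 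Consequently (ket) and (bra) leave $\abs{V}$ unchanged, and they clearly leave your Boolean flag unchanged as well, so your lexicographic measure $\mu$ is \emph{constant} on (ket)/(bra) steps: the argument says nothing about them, and in particular cannot exclude an infinite sequence of (ket)/(bra) rewrites whose substitutions keep modifying later ket, bra and phase entries.

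The missing ingredient (and the route the paper takes) is to refine the measure with per-coordinate monomial counts: order lexicographically the tuple $\left(\abs{\vec y},\ \operatorname{mon}(O_1),\ldots,\operatorname{mon}(O_m),\ \operatorname{mon}(I_1),\ldots,\operatorname{mon}(I_n),\ \operatorname{mon}(P)\right)$, where $\operatorname{mon}(Q)$ counts the monomials of the expanded simplified polynomial $Q$. Then (Elim), (HH), ($\omega$) strictly decrease the first entry; (ket) applied at position $i$ drops $\operatorname{mon}(O_i)$ from at least $2$ (since $O_i'\neq 0$) to exactly $1$, while the side condition $y_0\notin\Var(O_1,\ldots,O_{i-1})$ guarantees that $\abs{\vec y},\operatorname{mon}(O_1),\ldots,\operatorname{mon}(O_{i-1})$ are untouched — later entries may grow, which the lexicographic order tolerates; (bra) is symmetric; and (Z) sends any term satisfying its side condition to the minimal tuple $(1,0,\ldots,0,1)$ from a strictly larger one. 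Your treatment of (Z) via a flag is essentially sound (and matches the paper's observation that (Z) cannot fire on its own output), but without the monomial-count components the rules (ket) and (bra) are simply not covered, so termination does not follow from your measure.
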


\begin{proof}
In appendix at page \pageref{prf:clif+-terminates}.
\end{proof}

Not only does this rewrite system terminate, it is confluent in $\cat{SOP}_{\operatorname{Clif}}$ and the induced equivalence relation $\underset{\operatorname{Clif+}}\sim$ is complete for Clifford. We prove this by showing that any morphism of $\cat{SOP}_{\operatorname{Clif}}$ reduces to a normal form that is unique.

\begin{lemma}
\label{lem:NF}
Any morphism of $\cat{SOP}_{\operatorname{Clif}}$ reduces by $\rewrite{\operatorname{Clif+}}$ to a morphism of the form
$$\frac{1}{\sqrt{2}^{p}}\sum e^{2i\pi P}\ketbra{\vec O}{\vec I}$$
where:
\begin{itemize}
\item $\Var(P)\subseteq \Var(\vec O,\vec I)$ or $P=\frac{y_0}2$ where $y_0\notin \Var(\vec O,\vec I)$
\item $O_i = \begin{cases} y_k\\\quad\text{or}\\c\oplus\bigoplus\limits_{y\in \Var(O_1,...,O_{i-1})} c_yy \qquad\text{ where }c,c_y\in\{0,1\}\end{cases}$
\item $I_i = \begin{cases} y_k\\\quad\text{or}\\c\oplus\bigoplus\limits_{y\in \Var(\vec O, I_1,...,I_{i-1})} c_yy  \qquad\text{ where }c,c_y\in\{0,1\}\end{cases}$
\end{itemize}
\end{lemma}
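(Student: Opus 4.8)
The plan is to exploit termination: since $\rewrite{\operatorname{Clif+}}$ terminates (Proposition~\ref{prop:clif+-terminates}), every morphism of $\cat{SOP}_{\operatorname{Clif}}$ reduces to some $\rewrite{\operatorname{Clif+}}$-irreducible term, so it suffices to show that irreducibility forces the three displayed conditions. I would prove the contrapositive for each condition separately: if a term of $\cat{SOP}_{\operatorname{Clif}}$ violates one of them, then some rule of $\rewrite{\operatorname{Clif+}}$ applies. One first checks, routinely, that every rule sends $\cat{SOP}_{\operatorname{Clif}}$ to $\cat{SOP}_{\operatorname{Clif}}$ (the degree bounds on the $P^{(i)}$ and the linearity of the signatures are preserved), so the normal form stays in the fragment.

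The two signature conditions are handled directly by (ket) and (bra). Since in $\cat{SOP}_{\operatorname{Clif}}$ each $O_i$ is linear, if $O_i$ is neither a bare variable $y_k$ nor affine in $\Var(O_1,\dots,O_{i-1})$, then it contains a variable $y_0\notin\Var(O_1,\dots,O_{i-1})$ and, isolating it, we may write $O_i = y_0\oplus O_i'$ with $O_i'\neq 0$ and $y_0\notin\Var(O_1,\dots,O_{i-1},O_i')$; this is exactly the firing condition of (ket), the involutivity of $\oplus$ ensuring that the substitution $y_0\leftarrow O_i$ turns $O_i$ into $y_0$. The rule (bra) does the same for the $I_i$, with ``earlier'' now meaning all of $\vec O$ together with $I_1,\dots,I_{i-1}$. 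Hence an irreducible term satisfies the second and third conditions.

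For the phase condition, suppose some $y_0\in\Var(P)$ with $y_0\notin\Var(\vec O,\vec I)$. Using idempotency $y_0^2=y_0$ and the fact that in the Clifford form $P^{(0)}$ is a constant, the contribution of $y_0$ to $P$ isolates as $\frac{a}{4}y_0 + \frac{y_0}{2}S + R$, where $a\in\{0,1,2,3\}$, $S$ is an integer linear form in the variables $\neq y_0$, and $y_0\notin\Var(R)$. The key technical observation is the congruence $\frac{y_0}{2}L \equiv \frac{y_0}{2}\widehat{Q}\pmod 1$ for any integer linear form $L$, where $Q$ is the XOR of the variables occurring with odd coefficient in $L$ (plus the constant $1$ when present): the discrepancy between $L$ and $\widehat{Q}$ consists of higher-degree monomials with even coefficients, which vanish once multiplied by $\frac{y_0}{2}$ and reduced modulo $1$. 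With this I split on $a$: if $a$ is odd, rewriting $\frac{3y_0}{4}=\frac{y_0}{4}+\frac{y_0}{2}$ when needed, the contribution takes the shape $\frac{y_0}{4}+\frac{y_0}{2}\widehat{Q}+R$ with $y_0\notin\Var(Q,R,\vec O,\vec I)$, so ($\omega$) fires; if $a$ is even and $S\neq 0$, pulling out one variable $y_0'$ of $S$ puts the contribution in the form $\frac{y_0}{2}(y_0'+\widehat{Q})+R$ with $y_0'\notin\Var(Q)$, so (HH) fires; if $a$ is even and $S=0$, then either $a=0$, contradicting $y_0\in\Var(P)$, or $a=2$ and $P=\frac{y_0}{2}+R$, so (Z) fires unless $R=0$ and $\vec O\vec I=\vec 0$, which is precisely the allowed degenerate case $P=\frac{y_0}{2}$ (after (Elim) has removed the now-unused remaining variables). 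This exhausts the cases and yields the first condition.

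The main obstacle is this phase analysis: keeping the bookkeeping of the $\frac18/\frac14/\frac12$ denominators correct, and in particular justifying the modulo-$1$ identity $\frac{y_0}{2}L\equiv\frac{y_0}{2}\widehat{Q}$ together with the $a=3$ reshaping, so that the seemingly rigid left-hand sides of (HH) and ($\omega$) can always be matched to the actual form of $y_0$'s contribution. The remaining work — the fragment-preservation check for each rule and the elimination of truly unused variables by (Elim) — is administrative by comparison.
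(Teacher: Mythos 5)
Your proposal is correct and follows essentially the same route as the paper's proof: termination plus irreducibility, with (ket)/(bra) enforcing the signature conditions, and a case analysis on an internal variable's contribution to the Clifford-form phase polynomial showing that one of ($\omega$), (HH) or (Z) must fire. You are in fact more explicit than the paper, which leaves implicit the congruence $\frac{y_0}{2}L\equiv\frac{y_0}{2}\widehat{Q}\pmod 1$, the coefficient-$\frac34$ reshaping, and the check that each rule preserves $\cat{SOP}_{\operatorname{Clif}}$.
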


\begin{proof}
In appendix at page \pageref{prf:NF}.
\end{proof}

To start with, we deal with the case where the term represents the null map.

\begin{proposition}
\label{prop:zero-morphism}
Let $t$ be a morphism of $\cat{SOP}_{\operatorname{Clif}}$ such that $\interp{t}=0$. Then:
$$t\overset\ast{\rewrite{\operatorname{Clif+}}} \sum\limits_{y_0} e^{2i\pi \frac{y_0}2}\ketbra{0,...,0}{0,...,0}$$
\end{proposition}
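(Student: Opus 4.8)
The plan is to reduce $t$ to a normal form and then read off that, when the denoted map is $0$, this normal form can only be the claimed one. First I would record that every rule of $\rewrite{\operatorname{Clif+}}$ preserves $\interp{.}$; the only genuinely new point is rule (Z), whose left-hand side always denotes $0$, since summing $e^{2i\pi \frac{y_0}{2}}$ over $y_0\in\{0,1\}$ with $y_0\notin\Var(R,\vec O,\vec I)$ contributes a factor $1+e^{i\pi}=0$, and whose right-hand side denotes $0$ as well. By Lemma \ref{lem:NF}, $t$ reduces to a normal form $t'=\frac{1}{\sqrt2^{p}}\sum e^{2i\pi P}\ketbra{\vec O}{\vec I}$, and soundness gives $\interp{t'}=0$. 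It then remains to analyse the two shapes allowed by the lemma.

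The core of the argument is to rule out the case $\Var(P)\subseteq\Var(\vec O,\vec I)$. Here every summed variable lies in $\Var(\vec O,\vec I)$ (none is free, by (Elim)-normality), and the inductive shape of the $O_i$ and $I_i$ forces the map $\vec y\mapsto(\vec O(\vec y),\vec I(\vec y))$ to be injective on $\{0,1\}^{k}$: reading through $O_1,\dots,O_m,I_1,\dots,I_n$, the first occurrence of any variable cannot sit in an affine-combination entry (those reference only strictly earlier variables), hence occurs as a bare pivot $y_k$, and the pivot entries let one recover $\vec y$ from its image. Injectivity means each nonzero matrix coefficient of $\interp{t'}$ receives contributions from a single $\vec y$, so it equals $\frac{1}{\sqrt2^{p}}e^{2i\pi P(\vec y)}$, of modulus $\frac{1}{\sqrt2^{p}}\neq0$; since the sum over $\{0,1\}^{k}$ is nonempty, $\interp{t'}\neq0$, a contradiction. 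Therefore $t'$ has the second shape, $P=\frac{y_0}{2}$ with $y_0\notin\Var(\vec O,\vec I)$.

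In this remaining case I would distinguish whether $\vec O\vec I=\vec 0$. If some $O_i$ or $I_i$ is nonzero, then $t'$ matches the left-hand side of (Z) with $R=0$ and $\vec O\vec I\neq\vec 0$, so a single application of (Z) rewrites $t'$ to $\sum_{y_0}e^{2i\pi \frac{y_0}{2}}\ketbra{0,\dots,0}{0,\dots,0}$, as desired. If instead $\vec O\vec I=\vec0$, then $\Var(\vec O,\vec I)=\varnothing$, so (Elim)-normality forces $y_0$ to be the only summed variable and $t'=\frac{1}{\sqrt2^{p}}\sum_{y_0}e^{2i\pi \frac{y_0}{2}}\ketbra{0,\dots,0}{0,\dots,0}$, which is the claimed term once the leading scalar is $1$.

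The step I expect to be the main obstacle is twofold. The heart is the injectivity (rank) argument above: making precise that the normal-form shape of $\vec O,\vec I$ yields a recoverable system of pivots, and hence a unimodular leading coefficient, is exactly what separates the generic nonzero normal forms from the zero one. The second delicate point is the bookkeeping of the scalar $\frac{1}{\sqrt2^{p}}$ in the fully-degenerate subcase $\vec O\vec I=\vec0$: one must ensure the canonical zero is reached with coefficient exactly $1$. This is precisely the role of (Z) resetting the scalar, and care is needed to argue that no collapsed term carrying a spurious $\frac{1}{\sqrt2^{p}}$ with $p\neq0$ survives as a competing normal form.
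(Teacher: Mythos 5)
Your core argument is correct, and at the decisive step it takes a genuinely different route from the paper's. To show that a terminal form with $\Var(P)\subseteq\Var(\vec O,\vec I)$ has nonzero interpretation, you use a single global argument: every variable's first occurrence in the reading order $O_1,\dots,O_m,I_1,\dots,I_n$ must be a bare pivot (affine entries only reference strictly earlier variables), so $\vec y\mapsto(\vec O(\vec y),\vec I(\vec y))$ is injective, no cancellation can occur, and any single coefficient such as the one at $(\vec O(\vec 0),\vec I(\vec 0))$ has modulus $\frac1{\sqrt2^{p}}\neq0$. The paper instead peels the ket and bra one entry at a time: it builds witnesses $\vec x_1,\vec x_2$ coordinate by coordinate (plugging $\bra{c}$ on constant entries, $\bra{0}$ together with the substitution $[y_k\leftarrow 0]$ on pivot entries), checks that each intermediate term $t^{(i)}$ remains irreducible with $\Var(P^{(i)})\subseteq\Var(\vec O^{(i)},\vec I^{(i)})$, and ends with the manifestly nonzero scalar $\bra{\vec x_1}\interp{t'}\ket{\vec x_2}=\frac1{\sqrt2^{p}}e^{2i\pi c}$. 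Both arguments exploit exactly the same triangular pivot structure guaranteed by Lemma \ref{lem:NF}; yours is shorter and avoids the invariant bookkeeping under substitution, while the paper's peeling machinery is not wasted effort — essentially the same construction is reused to build the variable matching $\delta$ and the operators $\operatorname{op}_1,\operatorname{op}_2$ in the proof of Theorem \ref{thm:clif-completeness}.

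On the point you flag as delicate at the end: the concern is real, but you should know that the paper's own proof does not handle it either. A term such as $\frac1{\sqrt2}\sum_{y_0}e^{2i\pi\frac{y_0}{2}}\ketbra{0}{0}$ lies in $\cat{SOP}_{\operatorname{Clif}}$, denotes $0$, and is terminal — rule (Z) cannot fire on it because its side condition demands $R\neq 0$ or $\vec O\vec I\neq\vec 0$, and neither holds — yet it is not the claimed normal form, since its scalar is not $1$. The paper's opening claim, that an irreducible term different from $\sum_{y_0}e^{2i\pi\frac{y_0}{2}}\ketbra{0,\dots,0}{0,\dots,0}$ must satisfy $\Var(P)\subseteq\Var(\vec O,\vec I)$, silently fails on exactly this kind of term. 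So this is a defect of the side condition of (Z) (which should also fire when the leading scalar differs from $1$), or equivalently the statement should be read up to the global scalar; it is not a weakness of your strategy relative to the paper's, and your proposal is the more candid of the two in isolating it.
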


\begin{proof}
In appendix at page \pageref{prf:zero-morphism}.
\end{proof}

\begin{corollary}
\label{cor:non-zero-morphism}
If a morphism $t=\frac{1}{\sqrt{2}^{p}}\sum e^{2i\pi P}\ketbra{\vec O}{\vec I}$ of $\cat{SOP}_{\operatorname{Clif}}$ is irreducible such that $\Var(P)\subseteq \Var(\vec O,\vec I)$, then $\interp{t}\neq 0$.
\end{corollary}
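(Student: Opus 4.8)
The plan is to prove the contrapositive directly. Proposition \ref{prop:zero-morphism} tells us that any morphism $t$ with $\interp{t}=0$ rewrites to the canonical zero term $\sum_{y_0} e^{2i\pi\frac{y_0}2}\ketbra{0,\ldots,0}{0,\ldots,0}$. So I would assume $\interp{t}=0$ and show that $t$ cannot simultaneously be irreducible and satisfy $\Var(P)\subseteq \Var(\vec O,\vec I)$; the corollary then follows by contraposition.

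First I would invoke Proposition \ref{prop:zero-morphism}: since $\interp{t}=0$, we have $t \overset\ast{\rewrite{\operatorname{Clif+}}} \sum_{y_0} e^{2i\pi\frac{y_0}2}\ketbra{0,\ldots,0}{0,\ldots,0}$. The key observation is that the target of this reduction has a phase polynomial $P=\frac{y_0}2$ with $y_0\notin \Var(\vec O,\vec I)$, since $\vec O = \vec I = (0,\ldots,0)$ contain no variables at all. This is precisely the \emph{second} alternative of the first bullet in the normal-form Lemma \ref{lem:NF}, the degenerate branch that is explicitly excluded by the hypothesis $\Var(P)\subseteq\Var(\vec O,\vec I)$ of the corollary.

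The argument now splits on whether $t$ already equals this canonical form. If the reduction $t\overset\ast{\rewrite{\operatorname{Clif+}}}\cdots$ takes at least one step, then $t$ is reducible, contradicting the assumption that $t$ is irreducible. If instead the reduction takes zero steps, then $t$ \emph{is} the canonical term $\sum_{y_0} e^{2i\pi\frac{y_0}2}\ketbra{0,\ldots,0}{0,\ldots,0}$ itself, whose phase polynomial $\frac{y_0}2$ has $\Var(P)=\{y_0\}$ while $\Var(\vec O,\vec I)=\varnothing$, so $\Var(P)\not\subseteq\Var(\vec O,\vec I)$, again contradicting the hypothesis. In either case the assumptions of the corollary (irreducible, $\Var(P)\subseteq\Var(\vec O,\vec I)$, $\interp{t}=0$) are jointly inconsistent, so a morphism meeting the first two conditions must have $\interp{t}\neq 0$.

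The main subtlety to verify is that the reduction of Proposition \ref{prop:zero-morphism} cannot terminate at a term different from the displayed canonical one yet still representing $0$ with $\Var(P)\subseteq\Var(\vec O,\vec I)$: one must be sure the proposition's conclusion pins down the normal form \emph{exactly}, not merely up to the equivalence $\underset{\operatorname{Clif+}}\sim$. Since Proposition \ref{prop:zero-morphism} states the reduction reaches that specific term, this is already guaranteed, and no separate confluence argument is needed here; the only care required is the bookkeeping distinction between a zero-step and a positive-step reduction above.
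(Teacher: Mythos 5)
Your proposal is correct and is exactly the argument the paper intends: the corollary is the contrapositive of Proposition \ref{prop:zero-morphism}, since an irreducible $t$ with $\interp{t}=0$ would have to \emph{be} the canonical zero term $\sum_{y_0} e^{2i\pi\frac{y_0}2}\ketbra{0,\cdots,0}{0,\cdots,0}$, whose phase polynomial violates $\Var(P)\subseteq\Var(\vec O,\vec I)$. Your case split between a zero-step and a positive-step reduction, and your remark that the proposition pins down the exact normal form rather than an equivalence class, are precisely the bookkeeping that makes this immediate.
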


Before moving on to the completeness by normal forms theorem, we need a result for the uniqueness of the phase polynomial:
\begin{lemma}
\label{lem:polynomial-uniqueness}
Let $P_1$ and $P_2$ be two polynomials of $\mathbb R[X_1,...,X_k]/(1,X^2-X)$, such that:
$$\forall \vec x\in\{0,1\}^k,~P_1(\vec x)=P_2(\vec x)$$
Then, $P_1 = P_2$.
\end{lemma}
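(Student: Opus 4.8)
The plan is to reduce the claim to the following standard fact: a multilinear polynomial with coefficients in $\mathbb{R}/\mathbb{Z}$ is entirely determined by the values it takes on the Boolean cube $\{0,1\}^k$. First I would note that, working in $\mathbb{R}[X_1,\dots,X_k]/(1,X_i^2-X_i)$, every element has a canonical representative that is \emph{multilinear} (each $X_i$ occurs with degree at most $1$, since $X_i^2 = X_i$) and whose coefficients are taken modulo $1$, i.e.\ live in the abelian group $\mathbb{R}/\mathbb{Z}$. So it suffices to show that the difference $Q := P_1 - P_2$, a multilinear polynomial over $\mathbb{R}/\mathbb{Z}$ satisfying $Q(\vec x) = 0$ for all $\vec x \in \{0,1\}^k$, is the zero polynomial, i.e.\ that all of its coefficients vanish.

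I would prove this by induction on $k$. For $k = 0$, $Q$ is a constant, equal to the single value it takes, namely $0$. For the inductive step, write $Q = X_k\, A(X_1,\dots,X_{k-1}) + B(X_1,\dots,X_{k-1})$, where $A$ and $B$ collect the monomials containing $X_k$ and those not containing it respectively; this decomposition is available precisely because $Q$ is multilinear in $X_k$. Evaluating at $X_k = 0$ gives $B$, which then vanishes on all of $\{0,1\}^{k-1}$, so $B = 0$ by the induction hypothesis; evaluating at $X_k = 1$ gives $A + B = A$, which likewise vanishes on the cube, so $A = 0$. Hence $Q = 0$, i.e.\ $P_1 = P_2$ in the quotient ring. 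Equivalently, one could extract each coefficient explicitly by finite differences: the coefficient of $\prod_{i\in S}X_i$ equals $\sum_{T\subseteq S}(-1)^{|S\setminus T|}\, Q(\mathbf{1}_T)$, which is $0$ whenever $Q$ vanishes everywhere.

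The only point requiring care --- and the reason the statement is worth isolating --- is the coefficient ring. Because we work modulo $1$, coefficients live in $\mathbb{R}/\mathbb{Z}$, which is not a field, so one cannot invoke generic facts about polynomials over a field or a root-counting argument. The induction above is robust to this: it uses only that $\mathbb{R}/\mathbb{Z}$ is an abelian group and that evaluation at $X_k\in\{0,1\}$ is additive, never requiring division or cancellation of nonzero scalars. Thus the argument goes through verbatim for coefficients in any abelian group, which is exactly the generality the phase polynomials of $\cat{SOP}$ require, and it is precisely this uniqueness that will later pin down the phase polynomial of a normal form from its semantics.
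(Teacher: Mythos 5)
Your proof is correct and follows essentially the same route as the paper's: induction on $k$, writing the polynomial as $B + X_k A$ with $A,B$ in one fewer variable, and evaluating at $X_k=0$ and $X_k=1$ to invoke the induction hypothesis (the paper does this with $P_1,P_2$ separately rather than with the difference $Q=P_1-P_2$, a cosmetic distinction). Your added remarks — the finite-difference coefficient formula and the observation that the argument only needs the coefficients to form an abelian group such as $\mathbb{R}/\mathbb{Z}$ — are sound refinements but not a different proof.
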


\begin{proof}
In appendix at page \pageref{prf:polynomial-uniqueness}.
\end{proof}

\begin{theorem}
\label{thm:clif-completeness}
Let $t_1$, and $t_2$ be two morphisms of $\cat{SOP}_{\operatorname{Clif}}$ such that $\interp{t_1}=\interp{t_2}$. Then, there exists $t$ in $\cat{SOP}_{\operatorname{Clif}}$ such that $t_1\overset*{\rewrite{\operatorname{Clif+}}}t\overset*{\underset{\operatorname{Clif+}}\longleftarrow}t_2$, up to $\alpha$-conversion.
\end{theorem}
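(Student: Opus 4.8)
The plan is to reduce both $t_1$ and $t_2$ to normal forms via Lemma~\ref{lem:NF}, and then to argue that a normal form is entirely determined by its interpretation, up to renaming of variables. Concretely, write $t_1\overset*{\rewrite{\operatorname{Clif+}}}t_1'$ and $t_2\overset*{\rewrite{\operatorname{Clif+}}}t_2'$ with $t_1',t_2'$ in the shape given by Lemma~\ref{lem:NF}. Since every rule of $\rewrite{\operatorname{Clif+}}$ preserves $\interp{.}$, we have $\interp{t_1'}=\interp{t_1}=\interp{t_2}=\interp{t_2'}$. It then suffices to establish that $t_1'=t_2'$ up to $\alpha$-conversion and to take $t:=t_1'$, since $t_1\overset*{\rewrite{\operatorname{Clif+}}}t$ and $t_2\overset*{\rewrite{\operatorname{Clif+}}}t$.

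I would first dispatch the degenerate case. By the first item of Lemma~\ref{lem:NF}, in each normal form either $\Var(P)\subseteq\Var(\vec O,\vec I)$, or $P=\frac{y_0}{2}$ with $y_0\notin\Var(\vec O,\vec I)$; in the latter case, summing over $y_0\in\{0,1\}$ gives $e^{0}+e^{i\pi}=0$, so $\interp{t_i'}=0$. If $\interp{t_1'}=\interp{t_2'}=0$, then Proposition~\ref{prop:zero-morphism} already furnishes a common reduct, namely $\sum_{y_0}e^{2i\pi\frac{y_0}2}\ketbra{0,\dots,0}{0,\dots,0}$, and we are done. Otherwise, Corollary~\ref{cor:non-zero-morphism} forces both $t_1'$ and $t_2'$ into the case $\Var(P)\subseteq\Var(\vec O,\vec I)$, which I assume henceforth.

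The heart of the argument is to read the normal form off its interpretation. Since no \callrule{rules}{Elim} applies and $\Var(P)\subseteq\Var(\vec O,\vec I)$, every variable occurs in the signature; moreover, by the echelon constraints of Lemma~\ref{lem:NF}, scanning the entries in the order $O_1,\dots,O_m,I_1,\dots,I_n$, each variable \emph{first} appears as a bare entry (a non-bare entry is affine in strictly earlier variables only). Call these the pivot entries. Consequently the affine map $\vec y\mapsto(\vec I(\vec y),\vec O(\vec y))$ is injective (each variable equals a fixed signature coordinate), and its image $A\subseteq\mathbb F_2^{\,n+m}$ is an affine subspace of size $2^k$ equal to the support $\{(\vec a,\vec b):\bra{\vec b}\interp{t'}\ket{\vec a}\neq 0\}$. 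Relative to the fixed coordinate ordering, $A$ canonically determines its pivot set and the affine expressions of the non-pivot coordinates (a reduced echelon form over $\mathbb F_2$), and the normal form realises exactly this: pivots as fresh variables, non-pivots as affine combinations of earlier pivots. Hence $\vec O,\vec I$ agree in $t_1'$ and $t_2'$ after renaming pivots, i.e.\ up to $\alpha$-conversion. For the phase and scalar, note that by injectivity each nonzero entry of $\interp{t'}$ is a single term of modulus $\frac{1}{\sqrt2^{\,p}}$ (as $\abs{e^{2i\pi P}}=1$), which recovers $p$; dividing it out, the value at the point of $A$ corresponding to $\vec y$ is $e^{2i\pi P(\vec y)}$, so $P(\vec y)\bmod 1$ is fixed for all $\vec y\in\{0,1\}^k$, whence $P$ is uniquely determined by Lemma~\ref{lem:polynomial-uniqueness}. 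Identifying pivots as above makes $P_1=P_2$, and therefore $t_1'=t_2'$ up to $\alpha$-conversion.

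The main obstacle is the structural claim of the third paragraph: that the triangular/echelon data of the normal form is faithfully and canonically encoded by the affine support $A$. This requires the injectivity statement (every variable is a pivot, so no summation or cancellation collapses the support) and the uniqueness of the reduced echelon form with respect to the coordinate order $O_1,\dots,O_m,I_1,\dots,I_n$. Care is also needed so that the bijection identifying the pivots of $t_1'$ with those of $t_2'$ is the \emph{same} one used for the signatures and for the phase polynomial, so that the resulting $\alpha$-conversion is globally consistent; the rest is bookkeeping once Lemmas~\ref{lem:NF} and~\ref{lem:polynomial-uniqueness}, Proposition~\ref{prop:zero-morphism}, and Corollary~\ref{cor:non-zero-morphism} are in hand.
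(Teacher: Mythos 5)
Your proposal is correct, and it reaches the conclusion by a genuinely different route for the central step. Both arguments share the same skeleton: reduce $t_1,t_2$ to irreducible forms shaped by Lemma~\ref{lem:NF}, dispatch the null case with Proposition~\ref{prop:zero-morphism}, and conclude with Lemma~\ref{lem:polynomial-uniqueness} once pointwise equality of the phase polynomials is established. The difference is how the signatures are matched and the $\alpha$-conversion is built. The paper proceeds by induction on the signature, peeling off one ket entry (then one bra entry) at a time: at each position it excludes the mismatch ``constant versus fresh variable'' by evaluating against $\bra{c\oplus 1}$ and invoking Corollary~\ref{cor:non-zero-morphism} for a contradiction, and it accumulates along the way both the variable matching $\delta$ and auxiliary operators $\operatorname{op}_1,\operatorname{op}_2$ (tensors of $\bra{+}$ and $id$), which are used at the very end, via linear independence of the family $\ketbra{y_1,\ldots}{\ldots,y_k}$, to obtain the pointwise phase equality. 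You replace all of this bookkeeping by one linear-algebraic observation: in a normal form every variable's first occurrence in the scan order $O_1,\ldots,O_m,I_1,\ldots,I_n$ is a bare entry, so $\vec y\mapsto(\vec I(\vec y),\vec O(\vec y))$ is injective, the support of $\interp{t'}$ is exactly the affine subspace $A$ being parametrized, and the normal form is precisely the reduced echelon parametrization of $A$ with respect to that coordinate order, which is unique; this yields the pivot positions, the signatures, the $\alpha$-conversion, the exponent $p$ (from the modulus of any nonzero entry) and the pointwise phases (each nonzero entry is a single term, thanks to injectivity) all at once, with no need for the operators $\operatorname{op}_i$. What each approach buys: yours is shorter and more conceptual, exhibiting the complete semantic invariant (the affine support together with the phase function on it) that the normal form canonically encodes; its only debt is the uniqueness-of-echelon-form fact you flag, which does need a (standard) proof, and your injectivity plus the fact that pivot coordinates range freely over $\{0,1\}^r$ is exactly what makes it go through. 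The paper's stripping induction is more elementary and self-contained --- it effectively re-proves that uniqueness in situ --- and its style of argument is reused elsewhere in the paper (it mirrors the proof of Proposition~\ref{prop:zero-morphism} and the kind of reasoning needed in the discard setting), at the price of heavier step-by-step case analysis.
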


\begin{proof}
In appendix at page \pageref{prf:clif-completeness}.
\end{proof}

\begin{corollary}
The equality of morphisms in $\cat{SOP}_{\operatorname{Clif}}/\underset{\operatorname{Clif+}}\sim$ is decidable in time polynomial in the size of the phase polynomial and in the combined size of the ket/bra polynomials.
\end{corollary}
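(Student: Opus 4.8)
The plan is to turn the completeness theorem into an explicit decision procedure --- reduce both morphisms to a normal form and compare the results up to $\alpha$-conversion --- and then to bound the running time of each phase. First I would reduce each of $t_1,t_2\in\cat{SOP}_{\operatorname{Clif}}$ by $\rewrite{\operatorname{Clif+}}$ to normal forms $N_1,N_2$; these exist and are reached in finitely many steps by Proposition~\ref{prop:clif+-terminates}, and have the rigid shape of Lemma~\ref{lem:NF}. Correctness of comparing $N_1,N_2$ rests on the two halves of the equivalence. Since every rule of $\rewrite{\operatorname{Clif+}}$ is sound for $\interp{.}$, we have $t_i\underset{\operatorname{Clif+}}\sim N_i$ and hence $\interp{t_i}=\interp{N_i}$, so if $N_1$ and $N_2$ agree up to $\alpha$-conversion then $\interp{t_1}=\interp{t_2}$. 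Conversely, Theorem~\ref{thm:clif-completeness}, being proved by exhibiting a normal form that is unique up to $\alpha$-conversion, shows that $\interp{t_1}=\interp{t_2}$ forces $N_1$ and $N_2$ to coincide up to $\alpha$-conversion. Thus $\interp{t_1}=\interp{t_2}$ iff $N_1$ and $N_2$ are $\alpha$-equivalent, and it remains to bound both phases.

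For the reduction, the crucial invariant is that $\cat{SOP}_{\operatorname{Clif}}$ is preserved by $\rewrite{\operatorname{Clif+}}$: throughout, the phase polynomial stays of degree at most $2$, hence has $O(k^2)$ monomials where $k$ is the number of summed variables, and the $O_i,I_i$ stay linear, of combined size $O\big((m+n)k\big)$. As $k$ is bounded by the initial number of variables and no rule introduces fresh ones, every intermediate term has size polynomial in the input. The number of steps is polynomial because the termination measure of Proposition~\ref{prop:clif+-terminates} is itself polynomially bounded in the input and strictly decreases at each rewrite. Each step is polynomial: locating a redex is a scan of a polynomially-sized term, and the substitutions it triggers keep the ket/bra linear and the phase in the fragment. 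The one delicate point is the immersion $\widehat{(.)}$, whose expansion of a linear $Q$ with $\ell$ terms may have up to $2^{\ell}$ monomials; here, however, only $\tfrac{1}{2^{p}}\widehat{Q}\bmod 1$ with $p\le 3$ ever occurs, which --- exactly as already observed for the size of $\cat{SOP}$ terms in the Clifford setting --- has at most $O(\ell^{3})$ monomials, so the substitution can be carried out in polynomial time by working modulo $1$ throughout.

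Finally, for the comparison up to $\alpha$-conversion, Lemma~\ref{lem:NF} tells us a normal form is determined by its scalar, phase polynomial and ket/bra polynomials, the only residual freedom being the names of the summed variables. I would remove this freedom by canonically renaming the variables in their order of first appearance in $O_1,\ldots,O_m,I_1,\ldots,I_n$ and then in $P$, after which $N_1$ and $N_2$ are $\alpha$-equivalent iff they are syntactically equal; both the canonicalisation and the final syntactic comparison are clearly polynomial. The hard part of making this rigorous is precisely the control of $\widehat{(.)}$ during substitution together with the polynomial bound on the termination measure --- both of which are secured by the Clifford restriction that caps the phase polynomial at degree $2$ and its denominators at $2^{3}$.
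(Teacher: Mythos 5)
Your overall architecture is the intended one: reduce both terms with $\rewrite{\operatorname{Clif+}}$, use soundness of the rules in one direction and Theorem~\ref{thm:clif-completeness} (uniqueness of normal forms up to $\alpha$-conversion) in the other, then compare the normal forms after a canonical renaming. Your side points are also sound: preservation of the Clifford fragment under the rules, control of the immersion $\widehat{(.)}$ via the bound on $\tfrac{1}{2^p}\widehat{Q}\bmod 1$, and deciding $\alpha$-equivalence of normal forms by renaming variables in order of first appearance.

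The genuine gap is your bound on the \emph{number} of rewrite steps. You claim it is polynomial ``because the termination measure of Proposition~\ref{prop:clif+-terminates} is polynomially bounded in the input and strictly decreases at each rewrite.'' This inference is invalid: that measure is a tuple of length $m+n+2$ ordered \emph{lexicographically}, and a lex-decreasing chain of tuples whose entries are bounded by $B$ can have length of order $(B+1)^{m+n+2}$, i.e.\ exponential in the number of ket/bra polynomials — lexicographic orders prove termination, not complexity. The danger is real here: a (ket) step at position $i$ lowers $\operatorname{mon}(O_i)$ but may \emph{increase} $\operatorname{mon}(O_j)$ for every $j>i$, as well as every $\operatorname{mon}(I_j)$ and $\operatorname{mon}(P)$, so the decreases cannot simply be summed. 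To close the gap you need an argument using the structure of the rules, for instance: (i) the variable-removing rules (Elim), (HH), ($\omega$) fire at most $k$ times in any reduction, since no rule creates variables; (ii) between two such firings, each ket position fires (ket) at most once — once $O_i$ has been reduced to its pivot $y_0$, no later substitution with pivot $y_0$ can occur in that segment, because (bra) substitutions never touch kets (their pivot avoids $\Var(\vec O)$) and a (ket) substitution with pivot $y_0$ at a position $\ell\neq i$ would require $y_0$ to appear in $O_\ell$, which a minimal-position induction rules out; symmetrically for bras. This gives at most $m+n$ steps per segment, hence $O\bigl(k(m+n)\bigr)$ steps overall. Alternatively, fix a left-to-right strategy (which is polynomial by the same one-pass observation) and invoke the uniqueness of normal forms from Theorem~\ref{thm:clif-completeness} to justify that this particular reduction reaches \emph{the} normal form. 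Either way, the polynomial step count must be proved, not read off the termination measure.
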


Although the set of rules is confluent in $\cat{SOP}_{\operatorname{Clif}}$, it is not in $\cat{SOP}$:

\begin{lemma}[Non-confluence]
The rewrite systems $\rewrite{\operatorname{Clif}}$ and $\rewrite{\operatorname{Clif+}}$ are not confluent in $\cat{SOP}$.
\end{lemma}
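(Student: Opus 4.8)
The plan is to exhibit a single \emph{critical pair}: one morphism $t\in\cat{SOP}$ that rewrites, in two ways, to two irreducible morphisms $N_A$ and $N_B$ that are not $\alpha$-convertible, hence not joinable. The only rule I will use is (HH), which belongs to both $\rewrite{\operatorname{Clif}}$ and $\rewrite{\operatorname{Clif+}}$, and I will check that $N_A,N_B$ are irreducible for \emph{all} rules of $\rewrite{\operatorname{Clif+}}$ (a fortiori for $\rewrite{\operatorname{Clif}}$), so the same witness establishes non-confluence of both systems at once. The guiding intuition is that within $\cat{SOP}_{\operatorname{Clif}}$ the two (HH)-contractions of an overlapping redex can always be rejoined, but outside the Clifford fragment the substitution $[y_0'\leftarrow Q]$ with a \emph{nonlinear} $Q$ injects degree-$3$ phase terms and a nonlinear ket that no rule can subsequently normalise. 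Note that, since every rule is sound, $\interp{N_A}=\interp{N_B}$ necessarily, so the two reducts cannot be separated semantically; the whole argument is syntactic.

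Concretely, I would take the non-Clifford term
\[
t=\sum_{y_0,y_2,y_1,a,b,c,d} e^{2i\pi\left(\frac{y_0}{2}(y_1+ab)+\frac{y_2}{2}(y_1+cd)+\frac{ab}{2}\right)}\ketbra{y_1}{}.
\]
Both $y_0$ and $y_2$ head an (HH)-redex with the \emph{same} partner $y_0'=y_1$: contracting on $y_0$ (with $Q=ab$) substitutes $y_1\leftarrow ab$, whereas contracting on $y_2$ (with $Q=cd$) substitutes $y_1\leftarrow cd$. The two branches give
\[
N_A=2\!\!\sum_{y_2,a,b,c,d}\!\! e^{2i\pi\left(\frac{y_2}{2}(ab+cd)+\frac{ab}{2}\right)}\ketbra{ab}{},\qquad N_B=2\!\!\sum_{y_0,a,b,c,d}\!\! e^{2i\pi\left(\frac{y_0}{2}(ab+cd)+\frac{ab}{2}\right)}\ketbra{cd}{}.
\]

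It remains to verify the two facts that make this a genuine counterexample. First, that $N_A$ and $N_B$ are irreducible: (Elim) is blocked since every variable occurs in the phase or the ket; ($\omega$) is blocked since no $\frac14$-phase occurs; (HH) is blocked on every variable, because each variable either occurs in the ket (forbidding it as the eliminated $y_0$) or its coefficient admits no lone-variable partner $y_0'$ with $y_0'\notin\Var(Q)$ --- in particular the surviving variable has coefficient $ab+cd$, a sum of two degree-$2$ monomials; and (ket), (bra), (Z) are blocked because each ket is a single degree-$2$ monomial (never of the form $y_0\oplus O_i'$ with $O_i'\neq0$) and the phase contains no lone $\frac{y_0}{2}$ term. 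Second, that $N_A\not\equiv_\alpha N_B$: any renaming must send the standalone monomial $\frac{ab}{2}$ to itself, forcing $\{a,b\}\mapsto\{a,b\}$, yet matching the kets $ab$ and $cd$ forces $\{a,b\}\mapsto\{c,d\}$ --- a contradiction. The main obstacle, and the only real subtlety, is exactly this non-joinability check: the two (HH)-contractions of a shared redex tend to form a diamond and rejoin, so the craft is to attach the asymmetric standalone term $\frac{ab}{2}$ that \emph{pins} $\{a,b\}$ and thereby makes the two nonlinear reducts incomparable under every renaming; the remaining irreducibility casework is routine but must be carried out for all six rules.
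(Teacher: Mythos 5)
Your proof is correct, and it follows the same overall strategy as the paper: exhibit one term admitting two divergent one-step reductions whose results are irreducible under every rule of $\rewrite{\operatorname{Clif+}}$ (hence also of $\rewrite{\operatorname{Clif}}$) and are not $\alpha$-equivalent. The difference lies in the critical pair chosen. The paper's witness is $\sum e^{2i\pi\left(\frac{y_0}4+\frac{y_0y_1y_2}2+\frac{y_1y_3}2\right)}\ket{y_3}$, which diverges under two \emph{different} rules: (HH) (eliminating $y_1$, substituting $y_3\leftarrow y_0y_2$) yields $2\sum e^{2i\pi\frac{y_0}4}\ket{y_0y_2}$, while ($\omega$) (eliminating $y_0$) yields $\sqrt2\sum e^{2i\pi\left(\frac18-\frac{y_1y_2}4+\frac{y_1y_3}2\right)}\ket{y_3}$; since these two reducts carry different scalars ($2$ versus $\sqrt2$) and different numbers of variables, non-joinability is immediate once irreducibility is checked, with no renaming argument needed. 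Your witness instead overlaps (HH) with itself on the shared partner $y_1$, which buys a slightly sharper fact --- the rule (HH) alone is already non-confluent on $\cat{SOP}$, with ($\omega$) playing no role --- but at the price that your two reducts $N_A$, $N_B$ have identical scalars and shapes, so you genuinely need the pinning argument via the standalone monomial $\frac{ab}2$ to exclude $\alpha$-conversion. That argument is sound, and so is your irreducibility case analysis, including its subtlest instance: in $N_B$ the variables $a,b$ are not in the ket and the coefficient of $\frac a2$ is $y_0b+b$, which \emph{does} contain a lone-variable candidate $y_0'=b$, excluded only by the side condition $y_0'\notin\Var(Q)$ (since matching the remaining monomial forces $Q=y_0b\ni b$); your phrasing covers exactly this. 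Both witnesses settle both rewrite systems at once, and your remark that $\interp{N_A}=\interp{N_B}$ by soundness, so the separation must be syntactic, is likewise correct.
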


\begin{proof}
Consider the morphism $\sum e^{2i\pi\left(\frac{y_0}4+\frac{y_0y_1y_2}2+\frac{y_1y_3}2\right)}\ket{y_3}$:
\begin{align*}
&\sum e^{2i\pi\left(\frac{y_0}4+\frac{y_0y_1y_2}2+\frac{y_1y_3}2\right)}\ket{y_3}
\rewrite{\text{(HH)}}2\sum e^{2i\pi\left(\frac{y_0}4\right)}\ket{y_0y_2}
\end{align*}
However
\begin{align*}
\sum e^{2i\pi\left(\frac{y_0}4+\frac{y_0y_1y_2}2+\frac{y_1y_3}2\right)}\ket{y_3}
&\rewrite{(\omega)} \sqrt2\sum e^{2i\pi\left(\frac18-\frac{y_1y_2}4+\frac{y_1y_3}2\right)}\ket{y_3}
\end{align*}
The two resulting morphisms are not reducible, be it with $\rewrite{\operatorname{Clif}}$ or with $\rewrite{\operatorname{Clif+}}$.
\end{proof}

\subsection{Pivoting and Local Complementation}

We show here how, in the Clifford case, the rule (HH) corresponds to the operation of \emph{pivoting} \cite{pivoting}, and the rule ($\omega$) to that of \emph{local complementation} \cite{pi_2-complete,local-complementation}. To do so, we realise that graph states are easily representable in $\cat{SOP}$, for instance by interpreting the ZH-version of the graph state as a $\cat{SOP}$ morphism.

Let $G=(V,E)$ be a graph, with vertices $V$ and edges $E\subseteq V\times V$. The associated $\cat{SOP}$ morphism is:
$$ \sum_{\vec y\in V}e^{2i\pi\left(\sum\limits_{(y_i,y_j)\in E}\frac{y_iy_j}2\right)} \ket{\vec y}$$

\subsubsection*{Pivoting}
The operation of pivoting can be used to simplify a diagram of $\cat{ZH}_{\operatorname{Clif}}$ (or equivalently a Clifford diagram of the ZX-Calculus, as described in \cite{pivoting}). Informally, pivoting can be applied on any neighbouring pair of white nodes (where at least one of them is \emph{internal} i.e.~not linked to an input/output, for it to actually simplify the diagram). In the process, we complement the exclusive neighbours of both nodes with the other neighbours. Moreover, the common neighbours get an additional phase of $\pi$.

Let us see how it translates in $\cat{SOP}$. Let $t=s\sum e^{2i\pi \left(\frac{y_0y_i}2+\frac{y_0}2\widehat{Q_0}+\frac{y_i}2\widehat{Q_i}+\frac{y_0+y_i}2\widehat{Q_{0i}}+R\right)} \ketbra{\vec O}{\vec I}$ be a Clifford term, where the phase polynomial is already factorised by $y_0$ and $y_i$, the pair of variables/white dots on which to apply the pivoting. We consider that $y_0$ is internal $y_0\notin\Var(\vec O,\vec I)$. The fact that $y_0$ and $y_i$ are neighbours is captured by the term $\frac{y_0y_i}2$ in the phase polynomial. We can distinguish the exclusive neighbours of $y_0$ (resp.~$y_i$) by $Q_0$ (resp.~$Q_i$), and their common neighbours by $Q_{0i}$. 

The rule (HH) can be applied, with the substitution $[y_i\leftarrow Q_0\oplus Q_{0i}]$. The result is
$$t'=2s\sum e^{2i\pi \left(\frac12\widehat{Q_0Q_i}+\frac12\widehat{Q_{0i}Q_i}+\frac12\widehat{Q_0Q_{0i}}+\frac12\widehat{Q_{0i}}+R\right)} \left(\ketbra{\vec O}{\vec I}\right)[y_i\leftarrow Q_0\oplus Q_{0i}]$$
The term $\frac12\widehat{Q_0Q_i}$ creates the monomial $\frac{y_ky_\ell}2$ for all $y_k\in\Var(Q_0)$ and $y_\ell\in\Var(Q_i)$. If this monomial was already in $R$, it gets cancelled. This performs the complementation between the groups of variables in $Q_0$ and those in $Q_i$, and similarly for $\frac12\widehat{Q_{0i}Q_i}$ and $\frac12\widehat{Q_0Q_{0i}}$. On the other hand, the term $\frac12\widehat{Q_{0i}}$ creates a $\pi$ phase for all the common neighbours of $y_0$ and $y_i$.

\subsubsection*{Local Complementation}
The operation of local complementation is another operation that can be used to simplify the Clifford term at hand. Consider an internal white node in a Clifford diagram. If this node has a phase of $\pm\frac\pi2$, it can be removed. Doing so will add a phase of $\mp\frac\pi2$ to all the neighbours of the node, and at the same time, will perform a local complementation on them (all the nodes connected through an $H$ will get disconnected, and vice-versa). A global phase is also created.

A $\cat{SOP}$ morphism in this situation is of the form $t=s\sum e^{2i\pi \left(\frac{y_0}4+\frac{y_0}2(\sum x_i)+R\right)} \ketbra{\vec O}{\vec I}$ with $y_0$ an internal variable and $x_i$ its neighbours. The rule ($\omega$) can hence be applied, and the resulted term is:
$$t' = \sqrt2s\sum e^{2i\pi \left(\frac18-\frac14(\sum x_i)+\frac12(\sum\limits_{i\neq j} x_ix_j)+R\right)} \ketbra{\vec O}{\vec I}$$
as $-\frac14\widehat{\bigoplus x_i} = -\frac14(\sum x_i)+\frac12(\sum\limits_{i\neq j} x_ix_j) \bmod 1$.

The constant $\frac18$ corresponds to the global phase, the term $-\frac14(\sum x_i)$ represents an additional $-\frac\pi2$ phase to all the neighbours of $y_0$, and term $\frac12(\sum\limits_{i\neq j} x_ix_j)$ performs the local complementation on them.

In the case where $y_0$ holds a $-\frac\pi2$ phase, the term can also be simplified like this.

\section{$\cat{SOP}$ with Discards}
\label{sec:sop-discard}

\subsection{The Discard Construction on $\cat{SOP}$}

In \cite{CJPV19}, a construction is given to extend any $\dagger$-compact PROP for \emph{pure} quantum mechanics to another $\dagger$-compact PROP for quantum mechanics with environment. This new formalism can also be understood as the previous one, but where on top of it, one can discard the qubits. Because $\cat{SOP}$ fits the requirements, the construction can be applied to it.

First, we have to create the subcategory $\cat{SOP}_{\operatorname{iso}}$ of $\cat{SOP}$ that contains all its isometries. The objects of the new category are the same, and its morphisms are $\{f\in\cat{SOP}~|~\interp{f^\dagger\circ f}=id\}$.

These are important, as the isometries are exactly the pure quantum operators that can be discarded. The next step in the construction does just that. We perform the affine completion of $\cat{SOP}_{\operatorname{iso}}$, that is, for every object $n$, we add a new morphism $!_n:n\to0$, and we impose that $!\circ f = !$ for any $f$ in the new category, that we denote $\cat{SOP}_{\operatorname{iso}}^!$. We also need to impose that $!_n\otimes !_m = !_{n+m}$ and $!_0 = id_0$.

Finally, the category $\cat{SOP}^{\sground}$ is obtained as the pushout: \tikzfig{pushout} where the arrows are the inclusion functors.

We write morphisms in the new category in the form:
$$s\sum_{\vec y\in V^k} e^{2i\pi P(\vec y)}\ket{\vec O(\vec y)}!\vec D(\vec y)\bra{\vec I(\vec y)}$$
where the additional $D$ is a set of multivariate polynomials of $\mathbb F_2$. The fact that it is a set, and not a list, already captures some rules on the discard: first permuting qubits and then discarding them is equivalent to discarding them right away. Similarly, copying data and discarding the copies is equivalent to discarding the data right away.

Pure morphisms are those such that $\vec D = \{\}$. In those, no qubits are discarded. We hence easily induce usual morphisms such as $H$ and $\textit{CZ}$ in the new formalism.

The new morphisms $!_n$ are given by:
$$!_n := \sum_{\vec y\in V^n} \ket{}!\{y_1,\ldots,y_n\}\bra{y_1,\ldots,y_n}$$

In the new formalism, the compositions are obtained by:
$$f\circ g := \frac{s_fs_g}{2^{\abs{\vec I_f}}}\sum\limits_{\substack{\vec y_f,\vec y_g\\\vec y\in V^{\abs{\vec I_f}}}} e^{2i\pi \left(P_g+P_f+\frac{\vec O_g\cdot \vec y+\vec I_f\cdot \vec y}{2}\right)}\ket{\vec O_f}!D_f\cup D_g\bra{\vec I_g}$$
$$f\otimes g := s_fs_g\sum\limits_{\substack{\vec y_f,\vec y_g}} e^{2i\pi (P_g+P_f)}\ket{\vec O_f\vec O_g}!D_f\cup D_g\bra{\vec I_f\vec I_g}$$

It might be useful to be able to give an interpretation to the morphisms of the new formalism. To do so, we use the $\operatorname{CPM}$ construction \cite{Selinger-CPM} to map morphisms of $\cat{SOP}^{\sground}$ to morphisms of $\cat{SOP}$.

\begin{definition}
The functor $\operatorname{CPM}:\cat{SOP}^{\sground}/\underset{\operatorname{Clif}\raisebox{0.5ex}\sground}\sim\to\cat{SOP}/\underset{\operatorname{Clif+}}\sim$ is defined as:
\begin{align*}
s\sum_{\vec y} e^{2i\pi P}\ket{\vec O}!\vec D&\bra{\vec I} 
~\mapsto~\\
&\frac{s^2}{2^{\abs{\vec D}}}\sum_{\vec y_1,\vec y_2,\vec y} e^{2i\pi\left(P(\vec y_1)-P(\vec y_2)+ \frac{\vec D(\vec y_1)\cdot\vec y+\vec D(\vec y_2)\cdot\vec y}2 \right)}\ketbra{\vec O(\vec y_1),\vec O(\vec y_2)}{\vec I(\vec y_1),\vec I(\vec y_2)}
\end{align*}
\end{definition}


We can now define a standard interpretation of $\cat{SOP}^{\sground}$-morphisms as:
\begin{definition}
The standard interpretation $\interp{.}$ of $\cat{SOP}^{\sground}$ is defined as $\interp{.}:=\interp{\operatorname{CPM}(.)}$.
\end{definition}

Again, it is easy to transform any morphism of $\cat{SOP}^{\sground}$ in $\cat{ZH}^{\sground}$ and vice-versa: 
$$\left[s\sum_{\vec y\in V^k} e^{2i\pi P(\vec y)}\ket{\vec O(\vec y)}!\vec D(\vec y)\bra{\vec I(\vec y)}\right]^{\operatorname{ZH}} := \tikzfig{ZH-NF-ground}$$
and $\left[\ground\right]^{\operatorname{sop}} = !_1$.

\subsection{$\cat{SOP}$ with Discards for Clifford}

The discard construction can be applied to the subcategory $\cat{SOP}_{\operatorname{Clif}}$. We end up with a new category $\cat{SOP}_{\operatorname{Clif}}^{\sground}$, such that the following diagram, whose arrows are inclusions, commutes: $$\tikzfig{sop-discard-subcat-cd}$$

Following the characterisation of $\cat{SOP}_{\operatorname{Clif}}$ morphisms, we can determine that all the morphisms of $\cat{SOP}_{\operatorname{Clif}}^{\sground}$ are of the form:
$$ \frac1{\sqrt2^p}\sum e^{2i\pi\left(\frac18 P^{(0)}+\frac14 P^{(1)}+\frac12 P^{(2)}\right)}\ket{\vec O}!\vec D\bra{\vec I}$$
where $p\in\mathbb Z$, where $P^{(i)}$ is a polynomial with integer coefficients and of degree at most $i$, and where the polynomials of $\vec O,\vec D$ and $\vec I$ are linear.

The rewrite system presented previously can obviously be adapted to the new formalism (when there is a substitution, it has to be applied in $!\vec D$ as well). On top of that, the condition that makes $\cat{SOP}_{\operatorname{iso}}^!$ terminal can be translated as the meta rule:
\begin{align*}
\interp{\frac{s_f\overline{s_f}}{2^{\abs{O_f}}}\sum e^{2i\pi \left(P_f(\vec y)-P_f(\vec y')+\frac{\vec O_f(\vec y)\cdot \vec y''+ \vec O_f(\vec y')\cdot \vec y''}2\right)}\ketbra{\vec I(\vec y')}{\vec I(\vec y)}}=id\\
\implies \frac{s_fs_B}{2^{\abs{\vec O_B}}} \sum e^{2i\pi\left(P_B+P_f+\frac{\vec I_f\cdot \vec y+ \vec O_B\cdot \vec y}2\right)}\ket{\vec O}!\{\vec O_f,\ldots\}\bra{\vec I_B} = s_B \sum e^{2i\pi P_B}\ket{\vec O}!\{\vec O_B,\ldots\}\bra{\vec I_B}
\end{align*}
As you can see, this rule is not easy to apply. Thankfully, the last part of \cite{CJPV19} is devoted to showing that the big meta rule can sometimes be replaced by a few small ones. The idea is that, in some cases (in particular in the Clifford fragment), all the isometries can be generated from a finite set of generators. In particular, it is enough to impose that:
\begin{itemize}
\item $e^{i\alpha}=1$
\item $!\circ\ket0 = 1$
\item $!\circ H = !$\hfill (test)
\item $!\circ S = !$
\item $!_2\circ\textit{CZ}=!_2$
\end{itemize}
We give in Figure~\ref{fig:rewrite-rules-discard} the updated set of rewrite rules.

\begin{figure}[!htb]
\[\sum_{\vec y} e^{2i\pi P}\ket{\vec O}!\vec D\bra{\vec I}
\rewrite{y_0\notin\Var(P,\vec D,\vec O,\vec I)} 2\sum_{\vec y\setminus{\{y_0\}}} e^{2i\pi P}\ket{\vec O}!\vec D\bra{\vec I}\tag{Elim}\]

\[\sum_{\vec y} e^{2i\pi\left(\frac{y_0}{2} (y_0' + \widehat{Q}) + R\right)}\ket{\vec O}!\vec D\bra{\vec I}
\rewrite{y_0\notin\Var(R,Q,\vec D,\vec O,\vec I)\\y_0'\notin \Var(Q)} 2\!\!\sum_{\vec y\setminus{\{y_0,y_0'\}}}\!\! e^{2i\pi\left(R\left[y_0'\leftarrow \widehat{Q}\right]\right)}\left(\ket{\vec O}!\vec D\bra{\vec I}\right)\left[y_0'\leftarrow Q\right]\tag{HH}\]

\[\sum_{\vec y} e^{2i\pi\left(\frac{y_0}{4} + \frac{y_0}{2}\widehat{Q} + R\right)}\ket{\vec O}!\vec D\bra{\vec I}
\rewrite{y_0\notin\Var(Q,R,\vec D,\vec O,\vec I)} \sqrt{2}\sum_{\vec y\setminus{\{y_0\}}} e^{2i\pi\left(\frac{1}{8}-\frac{1}{4}\widehat{Q} + R\right)}\ket{\vec O}!\vec D\bra{\vec I}\tag{$\omega$}\]

\[\sum e^{2i\pi \left(P+\alpha \widehat{D_1\ldots D_p}\right)}\ket{\vec O}!\{D_1,\ldots,D_p,\ldots\}\bra{\vec I}
\rewrite{} \sum e^{2i\pi P}\ket{\vec O}!\{D_1,\ldots,D_p,\ldots\}\bra{\vec I}\tag{Z\sground}\]

\[\sum_{\vec y} e^{2i\pi \left(P+ \frac {y_0}2\widehat{Q}\right)}\ket{\vec O}!\vec D{\cup}\{y_0\} \bra{\vec I}
\rewrite{y_0\notin\Var(P,\vec O,\vec I, \vec D)} \sqrt2\sum_{\vec y\setminus\{y_0\}} e^{2i\pi P}\ket{\vec O}!\vec D{\cup}\{Q\}\bra{\vec I}\tag{H\sground}\]

\[\sum e^{2i\pi P}\ket{\vec O}!\{D_1,\ldots,D_p,D_1... D_p{\oplus}D_{p+1},\ldots\}\bra{\vec I}
\rewrite{}\sum e^{2i\pi P}\ket{\vec O}!\{D_1,\ldots,D_p,D_{p+1},\ldots\}\bra{\vec I}\tag{$\oplus$\sground}\]

\[\sum e^{2i\pi P}\ket{\vec O}!\{c,\ldots\}\bra{\vec I}
\rewrite{c\in\{0,1\}}\sum e^{2i\pi P}\ket{\vec O}!\{\ldots\}\bra{\vec I}\tag{Cst\sground}\]

\vspace*{-1em}
\begin{align*}
\sum_{\vec y} e^{2i\pi\left(P\right)}&\ket{\vec O}!\vec D\cup\{\overset{D_i}{\overbrace{y_0\oplus D_i'}}\}\bra{\vec I}\qquad\qquad\qquad\qquad
\end{align*}
\vspace*{-2.5em}
\begin{align*}
&\begin{array}{cl}
\rotatebox[origin=c]{-90}{$\rewrite{}$} & 
\begin{array}{l}
\scriptstyle{y_0\notin\Var(D_i')}\\
\scriptstyle{\abs{\{D_k{\in} \vec D~|\operatorname{mon}(D_k)\geq2\}}\geqslant|\{D_k{\in} \vec D[y_0{\leftarrow} D_i]~|\operatorname{mon}(D_k)\geq2\}|}
\end{array}
\end{array}\tag{disc}\\
\sum_{\vec y} e^{2i\pi\left(P[y_0\leftarrow \widehat{D_i}]\right)}&\left(\ket{\vec O}!\vec D\cup\{D_i\}\bra{\vec I}\right)[y_0\leftarrow D_i]
\end{align*}
\vspace*{-0.5em}
\[\sum_{\vec y} e^{2i\pi\left(P\right)}|\cdots,\overset{O_i}{\overbrace{y_0{\oplus} O_i'}}{\oplus} O_i'',\cdots\rangle !\vec D\bra{\vec I}
\hspace*{-4em}\rewrite{O_i'\neq 0\\y_0\notin \Var(O_1,\ldots,O_{i-1},O_i',O_i'')\\ y_0\notin\Var(\vec D)\text{ or }\{y_0, O_i'\}\subseteq \vec D{\cup}\{1\}}\hspace*{-1em}
\sum_{\vec y} e^{2i\pi\left(P[y_0\leftarrow \widehat{O_i}]\right)}\ket{\vec O[y_0{\leftarrow} O_i]}!\vec D\bra{\vec I[y_0{\leftarrow} O_i]}\tag{ket}\]
\vspace*{-0.5em}
\[\sum_{\vec y} e^{2i\pi\left(P\right)}\ket{\vec O}!\vec D\langle \cdots,\overset{I_i}{\overbrace{y_0{\oplus} I_i'}}{\oplus} I_i'',\cdots|
\hspace*{-2em}\rewrite{I_i'\neq 0\\y_0\notin \Var(\vec O,I_1,\ldots,I_{i-1},I_i',I_i'')\\ y_0\notin\Var(\vec D)\text{ or }\{y_0, I_i'\}\subseteq \vec D{\cup}\{1\}}\hspace*{-0.5em}
\sum_{\vec y} e^{2i\pi\left(P[y_0\leftarrow \widehat{I_i}]\right)}\ket{\vec O}!\vec D\bra{\vec I[y_0{\leftarrow} I_i]}\tag{bra}\]

\[s\sum_{\vec y} e^{2i\pi\left(\frac{y_0}{2} + R\right)}\ket{\vec O}!\vec D\bra{\vec I}
\rewrite{R\neq 0 \text{ or } \vec O\vec I\neq \vec 0\\y_0\notin\Var(R,\vec D,\vec O,\vec I)} \sum_{y_0} e^{2i\pi\left(\frac{y_0}{2}\right)}\ket{0,\cdots,0}!\{\}\bra{0,\cdots,0}\tag{Z}\]

\caption[]{Rewrite system $\rewrite{\operatorname{Clif}\raisebox{0.5ex}\sground}$ for $\cat{SOP}^{\sground}$.}
\label{fig:rewrite-rules-discard}
\end{figure}

Notice that we have made the choice to simplify the discarded polynomials before those in kets and bras. This is motivated by the example:
\begin{example}
Consider $t:= \ket{y_1,y_2,y_3}!\{y_1{\oplus}y_2,y_2{\oplus}y_3,y_1{\oplus}y_3\}$. If (ket) had priority over (disc), the term could not be reduced. Instead, we reduce $t$ as:
\begin{align*}
\ket{y_1,y_2,y_3}!\{y_1{\oplus}y_2,y_2{\oplus}y_3,y_1{\oplus}y_3\}\rewrite{\text{(disc)}}\ket{y_1{\oplus}y_2, y_2,y_3}!\{y_1,y_2{\oplus}y_3,y_1{\oplus}y_2{\oplus}y_3\}\\
\rewrite{\oplus\sground}\ket{y_1{\oplus}y_2, y_2,y_3}!\{y_1,y_2{\oplus}y_3\}
\rewrite{\text{(disc)}}\ket{y_1{\oplus}y_2, y_2,y_2{\oplus}y_3}!\{y_1,y_3\}\\
\rewrite{\text{ket}}\ket{y_2, y_1{\oplus}y_2,y_1{\oplus}y_2{\oplus}y_3}!\{y_1,y_3\}
\rewrite{\text{ket}}\ket{y_2, y_1{\oplus}y_2,y_2{\oplus}y_3}!\{y_1,y_3\}
\end{align*}
Hence, by giving priority to (disc) over (ket) and (bra), one can hope to reduce the number of discarded polynomials.
\end{example}

%

\begin{proposition}
\label{prop:clif!-terminates}
The rewrite system $\rewrite{\operatorname{Clif}\raisebox{0.5ex}\sground}$ terminates.
\end{proposition}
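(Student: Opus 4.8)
The plan is to exhibit a lexicographically-ordered termination measure $\mu(t)$ valued in $\mathbb N^4$ and to check that every rule of $\rewrite{\operatorname{Clif}\raisebox{0.5ex}\sground}$ makes $\mu$ strictly decrease; well-foundedness of the lexicographic product of copies of $\mathbb N$ then gives termination. I would set $\mu(t)=(k,\,d_2,\,\abs{\vec D},\,\mu_+)$, where $k$ is the number of bound variables (the length of $\vec y$), $d_2:=\abs{\{D_j\in\vec D\mid\operatorname{mon}(D_j)\geq 2\}}$ counts the discarded polynomials that are not a single variable or constant, $\abs{\vec D}$ is the total number of discarded polynomials, and $\mu_+$ is the termination measure already used for $\rewrite{\operatorname{Clif+}}$ in Proposition~\ref{prop:clif+-terminates} (tracking the complexity of the signatures $\vec O,\vec I$ and of the phase $P$). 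The ordering is chosen so that each discard-specific rule dominates, in the lexicographic order, the pure rules it could otherwise interfere with.

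First I would dispatch the variable-removing rules. The rules (Elim), (HH), ($\omega$) and (H$\sground$) each delete at least one element of $\vec y$, so they strictly decrease $k$, the top coordinate, irrespective of their effect on the discard set or the signatures; this is precisely what makes (H$\sground$) harmless even though it appends the possibly multi-monomial polynomial $Q$ to $\vec D$, raising $d_2$ and $\abs{\vec D}$. Rule (Z) collapses the morphism to its canonical zero form: it never raises $k$ (the summand $\frac{y_0}2$ forces $k\geq 1$ before, and exactly one variable survives), and it empties $\vec D$ and trivialises $\vec O,\vec I$ and $P$, so that in the degenerate case where $k$ is unchanged, its guard ($R\neq 0$ or $\vec O\vec I\neq\vec 0$) guarantees that a lower coordinate drops.

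Next I would treat the rules that fix $k$ but not $\vec D$. Rules ($\oplus\sground$) and (Cst$\sground$) remove one polynomial from $\vec D$ without creating a new variable, so $k$ is fixed, $d_2$ cannot increase, and $\abs{\vec D}$ strictly decreases. Rules (ket), (bra) and (Z$\sground$) leave $\vec D$ literally unchanged — their guards (the alternatives $y_0\notin\Var(\vec D)$ or $\{y_0,O_i'\}\subseteq\vec D\cup\{1\}$, and the analogue for (bra)) are exactly what lets the change of variables bypass the discard set — so $k$, $d_2$ and $\abs{\vec D}$ are all preserved; I would then appeal to Proposition~\ref{prop:clif+-terminates} to conclude that each strictly decreases $\mu_+$, since these three rules act on $\vec O,\vec I,P$ precisely as their $\rewrite{\operatorname{Clif+}}$ counterparts, with (Z$\sground$) merely erasing a phase monomial.

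The crux is rule (disc), and this is where I expect the real work. It performs the change of variables $[y_0\leftarrow D_i]$ with $D_i=y_0\oplus D_i'$, which preserves $k$ and must therefore be shown to strictly decrease $d_2$. The key observation is that after substitution the pivoted polynomial collapses, $D_i[y_0\leftarrow D_i]=y_0$, turning one $\geq 2$-monomial discard polynomial into a single variable, while the side condition $\abs{\{D_k\in\vec D\mid\operatorname{mon}(D_k)\geq 2\}}\geq\abs{\{D_k\in\vec D[y_0\leftarrow D_i]\mid\operatorname{mon}(D_k)\geq 2\}}$ forbids the substitution from creating new complex polynomials among the remaining entries; hence $d_2$ drops by exactly one with $k$ fixed. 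The delicate points I would have to nail down are: that this guard genuinely forces a net decrease rather than merely shuffling complexity between discard entries; that the priority of (disc) over (ket)/(bra) cannot spawn a cycle (a substitution chain keeping $d_2$ and $\abs{\vec D}$ constant and looping on $\mu_+$, which is ruled out exactly because (ket)/(bra) are $\mu_+$-decreasing); and the minor but real point that the phase measure inside $\mu_+$ must be read off the canonical reduced form of $P$ so that (Z$\sground$)'s deletion of $\alpha\widehat{D_1\cdots D_p}$ is a true strict decrease and not masked by cancellations.
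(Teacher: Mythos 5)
Your measure coincides with the paper's on its first two coordinates ($\abs{\vec y}$, then $d_2=\abs{\{D_j\in\vec D\mid\operatorname{mon}(D_j)\geq2\}}$) and on its tail (the pure-case data $\operatorname{mon}(O_1),\ldots,\operatorname{mon}(I_n),\operatorname{mon}(P)$), and your case analysis for (Elim), (HH), ($\omega$), (H\sground), (disc), (ket), (bra), (Z\sground) and (Z) is essentially the paper's --- your treatment of (Z) is in fact more explicit than the paper's. The two proofs diverge only in the third coordinate: you use the cardinality $\abs{\vec D}$, the paper uses the total monomial count $\sum_i\operatorname{mon}(D_i)$. That difference is not cosmetic; it is exactly where your proof fails.

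The gap is the rule ($\oplus$\sground). You assert that it ``removes one polynomial from $\vec D$'', so that $\abs{\vec D}$ strictly decreases. It does not: the rule \emph{replaces} the element $D_1\cdots D_p\oplus D_{p+1}$ by $D_{p+1}$, so the cardinality of the discard set is in general unchanged (it only shrinks in the lucky case where $D_{p+1}$ already occurs in the set). Concretely, take $\vec D=\{y_1,\ y_1\oplus y_2\oplus y_3\}$ and apply the rule with $p=1$, $D_1=y_1$, $D_2=y_2\oplus y_3$: the result is $\vec D'=\{y_1,\ y_2\oplus y_3\}$. Across this step $\abs{\vec y}$ is unchanged, $d_2$ is unchanged (both $y_1\oplus y_2\oplus y_3$ and $y_2\oplus y_3$ have at least two monomials), $\abs{\vec D}=2$ is unchanged, and your $\mu_+$ is unchanged since the rule does not touch $\vec O$, $\vec I$ or $P$. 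So your lexicographic measure is constant on a legitimate (indeed, intended --- this is precisely how Lemma \ref{lem:clif-disc-form} eliminates linear combinations of already-discarded variables) rewrite step, and termination does not follow. The repair is the paper's choice of coordinate: $\sum_i\operatorname{mon}(D_i)$ drops from $4$ to $3$ in the example, and in general the replacement of $D_1\cdots D_p\oplus D_{p+1}$ by $D_{p+1}$ is oriented so as to lower the total monomial count of the discard set. Note that keeping \emph{both} coordinates, say $(\,\abs{\vec y},\,d_2,\,\sum_i\operatorname{mon}(D_i),\,\abs{\vec D},\,\mu_+)$, is even slightly stronger than the paper's tuple: the case of (Cst\sground) removing the constant $0$ leaves $\sum_i\operatorname{mon}(D_i)$ unchanged (the paper's claim that it decreases is off in that corner case) but strictly decreases $\abs{\vec D}$, so the combined measure covers every rule.
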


\begin{proof}
In appendix at page \pageref{prf:clif!-terminates}.
\end{proof}

\begin{lemma}
\label{lem:clif-disc-form}
Any non-null morphism of $\cat{SOP}_{\operatorname{Clif}}^{\sground}$ can be reduced to:
$$ \frac1{\sqrt2^p}\sum_{\vec y,\vec y_d} e^{2i\pi\left(\frac14 P^{(1)}(\vec y)+\frac12 P^{(2)}(\vec y,\vec y_d)\right)}\ket{\vec O(\vec y,\vec y_d)}!\{\vec y_d\}\bra{\vec I(\vec y,\vec y_d)}$$
where:
\begin{itemize}
\item polynomials of $\vec O$ and $\vec I$ are linear
\item the set of discarded polynomials is reduced to a set of variables $\{\vec y_d\}$
\item $P^{(1)}$ and $P^{(2)}$ have no constants
\item no monomial of $P^{(2)}$ uses only variables of $\vec y_d$
\item $\{\vec y_d\}\subseteq\Var(\vec O,\vec I)$ i.e.~discarded variables have to appear somewhere in the ket or bra
\item $\Var(P^{(1)},P^{(2)}) \subseteq \Var(\vec O,\vec I,\vec D)$ or $P = \frac{y_0}2$ with $y_0\notin\Var(\vec O,\vec I,\vec D)$.
\end{itemize}
\end{lemma}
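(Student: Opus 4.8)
The plan is to use termination. By Proposition~\ref{prop:clif!-terminates} the system of Figure~\ref{fig:rewrite-rules-discard} terminates, so I reduce the given morphism to an irreducible $t'$; since every rule preserves the interpretation, $t'$ is still non-null. It therefore suffices to prove that every \emph{irreducible} non-null morphism of $\cat{SOP}_{\operatorname{Clif}}^{\sground}$ has the announced shape, which I do by reading each bullet as the contrapositive of the applicability of one rule. A preliminary step is to check that the rules keep us inside $\cat{SOP}_{\operatorname{Clif}}^{\sground}$: the substitutions $[y_0\leftarrow D_i]$, $[y_0\leftarrow O_i]$, $[y_0\leftarrow I_i]$ are $\mathbb F_2$-affine, hence preserve linearity of $\vec O,\vec I,\vec D$, and a routine denominator check keeps the phase of the form $\frac18 P^{(0)}+\frac14 P^{(1)}+\frac12 P^{(2)}$.

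First I normalise the discarded set. Because in the Clifford fragment every $D_i$ is $\mathbb F_2$-affine, I drop the constant forms $0,1$ with (Cst\sground), remove linear dependencies $D_i\oplus D_j$ with ($\oplus$\sground) (instance $p=1$), and rename each remaining form to a bare variable with (disc): there the side substitution $[y_0\leftarrow y_0\oplus D_i']$ is involutive over $\mathbb F_2$ and sends the discarded form $y_0\oplus D_i'$ to $y_0$. Irreducibility with respect to these three rules forces $\vec D=\{\vec y_d\}$ to be a set of distinct variables; the monotonicity side-condition of (disc) is what makes this loop-free and minimises the number of discarded forms, as illustrated by the worked example preceding this lemma.

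Next I clean the phase with (Z\sground). Its instance $p=0$, where $D_1\cdots D_0$ is the empty product $1$, removes any constant, which both kills $\frac18 P^{(0)}$ and leaves $P^{(1)},P^{(2)}$ constant-free; its instances $p\geq1$ remove every phase monomial built from discarded variables only — the linear $\frac14 y_{d_i}$ and $\frac12 y_{d_i}$, and the quadratic $\frac12 y_{d_i}y_{d_j}$. After saturation, $P^{(1)}$ involves only the kept variables $\vec y$ and no monomial of $P^{(2)}$ uses solely $\vec y_d$. I then force every discarded variable into a ket or bra: if $y_{d_i}\notin\Var(\vec O,\vec I)$, its remaining phase contribution is exactly $\frac{y_{d_i}}2\widehat Q$, where $Q$ is the $\mathbb F_2$-linear form collecting the co-factors of $y_{d_i}$. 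The key identity is $\frac{y_{d_i}}2\widehat Q\equiv\frac12\sum_j c_j\,y_{d_i}y_j \pmod 1$, the cubic remainders produced by the lift $\widehat{\cdot}$ carrying integer coefficients and hence vanishing in the phase; this makes (H\sground) applicable, eliminating $y_{d_i}$ and appending $Q$ to $\vec D$. Irreducibility thus gives $\{\vec y_d\}\subseteq\Var(\vec O,\vec I)$.

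Finally, the condition $\Var(P^{(1)},P^{(2)})\subseteq\Var(\vec O,\vec I,\vec D)$ is irreducibility with respect to (Elim), (HH), ($\omega$) and (Z): a variable occurring only in the phase is deleted by (Elim) when it occurs nowhere else, eliminated by (HH) or ($\omega$) when it enters as $\frac{y_0}2(y_0'+\widehat Q)$ or $\frac{y_0}4+\frac{y_0}2\widehat Q$, and otherwise survives as an isolated $\frac{y_0}2$, which triggers (Z); since $t'$ is non-null it cannot be the null normal form $\sum_{y_0}e^{2i\pi\frac{y_0}2}\ket{0,\dots,0}!\{\}\bra{0,\dots,0}$, so we land in the inclusion case, mirroring the pure normal form of Lemma~\ref{lem:NF}. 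The linearity of $\vec O,\vec I$ is inherited from the invariant and preserved by (ket)/(bra). I expect the main obstacle to be the bookkeeping between the discard rules and the phase/ket/bra rules — in particular justifying the applicability of (H\sground) through the mod-$1$ co-factor identity, and checking that giving (disc) priority over (ket)/(bra), together with its monotonicity guard, both terminates and exposes all eliminable discarded variables.
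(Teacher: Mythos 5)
Your proposal is correct and follows essentially the same route as the paper's proof: reduce with the terminating system $\rewrite{\operatorname{Clif}\raisebox{0.5ex}\sground}$ and establish each bullet by showing its failure makes a rule applicable — (Cst\sground)/($\oplus$\sground)/(disc) to turn $\vec D$ into bare variables, (Z\sground) for constants and discarded-only monomials, (H\sground) (via your mod-$1$ cofactor identity, which the paper leaves implicit) for discarded variables absent from kets and bras, and the pure-case rules of Lemma \ref{lem:NF} for the last condition. The one point you flag as an obstacle, the monotonicity guard of (disc), is settled in the paper by always choosing $y_0$ to be a variable of $D_i$ that is \emph{not} one of the already-bare discarded variables (so no discarded polynomial can enter the set of those with at least two monomials, and the guard holds), falling back on ($\oplus$\sground) followed by (Cst\sground) when every variable of $D_i$ is already a bare discarded variable.
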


\begin{proof}
In appendix at page \pageref{prf:clif-disc-form}.
\end{proof}

\begin{corollary}
Any morphism of $\cat{SOP}_{\operatorname{Clif}}^{\sground}$ eventually reduces to a morphism of the form given in Lemma \ref{lem:clif-disc-form}.
\end{corollary}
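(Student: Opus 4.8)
The plan is to combine the termination of the rewrite system with Lemma~\ref{lem:clif-disc-form}, treating separately the morphisms whose interpretation is null. First I would invoke Proposition~\ref{prop:clif!-terminates}: since $\rewrite{\operatorname{Clif}\raisebox{0.5ex}\sground}$ terminates, every morphism $t$ of $\cat{SOP}_{\operatorname{Clif}}^{\sground}$ admits at least one normal form, reachable by a finite reduction sequence. Because each rewrite rule preserves the interpretation $\interp{.}$, the property $\interp{t}=0$ is invariant along any reduction, so I can split the argument according to whether $\interp{t}\neq 0$ or $\interp{t}=0$.

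When $\interp{t}\neq 0$, the statement is exactly Lemma~\ref{lem:clif-disc-form}: such a $t$ reduces to the stated form, satisfying in particular the first disjunct $\Var(P^{(1)},P^{(2)})\subseteq\Var(\vec O,\vec I,\vec D)$. The only point to add is that the reached term is genuinely a normal form, which I would establish by checking that, once the listed conditions on $\vec O$, $\vec I$, $\vec D$ and the phase polynomial hold, none of the rules of $\rewrite{\operatorname{Clif}\raisebox{0.5ex}\sground}$ can still fire; the finite reduction of the lemma then witnesses that $t$ \emph{eventually reduces} to the form.

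The remaining, and harder, case is $\interp{t}=0$, where I would adapt Proposition~\ref{prop:zero-morphism} to the discard setting and show that $t$ reduces to the canonical null term $\sum_{y_0} e^{2i\pi\frac{y_0}2}\ket{0,\ldots,0}!\{\}\bra{0,\ldots,0}$, which is precisely the exceptional disjunct $P=\frac{y_0}2$ with $y_0\notin\Var(\vec O,\vec I,\vec D)$ of the form in Lemma~\ref{lem:clif-disc-form}. The mechanism is the rule (Z): it suffices to expose in the phase polynomial a free linear term $\frac{y_0}2$ with $y_0$ absent from $R$, $\vec O$, $\vec I$ and $\vec D$, and (Z) then collapses the whole morphism to the canonical zero. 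The hard part will be guaranteeing that a null morphism always \emph{eventually} exposes such a free term: by the contrapositive of Corollary~\ref{cor:non-zero-morphism}, a normal form all of whose phase variables lie in $\Var(\vec O,\vec I,\vec D)$ cannot be null, so a null $t$ must, once reduced as far as the other rules allow, retain a free phase variable available to (Z). To make this rigorous through the discard layer I would transfer the question to the pure fragment via $\operatorname{CPM}$: since by definition $\interp{t}=\interp{\operatorname{CPM}(t)}$ and $\operatorname{CPM}(t)$ is a pure $\cat{SOP}_{\operatorname{Clif}}$ morphism, the null-detection analysis underlying Proposition~\ref{prop:zero-morphism} applies to it directly. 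The main obstacle I anticipate is lifting the resulting collapse back to a reduction of $t$ itself rather than of its doubled image, i.e.~arguing that the free phase term detected on the $\operatorname{CPM}$ side already corresponds to a free $\frac{y_0}2$ in $t$ that triggers (Z).
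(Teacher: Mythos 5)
Your overall skeleton (termination plus Lemma \ref{lem:clif-disc-form}) is the paper's proof, which is a one-liner, but you check the implication in the wrong direction for what the corollary must deliver. You propose to take the reduction supplied by Lemma \ref{lem:clif-disc-form} and then verify that a term of the stated form is irreducible (``none of the rules can still fire''). That only shows that \emph{some} reduction path from $t$ ends in a term of the form. The corollary is later used in the proof of Theorem \ref{thm:disc-completeness} for \emph{arbitrary} terminal reducts $t_1',t_2'$, which must lie in $\overline{\cat{SOP}_{\operatorname{Clif}}^{\sground}}$ so that $F$ and $G$ can be applied to them; and confluence of $\rewrite{\operatorname{Clif}\raisebox{0.5ex}\sground}$ is not available at this stage (it is a \emph{consequence} of that theorem). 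So what is needed is the converse: any term \emph{not} of the form is still reducible, i.e.\ irreducible implies form. This is exactly what the proof of Lemma \ref{lem:clif-disc-form} establishes --- each violated condition triggers one of (ket), (bra), (Z\sground), (disc), ($\oplus$\sground), (Cst\sground), (H\sground), or a rule of $\rewrite{\operatorname{Clif}}$ --- and it is why the paper can conclude in one sentence: the system terminates, and a terminal term must be of the form.

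The second issue is your treatment of the null case, which is both unnecessary and, at the point you yourself flag, unworkable. A reduction of $\operatorname{CPM}(t)$ under $\rewrite{\operatorname{Clif+}}$ does not lift to a reduction of $t$ under $\rewrite{\operatorname{Clif}\raisebox{0.5ex}\sground}$: the doubled variables $\vec y_1,\vec y_2$ and the extra sums introduced by $\operatorname{CPM}$ have no counterpart in $t$, so the free $\frac{y_0}2$ you detect on the $\operatorname{CPM}$ side lives in a different term, and nothing transports the collapse back. The paper never needs any null/non-null split for this corollary: the disjunct ``$P=\frac{y_0}2$ with $y_0\notin\Var(\vec O,\vec I,\vec D)$'' is included in the form of Lemma \ref{lem:clif-disc-form} precisely so that irreducible terms produced from null morphisms (e.g.\ the output of rule (Z)) already count as being of the form. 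The finer statement that a null morphism reduces to the canonical zero is a separate, \emph{later} result, Lemma \ref{lem:zero-morphisms-disc}, and the paper proves it by redoing the construction of Proposition \ref{prop:zero-morphism} directly in the discard setting (observing that $\{\vec y_d\}\subseteq\Var(\vec O,\vec I)$ forces the discarded variables to deplete as the terms $t^{(i)}$ are built), not by transporting through $\operatorname{CPM}$. Note also that appealing to the discard analogue of Corollary \ref{cor:non-zero-morphism} here would risk circularity, since in the paper that analogue (Corollary \ref{cor:non-zero-morphisms-disc}) is itself derived from the null-case lemma.
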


\begin{proof}
As the rewrite system terminates, and since every morphism of $\cat{SOP}_{\operatorname{Clif}}^{\sground}$ can be reduced into the form of Lemma \ref{lem:clif-disc-form}, the rewrite system terminates in a term of the form of Lemma \ref{lem:clif-disc-form}.
\end{proof}

\begin{lemma}
\label{lem:zero-morphisms-disc}
Any morphism $t$ of $\cat{SOP}_{\operatorname{Clif}}^{\sground}$ such that $\interp{t}=0$ reduces to:
$$\sum_{y_0} e^{2i\pi\left(\frac{y_0}{2}\right)}\ket{0,\cdots,0}!\{\}\bra{0,\cdots,0}$$
\end{lemma}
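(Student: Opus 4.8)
The plan is to reduce the statement to the already-established pure case (Proposition~\ref{prop:zero-morphism}) by means of the functor $\operatorname{CPM}$. First I would invoke termination (Proposition~\ref{prop:clif!-terminates}) together with the soundness of the rules (each rule of $\rewrite{\operatorname{Clif}\raisebox{0.5ex}\sground}$ preserves $\interp{.}$) to reduce $t$ to an irreducible normal form $t'$, which still satisfies $\interp{t'}=\interp{t}=0$. By the corollary following Lemma~\ref{lem:clif-disc-form}, $t'$ is of the shape described there, so there are two cases: the \emph{degenerate} case, in which the phase polynomial is $P=\frac{y_0}2$ with $y_0\notin\Var(\vec O,\vec I,\vec D)$, and the \emph{generic} case, in which $\Var(P)\subseteq\Var(\vec O,\vec I,\vec D)$. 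The goal is to rule out the generic case entirely, and then to show that irreducibility collapses the degenerate case to the announced form.

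For the generic case I would compute $\interp{t'}=\interp{\operatorname{CPM}(t')}$ explicitly. Since in this normal form $\vec D$ is a set of variables $\{\vec y_d\}$, summing out the auxiliary variables introduced by $\operatorname{CPM}$ forces the two copies of $\vec y_d$ to coincide, yielding $\interp{t'}=s^2\sum_{\vec c}\interp{M(\vec c)}\otimes\overline{\interp{M(\vec c)}}$, where $M(\vec c)$ is the pure Clifford morphism obtained by substituting the constant assignment $\vec y_d\leftarrow\vec c$ into $t'$ (with the discard dropped). In the Choi picture each summand is a rank-one positive operator (the Choi matrix of $\rho\mapsto\interp{M(\vec c)}\,\rho\,\interp{M(\vec c)}^\dagger$), so the sum vanishes if and only if $\interp{M(\vec c)}=0$ for every $\vec c$. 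Hence $\interp{t'}=0$ forces, in particular, that the pure morphism $v:=\frac1{\sqrt2^p}\sum_{\vec w,\vec y_d}e^{2i\pi P}\ketbra{\vec O}{\vec I}$ obtained from $t'$ by turning the discard into an ordinary sum has $\interp{v}=\frac1{\sqrt2^p}\sum_{\vec c}\interp{M(\vec c)}=0$ as well.

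The core step is then to show that $v$ is \emph{not} null, contradicting $\interp{t'}=0$. Here I would exploit that the normal form guarantees $\{\vec y_d\}\subseteq\Var(\vec O,\vec I)$, hence $\Var(\vec D)\subseteq\Var(\vec O,\vec I)$, so the side conditions of (Elim), (HH), ($\omega$) and (Z)—which in $\rewrite{\operatorname{Clif}\raisebox{0.5ex}\sground}$ additionally forbid the eliminated variable to occur in $\Var(\vec D)$—coincide with their pure counterparts once applied to $v$. Consequently none of these rules applies to $v$ (they did not apply to $t'$), and $v$ inherits $\Var(P)\subseteq\Var(\vec O,\vec I)$ while avoiding the degenerate shape. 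The only reductions that could still fire on $v$ are (ket)/(bra); but these are invertible affine changes of variable that neither create a free variable nor change nullness, so reducing them away leaves an irreducible pure morphism with $\Var(P)\subseteq\Var(\vec O,\vec I)$, which is non-null by Corollary~\ref{cor:non-zero-morphism}. This contradiction eliminates the generic case.

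Finally, in the degenerate case $P=\frac{y_0}2$ with $y_0$ fresh, I would read off the canonical form directly from irreducibility: with $R=0$, non-applicability of (Z) forces $\vec O\vec I=\vec 0$; then $\Var(\vec O,\vec I)=\varnothing$ together with $\{\vec y_d\}\subseteq\Var(\vec O,\vec I)$ forces $\vec D=\{\}$, and non-applicability of (Elim) forces $y_0$ to be the only remaining variable, the scalar being normalised to $1$ by the (Z)-step that produced this shape. This is exactly $\sum_{y_0}e^{2i\pi\frac{y_0}2}\ket{0,\cdots,0}!\{\}\bra{0,\cdots,0}$. I expect the main obstacle to be the core step above: rigorously controlling the remaining (ket)/(bra) reductions of $v$ so as to certify that its interpretation is genuinely non-null, i.e.\ that passing from the irreducible $t'$ to its un-discarded companion $v$ cannot accidentally manufacture a null pure morphism.
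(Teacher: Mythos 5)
Your architecture is genuinely different from the paper's. The paper proves this lemma by re-running the peeling argument of Proposition~\ref{prop:zero-morphism} directly on the discard term: assuming an irreducible reduct different from the announced form, it builds evaluation vectors entry by entry, the discarded variables being instantiated along the way because $\{\vec y_d\}\subseteq\Var(\vec O,\vec I)$. You instead decompose $\interp{t'}=\interp{\operatorname{CPM}(t')}$ as $s^2\sum_{\vec c}\interp{M(\vec c)}\otimes\overline{\interp{M(\vec c)}}$ and use positivity (via the realignment/Choi picture) to conclude that \emph{every} slice $M(\vec c)$ must vanish. That part is correct: it is essentially Proposition~\ref{prop:F-is-CPM} together with the fact that a sum of operators of the form $A\otimes\overline{A}$ vanishes only if each $A$ does. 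Your exclusion of (Elim), (HH), ($\omega$), (Z) on $v$ is also correct, since $\Var(\vec D)\subseteq\Var(\vec O,\vec I)$ makes the discard-side conditions of those rules follow from the pure ones.

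The genuine gap is exactly where you anticipated it: the claim that (ket)/(bra) ``neither create a free variable'' is false for the pure system. Counterexample: on $\sum_{y_0,z}e^{2i\pi\frac{y_0z}{2}}\ket{y_0\oplus z}$ the rule (ket) applies with $O_1'=z$, producing $\sum_{y_0,z}e^{2i\pi\frac{z-y_0z}{2}}\ket{y_0}$, in which $z$ has become internal; (HH) then fires. So after a (ket)/(bra) step on $v$ your invariant $\Var(P)\subseteq\Var(\vec O,\vec I)$ may be lost, the other rules may cascade, and nothing in your argument excludes that the reduction of $v$ terminates in the null normal form --- in which case $\interp{v}=0$ produces no contradiction and the generic case is not ruled out. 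What is missing is a structural analysis of $\rewrite{\operatorname{Clif}\raisebox{0.5ex}\sground}$-irreducible terms: since the side condition ``$y_0\notin\Var(\vec D)$ or $\{y_0,O_i'\}\subseteq\vec D\cup\{1\}$'' of the discard (ket)/(bra) is automatically satisfied when the fresh variable $y_0$ is \emph{not} discarded, irreducibility of $t'$ forces every entry containing a fresh non-discarded variable to be exactly that lone variable; the remaining entries are either combinations of variables occurring in earlier entries, or of the form $y_d\oplus Q$ with $y_d$ fresh discarded and $Q$ a nonzero combination of earlier non-discarded variables. Once this is established, the cleanest repair avoids $v$ altogether: instantiate the discarded variables and consider $M(\vec 0)=t'[\vec y_d\leftarrow\vec 0]$ with the discard dropped. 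The structure above shows that $M(\vec 0)$ has no internal variable and that pure (ket)/(bra) cannot fire on it, so it is irreducible with $\Var(P)\subseteq\Var(\vec O,\vec I)$, and Corollary~\ref{cor:non-zero-morphism} gives $\interp{M(\vec 0)}\neq 0$, contradicting directly that all slices vanish. This keeps your positivity strategy intact while closing the hole.

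A smaller point: in the degenerate case you normalise the scalar by asserting that the shape ``was produced by a (Z)-step''. That is not forced; for instance (HH) applied to $\frac1{\sqrt2}\sum e^{2i\pi\left(\frac{y_0y_0'}{2}+\frac{z}{2}\right)}\ketbra{y_0'}{}$ outputs the degenerate shape with scalar $\sqrt2$, which is irreducible because (Z) is blocked when $R=0$ and $\vec O\vec I=\vec 0$. However, the paper's own proofs of Proposition~\ref{prop:zero-morphism} and of this lemma gloss over the very same scalar issue, so this is a weakness you share with the paper rather than one specific to your attempt.
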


\begin{proof}
In appendix at page \pageref{prf:zero-morphisms-disc}.
\end{proof}

\begin{corollary}
\label{cor:non-zero-morphisms-disc}
If a morphism $t$ of $\cat{SOP}_{\operatorname{Clif}}^{\sground}$ is terminal with $\Var(P)\subseteq\Var(\vec O,\vec D,\vec I)$, then $\interp{t}\neq 0$.
\end{corollary}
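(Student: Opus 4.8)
The plan is to prove the statement by contraposition, packaging Lemma~\ref{lem:zero-morphisms-disc} with a short rewriting argument. Writing $t=\frac1{\sqrt2^p}\sum e^{2i\pi P}\ket{\vec O}!\vec D\bra{\vec I}$, I would assume $\interp{t}=0$ and aim to show that the variable-containment hypothesis must then fail. Under the assumption $\interp{t}=0$, Lemma~\ref{lem:zero-morphisms-disc} applies directly and guarantees that $t$ reduces, under $\rewrite{\operatorname{Clif}\raisebox{0.5ex}\sground}$, to the canonical zero-morphism
$$z:=\sum_{y_0} e^{2i\pi\left(\frac{y_0}{2}\right)}\ket{0,\cdots,0}!\{\}\bra{0,\cdots,0}.$$

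The crux is then to use terminality to upgrade ``$t$ reduces to $z$'' into ``$t$ equals $z$''. Since $t$ is terminal, i.e.\ irreducible (no rule of $\rewrite{\operatorname{Clif}\raisebox{0.5ex}\sground}$ applies to it), and the reduction supplied by the lemma lives in the reflexive--transitive closure $\overset{\ast}{\rewrite{}}$, the only admissible reduct of $t$ is $t$ itself, so $t=z$ up to $\alpha$-conversion. It then suffices to read off the syntactic data of $z$: its phase polynomial is $P=\frac{y_0}{2}$, hence $\Var(P)=\{y_0\}$, while its ket, bra and discard data are $\vec O=\vec 0$, $\vec I=\vec 0$ and $\vec D=\{\}$, so that $\Var(\vec O,\vec D,\vec I)=\varnothing$. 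Thus $\Var(P)\not\subseteq\Var(\vec O,\vec D,\vec I)$, contradicting the hypothesis, and therefore $\interp{t}\neq 0$.

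I expect no genuine obstacle here, as the corollary is essentially the contrapositive of Lemma~\ref{lem:zero-morphisms-disc} restricted to normal forms. The only point deserving care is the middle rewriting step, namely justifying that a terminal $t$ reducing to $z$ must in fact coincide with $z$. This relies on the easily checked observation that $z$ is itself irreducible: its phase is exactly $\frac{y_0}{2}$ with $y_0$ absent from its trivial ket and bra and from its empty discard set, so in particular rule~(Z) is blocked by its side condition ($R=0$ and $\vec O\vec I=\vec 0$), while every remaining rule fails for want of a suitable second variable, a nonzero ket/bra component, or a nonempty discard set. Hence the reduction furnished by the lemma must have length zero when started from an irreducible term, which forces $t=z$ and closes the argument.
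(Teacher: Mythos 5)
Your proof is correct and follows precisely the argument the paper leaves implicit: the corollary is the contrapositive of Lemma~\ref{lem:zero-morphisms-disc}, where terminality of $t$ forces the reduction supplied by that lemma to have length zero, so $t$ would itself be the zero normal form $\sum_{y_0} e^{2i\pi\frac{y_0}{2}}\ket{0,\cdots,0}!\{\}\bra{0,\cdots,0}$, whose phase variable $y_0$ lies outside $\Var(\vec O,\vec D,\vec I)=\varnothing$, contradicting the hypothesis. One small remark: your closing verification that the zero normal form is itself irreducible is superfluous, since the length-zero conclusion already follows from the irreducibility of $t$ alone (any nonempty reduction sequence would begin with a rewrite step applied to $t$).
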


\begin{theorem}[Completeness for Clifford]
\label{thm:disc-completeness}
Let $t_1$ and $t_2$ be two morphisms of $\cat{SOP}_{\operatorname{Clif}}^{\sground}$ such that $\interp{t_1} = \interp{t_2}$. If $t_1'$ and $t_2'$ are terminal such that $t_1\overset{*}{\rewrite{\operatorname{Clif}\raisebox{0.5ex}\sground}}t_1'$ and $t_2\overset{*}{\rewrite{\operatorname{Clif}\raisebox{0.5ex}\sground}}t_2'$, then $t_1' = t_2'$ up to $\alpha$-conversion.
\end{theorem}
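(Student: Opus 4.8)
Since $\rewrite{\operatorname{Clif}\raisebox{0.5ex}\sground}$ terminates (Proposition \ref{prop:clif!-terminates}) and its rules are sound (they preserve $\interp{.}$), any terminal forms satisfy $\interp{t_1'}=\interp{t_1}=\interp{t_2}=\interp{t_2'}$. It therefore suffices to show that a \emph{terminal} morphism of $\cat{SOP}_{\operatorname{Clif}}^{\sground}$ is determined up to $\alpha$-conversion by its interpretation. The plan is to split on whether this common interpretation is $0$, treat the null case with the dedicated zero-reduction lemma, and reduce the generic case to the pure Clifford completeness theorem via a purification argument.

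\textbf{The zero case.} Suppose $\interp{t_1}=\interp{t_2}=0$. I would first observe that each $t_i'$ is of the reduced shape of Lemma \ref{lem:clif-disc-form}. It cannot satisfy $\Var(P)\subseteq\Var(\vec O,\vec D,\vec I)$, for then Corollary \ref{cor:non-zero-morphisms-disc} would force $\interp{t_i'}\neq 0$; hence $t_i'$ falls in the remaining alternative $P=\frac{y_0}{2}$ with $y_0\notin\Var(\vec O,\vec I,\vec D)$. Together with the minimality conditions of Lemma \ref{lem:clif-disc-form} (which collapse the kets, bras and discards via (ket), (bra), (disc) and (Elim)) and Lemma \ref{lem:zero-morphisms-disc}, this pins the terminal form down to the canonical null morphism $\sum_{y_0} e^{2i\pi\frac{y_0}{2}}\ket{0,\cdots,0}!\{\}\bra{0,\cdots,0}$; one checks directly that this morphism is itself terminal. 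As $t_1$ and $t_2$ share their type, the two null forms coincide, so $t_1'=t_2'$.

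\textbf{The non-zero case.} Assume $\interp{t_1}=\interp{t_2}\neq 0$. By Lemma \ref{lem:clif-disc-form} both $t_i'$ have the minimal form
$$\frac{1}{\sqrt2^{p_i}}\sum_{\vec y,\vec y_d} e^{2i\pi\left(\frac14 P_i^{(1)}(\vec y)+\frac12 P_i^{(2)}(\vec y,\vec y_d)\right)}\ket{\vec O_i}!\{\vec y_d\}\bra{\vec I_i}$$
subject to the linearity and minimality constraints of that lemma. The goal is to recover $p_i$, the discarded variables, $\vec O_i,\vec I_i$ and $P_i^{(1)},P_i^{(2)}$ from $\interp{t_i'}=\interp{\operatorname{CPM}(t_i')}$. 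I would unfold $\operatorname{CPM}$: because $\vec D=\{\vec y_d\}$ consists of bare variables, its auxiliary sum contributes the factor $\tfrac{\vec y_d^{(1)}\cdot\vec y+\vec y_d^{(2)}\cdot\vec y}{2}$, whose summation over $\vec y$ forces the two copies of each discarded variable to agree (the same $\epsilon$-style cancellation used earlier to build the compact structure). The resulting explicit matrix then lets me read off the data: its support fixes the affine maps $\vec O_i,\vec I_i$ and which variables are discarded, Lemma \ref{lem:polynomial-uniqueness} applied to the exponents fixes $P_i^{(1)},P_i^{(2)}$, and the common magnitude fixes $p_i$.

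\textbf{Main obstacle.} The delicate point is that the discarded variables are only seen through the doubled, environment-diagonal structure of $\operatorname{CPM}$, so a completely positive map a priori determines its dilation only up to an isometry on the environment. Uniqueness holds precisely because of the minimality baked into Lemma \ref{lem:clif-disc-form}: the discard set is reduced to variables, each discarded variable occurs in $\vec O$ or $\vec I$ (so it is genuinely observed), and no monomial of $P^{(2)}$ uses only discarded variables (so no redundant phase lives on the environment alone). I would make this precise by reading the minimal form as a canonical Stinespring-type dilation: promoting $!\{\vec y_d\}$ to extra outputs yields a \emph{pure} $\cat{SOP}_{\operatorname{Clif}}$ morphism $\hat t_i$ in normal form, and the minimality conditions guarantee that $\interp{t_1'}=\interp{t_2'}$ lifts to $\interp{\hat t_1}=\interp{\hat t_2}$ up to a relabelling of $\vec y_d$. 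Applying the pure completeness result (Theorem \ref{thm:clif-completeness}) to $\hat t_1,\hat t_2$ then gives $\hat t_1=\hat t_2$ up to $\alpha$-conversion, hence $t_1'=t_2'$. Verifying that these minimality conditions truly rigidify the environment, ruling out any nontrivial environment isometry within the Clifford normal forms, is the heart of the argument and the step I expect to demand the most care.
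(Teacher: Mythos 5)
Your overall skeleton (termination plus soundness, a separate zero case via Lemma \ref{lem:zero-morphisms-disc}, then a reduction of the non-zero case to pure Clifford completeness) matches the paper's intent, and the zero case is essentially fine. The gap is in the non-zero case, which is the substance of the theorem, and you flag it yourself: the claim that the minimality conditions of Lemma \ref{lem:clif-disc-form} guarantee that $\interp{t_1'}=\interp{t_2'}$ lifts to $\interp{\hat t_1}=\interp{\hat t_2}$ up to a relabelling of $\vec y_d$ is not a verification to be deferred -- it \emph{is} the theorem. Equality of completely positive maps determines even a minimal Stinespring dilation only up to a unitary on the environment, and for Clifford maps that unitary can be any Clifford unitary, not merely a permutation of variables. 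Concretely, $\sum_y\ket{}!\{y\}\bra{y}$ and $\frac1{\sqrt2}\sum_{y,z}e^{2i\pi\frac{yz}{2}}\ket{}!\{z\}\bra{y}$ (i.e.\ $!\circ id$ and $!\circ H$) are equal as CP maps, yet their dilations $id$ and $H$ have different interpretations; the discrepancy is resolved only because the rewrite rule (H\sground) reduces the second term to the first. So the rigidification of the environment is achieved by the rewrite rules together with terminality, not by the shape of Lemma \ref{lem:clif-disc-form} alone, and any proof of your lift must engage with the rules. Your other suggested mechanism -- reading off $\vec O_i,\vec I_i,\vec y_d,P_i$ from the explicit matrix of $\operatorname{CPM}(t_i')$ -- has the same status: ``its support fixes the data'' is exactly the injectivity of $\interp{.}$ on terminal forms that has to be proven (compare the full inductive argument this required in the pure case, Theorem \ref{thm:clif-completeness}).

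The paper sidesteps dilation uniqueness by doubling instead of purifying. Proposition \ref{prop:F-is-CPM} provides a map $F$ (a simplified CPM construction) with $\interp{F(t_i')}=\interp{t_i'}$, so equality of the mixed interpretations yields an \emph{exact} equality $\interp{F(t_1')}=\interp{F(t_2')}$ of pure morphisms, with no residual isometry freedom. Pure completeness (Theorem \ref{thm:clif-completeness}) then gives a common reduct $t'$ with $F(t_1')\overset*{\rewrite{\operatorname{Clif+}}}t'\overset*{\underset{\operatorname{Clif+}}\longleftarrow}F(t_2')$. The environment ambiguity reappears in a manageable form: $t'$ is some reduct, not visibly $F$ of anything, and this is what Proposition \ref{prop:G-cd} handles. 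It constructs a partial inverse $G$ of $F$ and proves, by induction along reduction paths and using terminality of $t_i'$ (via the rules (ket) and (bra)), that $G$ stays defined and constant on every term reachable from $F(t_i')$, whence $t_1'=G(t')=t_2'$. If you wanted to complete your Stinespring route, the rigidity lemma you would need is essentially Proposition \ref{prop:G-cd} in disguise; the doubling formulation is preferable precisely because it trades a uniqueness-of-dilation statement for a rewrite-invariance statement that can be checked rule by rule.
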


To prove this theorem, we suggest to use the similar result in $\cat{SOP}_{\operatorname{Clif}}$, and transport it to our case. To do so, we need some additional constructions.

\begin{definition}
We define $\overline{\cat{SOP}_{\operatorname{Clif}}^{\sground}}$ as the set of morphisms of $\cat{SOP}_{\operatorname{Clif}}^{\sground}$ in the form given in Lemma \ref{lem:clif-disc-form}. We define the function $F$ on $\overline{\cat{SOP}_{\operatorname{Clif}}^{\sground}}$ such that, for any morphism
$$t= \frac1{\sqrt2^{p}}\sum_{\vec y,\vec y_d} e^{2i\pi P(\vec y,\vec y_d)}\ket{\vec O(\vec y,\vec y_d)}!\{\vec y_d\}\bra{\vec I(\vec y,\vec y_d)}$$
of $\overline{\cat{SOP}_{\operatorname{Clif}}^{\sground}}$:
$$F(t):= \frac1{\sqrt2^{2p}}\sum_{\vec y,\vec y',\vec y_d} e^{2i\pi\left(P(\vec y,\vec y_d)-P(\vec y',\vec y_d)\right)}\ketbra{\vec O(\vec y,\vec y_d),\vec O(\vec y',\vec y_d)}{\vec I(\vec y,\vec y_d),\vec I(\vec y',\vec y_d)}$$
\end{definition}

This new functor $F$ can be seen as a simplified CPM construction, in the case where the term is already simplified.

\begin{proposition}
\label{prop:F-is-CPM}
For any $t\in\overline{\cat{SOP}_{\operatorname{Clif}}^{\sground}}$, $F(t){\underset{\operatorname{Clif+}}\sim}\operatorname{CPM}(t)$. This implies $\interp{F(.)}=\interp{\operatorname{CPM}(.)}$.
\end{proposition}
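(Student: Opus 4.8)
The plan is to exhibit an explicit rewrite sequence $\operatorname{CPM}(t)\overset{*}{\rewrite{\operatorname{Clif+}}}F(t)$ using only the rule (HH); this gives $F(t)\underset{\operatorname{Clif+}}\sim\operatorname{CPM}(t)$, and the semantic identity $\interp{F(.)}=\interp{\operatorname{CPM}(.)}$ then follows because $\rewrite{\operatorname{Clif+}}$ preserves the standard interpretation. Note that both $F(t)$ and $\operatorname{CPM}(t)$ are pure morphisms of $\cat{SOP}$, so the pure version of (HH) from Figure~\ref{fig:rewrite-rules-1-bis} is the relevant one.

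First I would write out $\operatorname{CPM}(t)$ for $t$ in the form of Lemma~\ref{lem:clif-disc-form}. Since the discarded set there is reduced to a set of variables $\{\vec y_d\}$, we have $\abs{\vec D}=d$, the number of discarded variables, and $\vec D$ evaluated on a copy of the bound variables is just that copy of $\vec y_d$. Denoting the two copies produced by CPM as $(\vec y^{(1)},\vec y_d^{(1)})$ and $(\vec y^{(2)},\vec y_d^{(2)})$, and the $d$ fresh variables of the CPM construction as $\vec z$, the coupling term $\frac{\vec D(\vec y_1)\cdot\vec z+\vec D(\vec y_2)\cdot\vec z}{2}$ of the CPM definition becomes $\sum_i\frac{z_i}{2}\left(y_{d,i}^{(1)}+y_{d,i}^{(2)}\right)$.

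The key observation is that each summand $\frac{z_i}{2}\left(y_{d,i}^{(1)}+y_{d,i}^{(2)}\right)$ is exactly the left-hand pattern of (HH), with $y_0=z_i$, $y_0'=y_{d,i}^{(1)}$ and $Q=\widehat Q=y_{d,i}^{(2)}$. I would then apply (HH) once for each $i$. The side conditions hold: $z_i$ is fresh, so it occurs only in its own coupling term and hence not in $R$, $Q$, $\vec O$ or $\vec I$; and $y_{d,i}^{(1)}\neq y_{d,i}^{(2)}$ gives $y_0'\notin\Var(Q)$. Each application deletes $z_i$ and $y_{d,i}^{(1)}$, performs the substitution $y_{d,i}^{(1)}\leftarrow y_{d,i}^{(2)}$ throughout $P$, $\vec O$ and $\vec I$ (this substitution is genuine since $\{\vec y_d\}\subseteq\Var(\vec O,\vec I)$, and is handled precisely by the $[y_0'\leftarrow Q]$ clause of the rule), and multiplies the scalar by $2$. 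Eliminating $z_j$ for $j<i$ leaves $z_i$ fresh, so the conditions persist through all $d$ steps.

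After the $d$ applications the scalar becomes $\frac{1}{2^p}\cdot\frac{1}{2^d}\cdot 2^d=\frac{1}{\sqrt2^{2p}}$, the variables $\vec y_d^{(1)}$ and $\vec y_d^{(2)}$ have been identified, and renaming $\vec y^{(1)},\vec y^{(2)},\vec y_d^{(2)}$ to $\vec y,\vec y',\vec y_d$ produces exactly $F(t)$, completing the rewrite sequence. The main obstacle is purely the bookkeeping: matching the coupling term to the (HH) pattern and checking that the side conditions and substitutions compose correctly across the $d$ steps. Conceptually there is nothing deeper, since the CPM construction splits each discarded variable into two independent copies while $F$ shares a single copy, and (HH), being an $\epsilon$-type contraction, is precisely the rule that forces the two copies to coincide.
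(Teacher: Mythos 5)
Your proposal is correct and follows essentially the same route as the paper: the paper's own proof also writes out $\operatorname{CPM}(t)$ for $t$ in the form of Lemma~\ref{lem:clif-disc-form} and contracts the two copies $\vec y_d^{(1)},\vec y_d^{(2)}$ of the discarded variables by repeated application of (HH) on the coupling terms, yielding $F(t)$ with the scalar $\frac1{\sqrt2^{2p}}$. Your write-up merely makes explicit the pattern matching, the side-condition checks, and the scalar bookkeeping that the paper leaves implicit in its one-line rewrite chain.
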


\begin{proof}
\begin{align*}
&\operatorname{CPM}(t) = \frac{1}{2^{p+\abs{\vec y_{d_1}}}}\sum_{\vec y_1,\vec y_{d_1},\vec y_2,\vec y_{d_2},\vec y}\hspace*{-2em} e^{2i\pi\left(P(\vec y_1,\vec y_{d_1})-P(\vec y_2,\vec y_{d_2})+ \frac{\vec y_{d_1}\cdot\vec y+\vec y_{d_2}\cdot\vec y}2 \right)}\hspace*{-8em}\raisebox{-0.8em}{$\ketbra{\vec O(\vec y_1,\vec y_{d_1}),\vec O(\vec y_2,\vec y_{d_2})}{\vec I(\vec y_1,\vec y_{d_1}),\vec I(\vec y_2,\vec y_{d_2})}$}\\
&\overset\ast{\rewrite{\text{(HH)}}}\frac1{\sqrt2^{2p}}\sum_{\vec y_1,\vec y_2,\vec y_{d_1}} e^{2i\pi\left(P(\vec y_1,\vec y_{d_1})-P(\vec y_2,\vec y_{d_1})\right)}\ketbra{\vec O(\vec y_1,\vec y_{d_1}),\vec O(\vec y_2,\vec y_{d_1})}{\vec I(\vec y_1,\vec y_{d_1}),\vec I(\vec y_2,\vec y_{d_1})} = F(t)
\end{align*}
\end{proof}

\begin{definition}
We define the function $G$ on some morphisms of $\cat{SOP}_{\operatorname{Clif}}$.\\
Let $t=\frac1{\sqrt2^p}\sum_{\vec y}e^{2i\pi P}\ketbra{\vec O_1,\vec O_2}{\vec I_1,\vec I_2}$ such that:
\begin{itemize}
\item $p = 2p'$
\item $\abs{\vec O_1}=\abs{\vec O_2}$ and $\abs{\vec I_1}=\abs{\vec I_2}$
\item $\{\vec y_d\} := \{\vec y\} \setminus \Var(\vec O_1\oplus \vec O_2, \vec I_1\oplus \vec I_2)$
\item $\{\vec y_1\}:= \Var(\vec O_1,\vec I_1)\setminus \{\vec y_d\}$
\item $\{\vec y_2\}:=\left(\{\vec y\}\setminus\{\vec y_1\}\right)\setminus\{\vec y_d\}$
\item $\abs{\vec y_1}=\abs{\vec y_2}$
\item there exists a unique bijection $\delta:\{\vec y_2\}\to\{\vec y_1\}$ such that $(\vec O_1\oplus \vec O_2, \vec I_1\oplus \vec I_2)[\vec y_2\leftarrow \delta(\vec y_2)]=\vec 0$
\end{itemize}
then $G(t)$ is defined, and:
$$G(t):=\frac1{\sqrt2^{p'}}\sum_{\vec y_1,\vec y_d}e^{-2i\pi P[\vec y_1{\leftarrow} \vec 0][\vec y_2{\leftarrow}\delta(\vec y_2)]}\ket{\vec O_2[\vec y_1{\leftarrow} \vec 0][\vec y_2{\leftarrow}\delta(\vec y_2)]}!\{\vec y_d\}\bra{\vec I_2[\vec y_1{\leftarrow} \vec 0][\vec y_2{\leftarrow}\delta(\vec y_2)]}$$
\end{definition}

The function $G$ is designed to be an inverse of $F$ for some morphisms, while at the same being impervious to some rewrite rules.

\begin{proposition}
\label{prop:G-cd}
Let $t$ be terminal with $\rewrite{\operatorname{Clif}\raisebox{0.5ex}\sground}$. Then, the following diagram commutes up to $\alpha$-conversion:
$$\tikzfig{disc-cpm-cd}$$
for any $t'$ obtained by reducing $F(t)$.
\end{proposition}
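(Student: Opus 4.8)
The plan is to prove that the diagram commutes by showing that the composite ``$G$ after reduction after $F$'' is the identity on terminal morphisms; concretely, the content of the diagram is that $G$ recovers $t$ from any reduct $t'$ of $F(t)$, i.e.\ $G(t')=t$ up to $\alpha$-conversion. I would deduce this from two facts: a direct computation $G(F(t))=t$, and the invariance of $G$ under every rewrite step that can fire on $F(t)$ and its reducts. The crucial preliminary observation is that \emph{only} (ket) and (bra) can ever fire: since $t$ is terminal, Lemma~\ref{lem:clif-disc-form} forces every variable of $t$ — including all of $\vec y_d$, as $\{\vec y_d\}\subseteq\Var(\vec O,\vec I)$ — to occur in $\vec O$ or $\vec I$, and $F$ preserves this property; but (HH), ($\omega$) and (Z) each need a variable occurring \emph{only} in the phase, and (Elim) a variable occurring nowhere, so none of them applies. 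As (ket) and (bra) are merely involutive changes of variable, they leave the scalar untouched, so $p$ stays even and the parity condition $p=2p'$ of $G$ is preserved throughout.

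First I would fix $t$ in the form of Lemma~\ref{lem:clif-disc-form} and compute $F(t)$, which doubles the free variables into $\vec y,\vec y'$ while \emph{sharing} $\vec y_d$, with phase $P(\vec y,\vec y_d)-P(\vec y',\vec y_d)$. Because $\vec O,\vec I$ are linear, the pair $(\vec O_1\oplus\vec O_2,\vec I_1\oplus\vec I_2)$ is the homogeneous part of $t$'s ket/bra evaluated at $\vec y\oplus\vec y'$, so $G$ reads off $\{\vec y_d\}_G=\{\vec y_d\}$, $\{\vec y_1\}=\{\vec y\}$ and $\{\vec y_2\}=\{\vec y'\}$; the triangular shape of the terminal normal form makes the columns of that linear map pairwise distinct, pinning down $\delta(\vec y')=\vec y$ as the \emph{unique} admissible bijection. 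Then $\vec O_2[\vec y_1\leftarrow\vec0][\vec y_2\leftarrow\delta]=\vec O(\vec y,\vec y_d)$ and likewise for $\vec I_2$, while the sign flip in $G$ turns the phase into $P(\vec y,\vec y_d)-P(\vec0,\vec y_d)$; the spurious constant $P(\vec0,\vec y_d)$ vanishes because $t$ is terminal, so $P^{(2)}$ carries no purely-$\vec y_d$ monomial (Lemma~\ref{lem:clif-disc-form}) and (Z\sground) has already erased every $\tfrac14 y_{d_k}$ term from $P^{(1)}$. This yields $G(F(t))=t$ exactly.

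Next I would prove the invariance lemma: if $s\rewrite{\text{(ket)}}s'$ or $s\rewrite{\text{(bra)}}s'$ and $G(s)$ is defined, then $G(s')$ is defined and $G(s')=G(s)$ up to $\alpha$. Since (ket) and (bra) implement an involutive substitution $y_0\mapsto y_0\oplus O_i'$, it suffices to show that the three data $G$ extracts are stable under it: the correlation $(\vec O_1\oplus\vec O_2,\vec I_1\oplus\vec I_2)$ — and hence the sets $\vec y_d,\vec y_1,\vec y_2$ together with the matching $\delta$, up to the obvious reindexing of the substituted variable — and the folded datum $(\vec O_2,\vec I_2,P)[\vec y_1\leftarrow\vec0][\vec y_2\leftarrow\delta]$. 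Granting this, any reduct $t'$ of $F(t)$ satisfies $G(t')=G(F(t))=t$; confluence of $\rewrite{\operatorname{Clif+}}$ on $\cat{SOP}_{\operatorname{Clif}}$ (Theorem~\ref{thm:clif-completeness}) then guarantees that every terminal $t'$ is the same up to $\alpha$, closing the diagram.

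The main obstacle is exactly this invariance. The difficulty is that the substitution performed by (ket)/(bra) is \emph{global} and need not respect the primed/unprimed splitting: because a monomial $\tfrac12 y_i y_{d_k}$ of $P^{(2)}$ becomes the mixed term $\tfrac12 y_{d_k}(y_i\oplus y_i')$ in $F(t)$, a change of variable triggered on one copy can drag a shared discarded variable $y_{d_k}$ — and thus the other copy — along with it, so the clean split $\vec O_1,\vec O_2$ is only recovered after $G$ recomputes $\delta$. I expect to handle this by showing that the induced map $\{\vec y_2\}\to\{\vec y_1\}$ is again the unique column-matching bijection (using that distinctness of columns is preserved by the relabeling) and that the substitution commutes with the folding $[\vec y_1\leftarrow\vec0][\vec y_2\leftarrow\delta]$; confluence from Theorem~\ref{thm:clif-completeness} then lets me restrict attention to one convenient reduction order, keeping the case analysis finite.
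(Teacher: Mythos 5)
Your overall architecture is the same as the paper's: compute $G(F(t))=t$ directly from the terminal form of Lemma~\ref{lem:clif-disc-form}, observe that $F(t)$ and its reducts have no internal variables so that only (ket) and (bra) can ever fire, and then propagate invariance of $G$ along every single reduction step. Those two observations, and your computation of $G(F(t))$ (including why $P(\vec 0,\vec y_d)=0$), are correct and match the paper. The genuine gap is your invariance lemma: as stated --- ``if $G(s)$ is defined and $s\rewrite{\text{(ket)}}s'$ then $G(s')$ is defined and $G(s')=G(s)$'' --- it is \emph{false}. Take
$$s=\sum \ket{y_1{\oplus}y_2,\ y_2,\ y_3{\oplus}y_4,\ y_4}\bra{}.$$
Here $G(s)$ is defined: $\{\vec y_d\}=\varnothing$, $\{\vec y_1\}=\{y_1,y_2\}$, $\{\vec y_2\}=\{y_3,y_4\}$, and $\delta(y_3)=y_1$, $\delta(y_4)=y_2$ is the unique bijection cancelling $\vec O_1\oplus\vec O_2$. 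But (ket) fires on the first coordinate with $y_0=y_1$, $O_1'=y_2$, producing $s'=\sum\ket{y_1,\ y_2,\ y_3{\oplus}y_4,\ y_4}\bra{}$, and for $s'$ \emph{no} bijection $\delta$ cancels $\vec O_1\oplus\vec O_2=(y_1{\oplus}y_3{\oplus}y_4,\ y_2{\oplus}y_4)$, so $G(s')$ is undefined. The point is that $G$-definedness alone does not prevent (ket)/(bra) from substituting a non-shared variable of $\{\vec y_1\}$, which destroys the correlation between the two copies. (Note that $s=F(u)$ for the \emph{non-terminal} $u=\sum\ket{y_1{\oplus}y_2,y_2}!\{\}\bra{}$, so terminality of $t$ is exactly what is at stake.)

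What makes the induction work --- and what constitutes the bulk of the paper's proof --- is a strictly stronger invariant carried along every reduction path: besides ``$G$ is defined and equals $t$'', one maintains the triangular structure inherited from terminality of $t$: each $O^{(k)}_1$ is either a fresh single variable or has all its non-discarded variables among $\Var(O^{(1)}_1,\ldots,O^{(k-1)}_1)$, contains at most one variable of $\{\vec y_d\}$, and the analogous conditions hold for $\vec O_2,\vec I_1,\vec I_2$ with respect to the preceding blocks; moreover no variable ever becomes internal. These conditions force every applicable (ket)/(bra) redex to have the special shape $y_{d_k}\oplus O'$ with $y_{d_k}$ a \emph{shared discarded} variable and $\Var(O')\subseteq\{\vec y_1\}$, whence the substitution leaves $\vec O_1\oplus\vec O_2$, the partition $\{\vec y_d\},\{\vec y_1\},\{\vec y_2\}$ and $\delta$ unchanged, and commutes with the folding $[\vec y_1\leftarrow\vec 0][\vec y_2\leftarrow\delta(\vec y_2)]$. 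Your proposal would hit this wall when proving the lemma, and strengthening the hypothesis is not a routine patch but the actual content of the proof. Secondarily, your appeals to confluence cannot do the work you assign to them: the proposition quantifies over \emph{every} reduct $t'$ of $F(t)$, not just terminal ones, and confluence only identifies normal forms --- it says nothing about $G$ on intermediate terms of a different reduction order. Once the step-wise invariance is established with the strengthened invariant, confluence is not needed at all, which is how the paper proceeds.
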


\begin{proof}
In appendix at page \pageref{prf:G-cd}.
\end{proof}

\begin{proof}[Theorem \ref{thm:disc-completeness}]
Let $t_1$ and $t_2$ be two morphisms of $\cat{SOP}_{\operatorname{Clif}}^{\sground}$ such that $\interp{t_1}=\interp{t_2}$. Since $\rewrite{\operatorname{Clif}\raisebox{0.5ex}\sground}$ terminates by Proposition \ref{prop:G-cd}, both $t_1$ and $t_2$ reduce to respectively $t_1'$ and $t_2'$, two terminal morphisms of $\overline{\cat{SOP}_{\operatorname{Clif}}^{\sground}}$. By soundness, $\interp{t_1'}=\interp{t_2'}$, so, by Proposition \ref{prop:F-is-CPM}, $\interp{F(t_1')}=\interp{F(t_2')}$. By completeness of $\rewrite{\operatorname{Clif+}}$, we have $F(t_1')\overset*{\rewrite{\operatorname{Clif+}}}t'\overset*{\underset{\operatorname{Clif+}}\longleftarrow}F(t_2')$ up to $\alpha$-conversion. Finally, by Proposition \ref{prop:G-cd}, $t_1'=G(t')=t_2'$ up to $\alpha$-conversion:
$$\tikzfig{completeness-disc-cd}$$
\end{proof}

\begin{remark}
Interestingly, the previous proposition and theorem show that the simplification of a term of $\cat{SOP}_{\operatorname{Clif}}^{\sground}$ can be operated in the ``pure'' setting, and then $G$ can be used to retrieve the normal form. More precisely:
$$\tikzfig{ground-from-CPM-cd}$$
\end{remark}

\begin{corollary}
The equality of morphisms in $\cat{SOP}_{\operatorname{Clif}}^{\sground}/\underset{\operatorname{Clif}\raisebox{0.5ex}\sground}\sim$ is decidable in time polynomial in the size of the phase polynomial and in the combined size of the ket/bra/discarded polynomials.
\end{corollary}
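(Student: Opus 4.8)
The plan is to convert the completeness statement of Theorem~\ref{thm:disc-completeness} into an explicit decision procedure and then bound its cost. Given $t_1$ and $t_2$ in $\cat{SOP}_{\operatorname{Clif}}^{\sground}$, the procedure is simply to reduce each of them to a terminal form under $\rewrite{\operatorname{Clif}\raisebox{0.5ex}\sground}$, obtaining $t_1'$ and $t_2'$, and then to test whether $t_1' = t_2'$ up to $\alpha$-conversion. Correctness is immediate: Theorem~\ref{thm:disc-completeness} gives that $\interp{t_1}=\interp{t_2}$ forces $t_1' = t_2'$ up to $\alpha$-conversion, while soundness of the rewrite system (each rule preserves $\interp{.}$) yields the converse, since $t_i \underset{\operatorname{Clif}\raisebox{0.5ex}\sground}\sim t_i'$ and hence $\interp{t_i}=\interp{t_i'}$. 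Thus equality in the quotient $\cat{SOP}_{\operatorname{Clif}}^{\sground}/\underset{\operatorname{Clif}\raisebox{0.5ex}\sground}\sim$ is decided exactly by this comparison.

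First I would bound the size of every intermediate term. By the characterisation of $\cat{SOP}_{\operatorname{Clif}}^{\sground}$ morphisms recalled before Lemma~\ref{lem:clif-disc-form}, every reachable term keeps a phase of the shape $\frac18 P^{(0)}+\frac14 P^{(1)}+\frac12 P^{(2)}$ with $\deg P^{(i)}\le i$ and with $\vec O,\vec I,\vec D$ linear. Hence, with $k$ variables, the phase has $O(k^2)$ monomials and the ket/bra/discard data have size linear in $k$ and in the number of inputs, outputs and discards, so a term has size polynomial in these quantities. Crucially, no rule of $\rewrite{\operatorname{Clif}\raisebox{0.5ex}\sground}$ introduces a new variable, so $k$ can only decrease along a reduction and stays bounded by its initial value. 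A single rule application reduces to substituting a linear polynomial into a degree-$\le 2$ phase, reducing modulo $1$ and $X^2-X$, and some bookkeeping on the discard set; since the output is forced back into Clifford form, it again has $O(k^2)$ monomials, and each step costs time polynomial in the current (hence initial) size.

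Next I would bound the number of steps. The rules (Elim), (HH), ($\omega$) and (H\sground) each remove at least one variable and so fire at most $k$ times in total. For the remaining rules I would lean on the termination argument of Proposition~\ref{prop:clif!-terminates}, checking that its well-founded measure is itself bounded by a polynomial in $k$ and in the number of inputs, outputs and discards: the change-of-variable rules (ket), (bra), (disc) are controlled by the number of output, input and discard polynomials, and the purely simplifying rules (Z\sground), ($\oplus$\sground), (Cst\sground) shrink the discard set. Combined with the per-step cost, this gives a total running time polynomial in the size of the phase polynomial and in the combined size of the ket/bra/discarded polynomials. Finally, since the terminal form of Lemma~\ref{lem:clif-disc-form} is canonical up to the naming of its variables, the test $t_1'=t_2'$ up to $\alpha$-conversion amounts to matching their structured data under a single renaming, which is again polynomial.

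The main obstacle is the polynomial bound on the number of change-of-variable steps together with the control of the immersion $\widehat{(.)}$: in general, for a linear $O_i$ with $\ell$ summands, $\widehat{O_i}$ expands into up to $2^\ell-1$ monomials, so the argument genuinely relies on the degree-$\le 2$ constraint of the Clifford fragment, which collapses the substituted phase back to $O(k^2)$ monomials after reduction. Making the termination measure of Proposition~\ref{prop:clif!-terminates} explicit enough to read off a polynomial — rather than merely finite — bound on the number of (ket), (bra) and (disc) applications is the step that needs the most care; everything else is routine bookkeeping, entirely analogous to the pure setting treated by the corollary following Theorem~\ref{thm:clif-completeness}.
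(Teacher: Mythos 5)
The paper itself states this corollary without proof, treating it as an immediate consequence of Theorem~\ref{thm:disc-completeness} and Proposition~\ref{prop:clif!-terminates}; your decision procedure --- reduce both terms to terminal forms under $\rewrite{\operatorname{Clif}\raisebox{0.5ex}\sground}$ and compare up to $\alpha$-conversion, with correctness given by completeness in one direction and soundness of the rules in the other --- is exactly that intended argument. You also correctly isolate the two facts that make a polynomial bound possible at all: no rule of the system introduces a variable, and inside the Clifford fragment the immersion $\widehat{Q}$ of a linear polynomial collapses modulo $1$ and $X_i^2-X_i$ (multiplied by $\tfrac12$ or $\tfrac14$, all products of $\geq 2$, resp.\ $\geq 3$, variables acquire integer coefficients and vanish from the phase), so every intermediate term keeps a phase with $O(k^2)$ monomials and linear ket/bra/discarded polynomials. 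The $\alpha$-conversion test on the canonical forms of Lemma~\ref{lem:clif-disc-form} is indeed polynomial, since the renaming is forced by matching first occurrences of variables position by position.

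The genuine gap is in your bound on the \emph{number} of rewrite steps. You propose to check that the termination measure of Proposition~\ref{prop:clif!-terminates} ``is itself bounded by a polynomial'', but that inference is invalid: the measure is \emph{lexicographic}, its length $m+n+4$ grows with the type of the morphism, and a lex-decreasing chain in $\{0,\dots,B\}^L$ can have length $(B+1)^L$ --- polynomially bounded entries do not prevent exponentially many steps, precisely because a decrease in a high-priority entry allows every later entry to jump back up. This is not hypothetical here: an (HH) or (disc) substitution can re-inflate every $\operatorname{mon}(O_j)$, $\operatorname{mon}(I_j)$ and $\operatorname{mon}(P)$, and a (ket) firing at position $j$ can re-complicate all positions $i>j$, so your claim that (ket), (bra), (disc) are ``controlled by the number of output, input and discard polynomials'' is not true of an arbitrary reduction sequence. (A smaller slip: (Z\sground) shrinks the phase polynomial, not the discard set.) The clean repair exploits the freedom Theorem~\ref{thm:disc-completeness} gives you: since \emph{any} terminal form is canonical up to $\alpha$-conversion, the decision procedure may fix its own strategy. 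For instance: exhaust the variable-removing rules (Elim), (HH), ($\omega$), (H\sground) (at most $k$ firings over the whole run); then the discard rules, noting that only (HH) and (H\sground) can increase $\abs{\{D_i\mid\operatorname{mon}(D_i)\geq2\}}$, so (disc) fires at most $\abs{\vec D}+k$ times overall and ($\oplus$\sground), (Cst\sground) are similarly amortized; then run (ket) \emph{left to right} --- the side condition $y_0\notin\Var(O_1,\ldots,O_{i-1},O_i')$ guarantees that a firing at position $i$ never modifies positions $<i$, so a single pass of at most $m$ firings makes (ket) inapplicable --- then (bra), then (Z\sground) (at most $O(k^2)$ firings per pass). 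A pass may re-enable earlier rules (a (ket) substitution can render a variable internal), but each such return is charged to one of the $O(k+\abs{\vec D})$ firings of the higher-priority rules, so the total number of steps, and hence the running time by your size bound, is polynomial. With this strategy-plus-amortization argument substituted for the bound on the measure, your proof goes through; without it, the polynomial step count simply does not follow from termination of the lexicographic measure.
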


\section*{Conclusion and Further Work}

We have shown that $\cat{SOP}$ could represent any morphism of $\cat{Qubit}$, and that it could be enriched using the discard construction to include measurements. We have shown a correspondence between this formalism and graphical languages such as the ZH-Calculus, and we have provided two rewrite strategies for simplifying terms. We have shown that these are complete in the Clifford case.

This framework can be used to simplify Z*-diagrams: one simply needs to translate the diagram as a $\cat{SOP}$-morphism, simplify it, then translate the result as a diagram in the target language.

By applying the discard construction, we have extended the domain of use of $\cat{SOP}$ to programs that contain measurements. For instance, schemes for error detection/correction can now be studied/verified/simplified in the framework.

One of the obvious further developments of the framework is to use the completeness of (fragments of) Z*-Calculi and their interpretation to generate rewrite strategies complete for fragments larger than Clifford. On can also transport constructions that are known in the Z*-Calculi to perform non trivial operations on $\cat{SOP}$ morphisms.

Another important development of the framework would be to more easily represent families of processes. The recent enrichment SZX \cite{SZX} could be of help for this topic.

Finally, it could be interesting to see how graph-theoretic notions like the gflow \cite{mbqc} translate to $\cat{SOP}$. This particular notion could for instance allow to extract a quantum circuit from an arbitrary (isometry) $\cat{SOP}$-morphism.

\section*{Acknowledgements}
The author acknowledges support from the project PIA-GDN/Quantex. The author would like to thank Simon Perdrix, Emmanuel Jeandel and Benoît Valiron for fruitful discussions.


\begin{thebibliography}{10}
\providecommand{\bibitemdeclare}[2]{}
\providecommand{\surnamestart}{}
\providecommand{\surnameend}{}
\providecommand{\urlprefix}{Available at }
\providecommand{\url}[1]{\texttt{#1}}
\providecommand{\href}[2]{\texttt{#2}}
\providecommand{\urlalt}[2]{\href{#1}{#2}}
\providecommand{\doi}[1]{doi:\urlalt{http://dx.doi.org/#1}{#1}}
\providecommand{\bibinfo}[2]{#2}

\bibitemdeclare{inproceedings}{SOP}
\bibitem{SOP}
\bibinfo{author}{Matthew \surnamestart Amy\surnameend} (\bibinfo{year}{2019}):
  \emph{\bibinfo{title}{Towards Large-scale Functional Verification of
  Universal Quantum Circuits}}.
\newblock In \bibinfo{editor}{Peter \surnamestart Selinger\surnameend} \&
  \bibinfo{editor}{Giulio \surnamestart Chiribella\surnameend}, editors: {\sl
  \bibinfo{booktitle}{{\textrm Proceedings of the 15th International Conference
  on} Quantum Physics and Logic, {\textrm Halifax, Canada, 3-7th June 2018}}},
  {\sl \bibinfo{series}{Electronic Proceedings in Theoretical Computer
  Science}} \bibinfo{volume}{287}, pp. \bibinfo{pages}{1--21},
  \doi{10.4204/EPTCS.287.1}.

\bibitemdeclare{inproceedings}{pi_2-complete}
\bibitem{pi_2-complete}
\bibinfo{author}{Miriam \surnamestart Backens\surnameend}
  (\bibinfo{year}{2014}): \emph{\bibinfo{title}{The {ZX}-Calculus is Complete
  for Stabilizer Quantum Mechanics}}.
\newblock In: {\sl \bibinfo{booktitle}{New Journal of Physics}},
  \bibinfo{volume}{16}, \bibinfo{publisher}{{IOP} Publishing}, p.
  \bibinfo{pages}{093021}, \doi{10.1088/1367-2630/16/9/093021}.
\newblock
  \urlprefix\url{https://doi.org/10.1088%2F1367-2630%2F16%2F9%2F093021}.

\bibitemdeclare{inproceedings}{ZH}
\bibitem{ZH}
\bibinfo{author}{Miriam \surnamestart Backens\surnameend} \&
  \bibinfo{author}{Aleks \surnamestart Kissinger\surnameend}
  (\bibinfo{year}{2019}): \emph{\bibinfo{title}{{ZH}: A Complete Graphical
  Calculus for Quantum Computations Involving Classical Non-linearity}}.
\newblock In \bibinfo{editor}{Peter \surnamestart Selinger\surnameend} \&
  \bibinfo{editor}{Giulio \surnamestart Chiribella\surnameend}, editors: {\sl
  \bibinfo{booktitle}{{\textrm Proceedings of the 15th International Conference
  on} Quantum Physics and Logic, {\textrm Halifax, Canada, 3-7th June 2018}}},
  {\sl \bibinfo{series}{Electronic Proceedings in Theoretical Computer
  Science}} \bibinfo{volume}{287}, pp. \bibinfo{pages}{23--42},
  \doi{10.4204/EPTCS.287.2}.

\bibitemdeclare{inproceedings}{SZX}
\bibitem{SZX}
\bibinfo{author}{Titouan \surnamestart Carette\surnameend},
  \bibinfo{author}{Dominic \surnamestart Horsman\surnameend} \&
  \bibinfo{author}{Simon \surnamestart Perdrix\surnameend}
  (\bibinfo{year}{2019}): \emph{\bibinfo{title}{{SZX-Calculus: Scalable
  Graphical Quantum Reasoning}}}.
\newblock In \bibinfo{editor}{Peter \surnamestart Rossmanith\surnameend},
  \bibinfo{editor}{Pinar \surnamestart Heggernes\surnameend} \&
  \bibinfo{editor}{Joost-Pieter \surnamestart Katoen\surnameend}, editors: {\sl
  \bibinfo{booktitle}{44th International Symposium on Mathematical Foundations
  of Computer Science (MFCS 2019)}}, {\sl \bibinfo{series}{Leibniz
  International Proceedings in Informatics (LIPIcs)}} \bibinfo{volume}{138},
  \bibinfo{publisher}{Schloss Dagstuhl--Leibniz-Zentrum fuer Informatik},
  \bibinfo{address}{Dagstuhl, Germany}, pp. \bibinfo{pages}{55:1--55:15},
  \doi{10.4230/LIPIcs.MFCS.2019.55}.
\newblock \urlprefix\url{http://drops.dagstuhl.de/opus/volltexte/2019/10999}.

\bibitemdeclare{inproceedings}{CJPV19}
\bibitem{CJPV19}
\bibinfo{author}{Titouan \surnamestart Carette\surnameend},
  \bibinfo{author}{Emmanuel \surnamestart Jeandel\surnameend},
  \bibinfo{author}{Simon \surnamestart Perdrix\surnameend} \&
  \bibinfo{author}{Renaud \surnamestart Vilmart\surnameend}
  (\bibinfo{year}{2019}): \emph{\bibinfo{title}{{Completeness of Graphical
  Languages for Mixed States Quantum Mechanics}}}.
\newblock In \bibinfo{editor}{Christel \surnamestart Baier\surnameend},
  \bibinfo{editor}{Ioannis \surnamestart Chatzigiannakis\surnameend},
  \bibinfo{editor}{Paola \surnamestart Flocchini\surnameend} \&
  \bibinfo{editor}{Stefano \surnamestart Leonardi\surnameend}, editors: {\sl
  \bibinfo{booktitle}{46th International Colloquium on Automata, Languages, and
  Programming (ICALP 2019)}}, {\sl \bibinfo{series}{Leibniz International
  Proceedings in Informatics (LIPIcs)}} \bibinfo{volume}{132},
  \bibinfo{publisher}{Schloss Dagstuhl--Leibniz-Zentrum fuer Informatik},
  \bibinfo{address}{Dagstuhl, Germany}, pp. \bibinfo{pages}{108:1--108:15},
  \doi{10.4230/LIPIcs.ICALP.2019.108}.
\newblock \urlprefix\url{http://drops.dagstuhl.de/opus/volltexte/2019/10684}.

\bibitemdeclare{article}{interacting}
\bibitem{interacting}
\bibinfo{author}{Bob \surnamestart Coecke\surnameend} \& \bibinfo{author}{Ross
  \surnamestart Duncan\surnameend} (\bibinfo{year}{2011}):
  \emph{\bibinfo{title}{Interacting Quantum Observables: Categorical Algebra
  and Diagrammatics}}.
\newblock {\sl \bibinfo{journal}{New Journal of Physics}}
  \bibinfo{volume}{13}(\bibinfo{number}{4}), p. \bibinfo{pages}{043016},
  \doi{10.1088/1367-2630/13/4/043016}.
\newblock
  \urlprefix\url{https://doi.org/10.1088%2F1367-2630%2F13%2F4%2F043016}.

\bibitemdeclare{incollection}{ghz-w}
\bibitem{ghz-w}
\bibinfo{author}{Bob \surnamestart Coecke\surnameend} \& \bibinfo{author}{Aleks
  \surnamestart Kissinger\surnameend} (\bibinfo{year}{2010}):
  \emph{\bibinfo{title}{The Compositional Structure of Multipartite Quantum
  Entanglement}}.
\newblock In: {\sl \bibinfo{booktitle}{Automata, Languages and Programming}},
  \bibinfo{publisher}{Springer Berlin Heidelberg}, pp.
  \bibinfo{pages}{297--308}, \doi{10.1007/978-3-642-14162-1\_25}.
\newblock \urlprefix\url{https://doi.org/10.1007%2F978-3-642-14162-1_25}.

\bibitemdeclare{article}{mbqc}
\bibitem{mbqc}
\bibinfo{author}{Ross \surnamestart Duncan\surnameend} \&
  \bibinfo{author}{Simon \surnamestart Perdrix\surnameend}
  (\bibinfo{year}{2010}): \emph{\bibinfo{title}{Rewriting Measurement-Based
  Quantum Computations with Generalised Flow}}.
\newblock {\sl \bibinfo{journal}{Lecture Notes in Computer Science}}
  \bibinfo{volume}{6199}, pp. \bibinfo{pages}{285--296},
  \doi{10.1007/978-3-642-14162-1\_24}.
\newblock
  \urlprefix\url{http://personal.strath.ac.uk/ross.duncan/papers/gflow.pdf}.

\bibitemdeclare{inproceedings}{pivoting}
\bibitem{pivoting}
\bibinfo{author}{Ross \surnamestart Duncan\surnameend} \&
  \bibinfo{author}{Simon \surnamestart Perdrix\surnameend}
  (\bibinfo{year}{2014}): \emph{\bibinfo{title}{Pivoting makes the
  {ZX}-calculus complete for real stabilizers}}.
\newblock In \bibinfo{editor}{Bob \surnamestart Coecke\surnameend} \&
  \bibinfo{editor}{Matty \surnamestart Hoban\surnameend}, editors: {\sl
  \bibinfo{booktitle}{{\textrm Proceedings of the 10th International Workshop
  on} Quantum Physics and Logic, {\textrm Castelldefels (Barcelona), Spain,
  17th to 19th July 2013}}}, {\sl \bibinfo{series}{Electronic Proceedings in
  Theoretical Computer Science}} \bibinfo{volume}{171}, pp.
  \bibinfo{pages}{50--62}, \doi{10.4204/EPTCS.171.5}.

\bibitemdeclare{inproceedings}{HNW}
\bibitem{HNW}
\bibinfo{author}{Amar \surnamestart Hadzihasanovic\surnameend},
  \bibinfo{author}{Kang~Feng \surnamestart Ng\surnameend} \&
  \bibinfo{author}{Quanlong \surnamestart Wang\surnameend}
  (\bibinfo{year}{2018}): \emph{\bibinfo{title}{Two Complete Axiomatisations of
  Pure-state Qubit Quantum Computing}}.
\newblock In: {\sl \bibinfo{booktitle}{Proceedings of the 33rd Annual ACM/IEEE
  Symposium on Logic in Computer Science}}, \bibinfo{series}{LICS '18},
  \bibinfo{publisher}{ACM}, \bibinfo{address}{New York, NY, USA}, pp.
  \bibinfo{pages}{502--511}, \doi{10.1145/3209108.3209128}.
\newblock \urlprefix\url{http://doi.acm.org/10.1145/3209108.3209128}.

\bibitemdeclare{inproceedings}{Lack-PROP}
\bibitem{Lack-PROP}
\bibinfo{author}{Stephen \surnamestart Lack\surnameend} (\bibinfo{year}{2004}):
  \emph{\bibinfo{title}{Composing {PROP}s}}.
\newblock In: {\sl \bibinfo{booktitle}{Theory and Applications of Categories}},
  \bibinfo{volume}{13}, pp. \bibinfo{pages}{147--163}.
\newblock \urlprefix\url{http://www.tac.mta.ca/tac/volumes/13/9/13-09abs.html}.

\bibitemdeclare{book}{mac2013categories}
\bibitem{mac2013categories}
\bibinfo{author}{Saunders \surnamestart Mac~Lane\surnameend}
  (\bibinfo{year}{2013}): \emph{\bibinfo{title}{Categories for the Working
  Mathematician}}.
\newblock \bibinfo{volume}{5}, \bibinfo{publisher}{Springer Science \& Business
  Media}.

\bibitemdeclare{article}{local-complementation}
\bibitem{local-complementation}
\bibinfo{author}{Maarten \surnamestart Van~den Nest\surnameend},
  \bibinfo{author}{Jeroen \surnamestart Dehaene\surnameend} \&
  \bibinfo{author}{Bart \surnamestart De~Moor\surnameend}
  (\bibinfo{year}{2004}): \emph{\bibinfo{title}{Graphical description of the
  action of local Clifford transformations on graph states}}.
\newblock {\sl \bibinfo{journal}{Phys. Rev. A}} \bibinfo{volume}{69}, p.
  \bibinfo{pages}{022316}, \doi{10.1103/PhysRevA.69.022316}.
\newblock \urlprefix\url{https://link.aps.org/doi/10.1103/PhysRevA.69.022316}.

\bibitemdeclare{article}{Selinger-CPM}
\bibitem{Selinger-CPM}
\bibinfo{author}{Peter \surnamestart Selinger\surnameend}
  (\bibinfo{year}{2007}): \emph{\bibinfo{title}{Dagger Compact Closed
  Categories and Completely Positive Maps}}.
\newblock {\sl \bibinfo{journal}{Electronic Notes in Theoretical Computer
  Science}} \bibinfo{volume}{170}, pp. \bibinfo{pages}{139--163},
  \doi{10.1016/j.entcs.2006.12.018}.
\newblock \urlprefix\url{https://doi.org/10.1016%2Fj.entcs.2006.12.018}.

\bibitemdeclare{incollection}{selinger2010survey}
\bibitem{selinger2010survey}
\bibinfo{author}{Peter \surnamestart Selinger\surnameend}
  (\bibinfo{year}{2010}): \emph{\bibinfo{title}{A Survey of Graphical Languages
  for Monoidal Categories}}.
\newblock In: {\sl \bibinfo{booktitle}{New Structures for Physics}},
  \bibinfo{publisher}{Springer}, pp. \bibinfo{pages}{289--355}.

\bibitemdeclare{inproceedings}{euler-zx}
\bibitem{euler-zx}
\bibinfo{author}{Renaud \surnamestart {Vilmart}\surnameend}
  (\bibinfo{year}{2019}): \emph{\bibinfo{title}{A Near-Minimal Axiomatisation
  of ZX-Calculus for Pure Qubit Quantum Mechanics}}.
\newblock In: {\sl \bibinfo{booktitle}{2019 34th Annual ACM/IEEE Symposium on
  Logic in Computer Science (LICS)}}, pp. \bibinfo{pages}{1--10},
  \doi{10.1109/LICS.2019.8785765}.

\bibitemdeclare{phdthesis}{PhD.Zanasi}
\bibitem{PhD.Zanasi}
\bibinfo{author}{Fabio \surnamestart Zanasi\surnameend} (\bibinfo{year}{2015}):
  \emph{\bibinfo{title}{Interacting {H}opf Algebras -- the theory of linear
  systems}}.
\newblock Ph.D. thesis, \bibinfo{school}{Universit\'e de Lyon}.
\newblock \urlprefix\url{http://www.zanasi.com/fabio/#/publications.html}.

\end{thebibliography}

\appendix
\section{Appendix}

\begin{proof}[Proposition \ref{prop:transpose-conjugate-dagger}]
\phantomsection
\label{prf:transpose-conjugate-dagger}~
\begin{itemize}
\item $\interp{(.)^t} = \interp{(\epsilon_m \otimes id_n)\circ(id_m\otimes .\otimes id_n)\circ (id_m\otimes \eta_m)} = (\epsilon_m \otimes id_n)\circ(id_m\otimes \interp{.}\otimes id_n)\circ (id_m\otimes \eta_m)$ \\ \hspace*{2.5em}$= \interp{.}^t$\medskip
\item $\interp{\overline{f}}= {s_f}\sum\limits_{\vec y,\vec x\in\{0,1\}} e^{-2i\pi P_f}\ketbra{\vec Q_f}{\vec x} = \overline{s_f\sum\limits_{\vec y,\vec x\in\{0,1\}} e^{2i\pi P_f}\ketbra{\vec Q_f}{\vec x}} = \overline{\interp{f}}$\medskip
\item $\interp{(.)^\dagger} = \interp{\overline{(.)}^t} = \overline{\interp{.}}^t = \interp{.}^\dagger$
\end{itemize}
\end{proof}

\begin{proof}[Theorem \ref{thm:dagger-compact-prop}]
\phantomsection
\label{prf:dagger-compact-prop}
As we already mentioned, $\cat{SOP}$ is a PROP. Quotienting it with the equivalence relation $\underset{\operatorname{Clif}}\sim$ does not change this property. We already saw that all the axioms for a compact structure are satisfied. It remains to show that $(.)^\dagger$ is involutive.

First, for any morphism $f\in\cat{SOP}$, we have:
$$\overline{\overline{f}}= \overline{\ket{\vec x}\mapsto{s_f}\sum e^{-2i\pi P_f}\ket{\vec Q_f}} = \ket{\vec x}\mapsto s_f\sum e^{2i\pi P_f}\ket{\vec Q_f} = f$$
so $\overline{\overline{(.)}} = (.)$.

Moreover:
\begin{align*}
\overline{f}^t &= \ket{\vec x}\mapsto \frac{s_f}{2^m}\sum e^{2i\pi \left(-P_f+\frac{\vec{\widehat{Q_f}}[\vec x_f \leftarrow \vec y]\cdot \vec y'+\vec x\cdot\vec y'}2\right)}\ket{\vec y} \\
&= \ket{\vec x}\mapsto \frac{s_f}{2^m}\sum e^{2i\pi \left(-P_f-\frac{\vec{\widehat{Q_f}}[\vec x_f \leftarrow \vec y]\cdot \vec y'+\vec x\cdot\vec y'}2\right)}\ket{\vec y} = \overline{f^t}
\end{align*}
Indeed, $\vec{\widehat{Q_f}}[\vec x_f \leftarrow \vec y]\cdot \vec y'+\vec x\cdot\vec y'$ is integer-valued, and $e^{i\pi n} = e^{-i\pi n}$ for any $n\in\mathbb Z$.

Finally: $(.)^{\dagger\dagger} = \overline{\overline{(.)}^t}^t = \overline{\overline{(.)}}^{tt} = (.)$
\end{proof}

\begin{proof}[Lemma \ref{lem:H-spider-decomp}]
\phantomsection
\label{prf:H-spider-decomp}
First, one of the (sound) rules of the ZH-Calculus tells us that:
\begin{align*}
\interp{\tikzfig{H-spider}} = \interp{\tikzfig{H-spider-decomp-aux}}
\end{align*}
Then, let $\rho e^{i\theta} := \frac{1-r}{1+r}$ with $\rho>0$, which is well defined and unique since $r\neq-1$ and $r\neq1$. Let also $\alpha:=2\arctan\frac\rho2$, $\beta:=\theta+\frac\pi2$ and $s:=\frac{1+r}{2(1+e^{i\alpha})}$. Then, one can check that:
\def\fig{H-spider-decomp-proof}
\begin{align*}
\interp{\begin{tikzpicture}
	\begin{pgfonlayer}{nodelayer}
		\node [style=box] (0)  at (0.25, 0.5) {$r$};
		\node [style=box] (1)  at (0.25, 0.0) {};
		\node [style=box] (2)  at (-0.25, 0.0) {$\frac12$};
		\node [style=none] (3)  at (0.25, -0.5) {};
	\end{pgfonlayer}
	\begin{pgfonlayer}{edgelayer}
		\draw (0) to (3.center);
	\end{pgfonlayer}
\end{tikzpicture}}
=\interp{\begin{tikzpicture}
	\begin{pgfonlayer}{nodelayer}
		\node [style=box] (5)  at (0.375, 0.375) {$\rho e^{i\theta}$};
		\node [style=box] (7)  at (-0.375, 0.125) {$\frac{1+r}2$};
		\node [style=none] (8)  at (0.375, -0.375) {};
	\end{pgfonlayer}
	\begin{pgfonlayer}{edgelayer}
		\draw (5) to (8.center);
	\end{pgfonlayer}
\end{tikzpicture}}
=\interp{\begin{tikzpicture}
	\begin{pgfonlayer}{nodelayer}
		\node [style=box] (10)  at (0.0, 0.625) {$e^{i\alpha}$};
		\node [style=box] (11)  at (-0.5, -0.125) {$s$};
		\node [style=none] (12)  at (0.0, -0.625) {};
		\node [style=box] (13)  at (0.0, 0.25) {};
		\node [style=white dot] (14)  at (0.0, -0.125) {};
		\node [style=box] (15)  at (0.5, 0.125) {$e^{i\beta}$};
	\end{pgfonlayer}
	\begin{pgfonlayer}{edgelayer}
		\draw (10) to (12.center);
		\draw (14) to (15);
	\end{pgfonlayer}
\end{tikzpicture}}
\end{align*}
\end{proof}

\begin{proof}[Proposition \ref{prop:double-interp-is-identity}]
\phantomsection
\label{prf:double-interp-is-identity}
First, we can show inductively that $\left[\tikzfig{P}\right]^{\operatorname{sop}} \underset{\operatorname{Clif}}\sim \sum_{\vec y} e^{2i\pi P(\vec y)}\ketbra{}{\vec y}$. Indeed, we have:
\[\left[\tikzfig{P-mon-1}\right]^{\operatorname{sop}} = \left[\tikzfig{P-mon-2}\right]^{\operatorname{sop}} = \sum e^{2i\pi\alpha y_{i_1}\ldots y_{i_s}}\ketbra{}{\vec y}\]
and
\begin{align*}
\left[\tikzfig{P-plus-1}\right]^{\operatorname{sop}} &= \left[\tikzfig{P-plus-2}\right]^{\operatorname{sop}} = \left(\sum e^{2i\pi(P_1(\vec y_1)+P_2(\vec y_2))}\ketbra{}{\vec y_1,\vec y_2}\right)\circ\left(\sum \ketbra{\vec y,\vec y}{\vec y}\right)\\
&\underset{\operatorname{Clif}}\sim \sum e^{2i\pi(P_1(\vec y)+P_2(\vec y))}\ketbra{}{\vec y} = \sum e^{2i\pi(P_1+P_2)(\vec y)}\ketbra{}{\vec y}
\end{align*}

Similarly, we can prove that $\left[\tikzfig{O_i}\right]^{\operatorname{sop}} \underset{\operatorname{Clif}}\sim \sum_{\vec y}\ketbra{O_i(\vec y)}{\vec y}$. The base cases are straightforward, so we show the sum and product. Notice that:
$$\left[\tikzfig{xor}\right]^{\operatorname{sop}} = \frac14\sum e^{2i\pi\left(\frac{y_5y_1}2+\frac{y_6y_2}2+\frac{y_1y_3}2+\frac{y_2y_3}2+\frac{y_1y_4}2\right)}\ketbra{y_4}{y_5,y_6} \overset *{\rewrite{$Elim$}} \sum \ketbra{y_1\oplus y_2}{y_1,y_2}$$
$$\left[\tikzfig{and}\right]^{\operatorname{sop}} = \frac12\sum e^{2i\pi\left(\frac{y_3y_4y_1}2+\frac{y_1y_2}2\right)}\ketbra{y_2}{y_3,y_4} \rewrite{$Elim$} \sum \ketbra{y_1y_2}{y_1,y_2}$$
and
$$\left[\tikzfig{Q_1-Q_2}\right]^{\operatorname{sop}} \underset{\operatorname{Clif}}\sim \sum \ketbra{Q_1(\vec y),Q_2(\vec y)}{\vec y}$$
so we directly get:
$$\left[\tikzfig{Q_i-xor-1}\right]^{\operatorname{sop}} = \left[\tikzfig{Q_i-xor-2}\right]^{\operatorname{sop}} \underset{\operatorname{Clif}}\sim \sum \ketbra{(Q_1\oplus Q_2)(\vec y)}{\vec y}$$
and
$$\left[\tikzfig{Q_i-and-1}\right]^{\operatorname{sop}} = \left[\tikzfig{Q_i-and-2}\right]^{\operatorname{sop}} \underset{\operatorname{Clif}}\sim \sum \ketbra{(Q_1 Q_2)(\vec y)}{\vec y}$$
Finally:
\tikzfig{ZH-NF-proof-bis}
Hence:
$$\left[\tikzfig{ZH-NF}\right]^{\operatorname{sop}}\underset{\operatorname{Clif}}\sim \left[\tikzfig{ZH-NF-M-S}\right]^{\operatorname{sop}}\underset{\operatorname{Clif}}\sim \sum e^{2i\pi P(\vec y)}\ketbra{\vec O(\vec y)}{\vec I(\vec y)}$$
\end{proof}

\begin{proof}[Proposition \ref{prop:zh-clif-onto}]
\phantomsection
\label{prf:zh-clif-onto}
First, we shall show that $\interp{\cat{ZH}_{\operatorname{Clif}}} \subseteq \cat{Stab}$. To do so, it suffices to show that all the generators of $\cat{ZH}_{\operatorname{Clif}}$ are mapped to morphisms of $\cat{Stab}$:
\begin{align*}
\interp{~\tikzfig{H-scalar-1_sqrt2}~} &= \frac1{\sqrt2} = \epsilon\circ(\ket0\otimes (H\circ \ket0)) \in\cat{Stab}\\
2 &= \epsilon\circ\eta \in\cat{Stab}\\
\interp{\tikzfig{H-2}}&= H\times\frac2{\sqrt2}\in\cat{Stab}\\
\interp{\tikzfig{H-phase-0}} &= \ket0+\ket1 = \frac2{\sqrt2}H\ket0\in\cat{Stab}\\
\interp{\tikzfig{H-phase-k-pi_2}} &= \ket0 + i^k\ket1 = S^k(\ket0+\ket1)\in\cat{Stab}\\
\interp{\tikzfig{Z-1-2}} &= (id\otimes H)\circ \textit{CZ}\circ (id\otimes (H\circ\ket0))\in\cat{Stab}\\
\interp{\rotatebox[origin=c]{180}{\tikzfig{Z-1-2}}} &= (id\otimes \epsilon)\circ\left(\interp{\tikzfig{Z-1-2}}\otimes id\right)\in\cat{Stab}\\
\interp{\tikzfig{Z-1-0}} &= \epsilon\circ \interp{\tikzfig{Z-1-2}}\in\cat{Stab}\\
\interp{\rotatebox[origin=c]{180}{\tikzfig{Z-1-0}}} &= \interp{\rotatebox[origin=c]{180}{\tikzfig{Z-1-2}}}\circ\eta\in\cat{Stab}\\
\interp{~\tikzfig{H-scalar-omega}~} &= \interp{\tikzfig{H-scalar-omega-2}}\in\cat{Stab}
\end{align*}
and $\interp{\tikzfig{Z-spider}}$ can be obtained as a composition of $\interp{\tikzfig{Z-1-2}}$, $\interp{\rotatebox[origin=c]{180}{\tikzfig{Z-1-2}}}$, $\interp{\tikzfig{Z-1-0}}$ and $\interp{\rotatebox[origin=c]{180}{\tikzfig{Z-1-0}}}$.

Then, we can show that all the generators of $\cat{Stab}$ have a preimage by $\interp{.}_{\operatorname{Clif}}$ in $\cat{ZH}_{\operatorname{Clif}}$:
\begin{align*}
\interp{\tikzfig{ket0}}&=\ket0\\
\interp{~\tikzfig{H-2}~~\tikzfig{H-scalar-1_sqrt2}~}&= H\\
\interp{\tikzfig{CZ}}&= \textit{CZ}\\
\interp{\tikzfig{S-gate}}&= S
\end{align*}
\end{proof}

\begin{proof}[Proposition \ref{prop:sop-clif-onto}]
\phantomsection
\label{prf:sop-clif-onto}
First, we show that $\interp{\cat{SOP}_{\operatorname{Clif}}}\subseteq\cat{Stab}$. To do so, it can be seen that $[\cat{SOP}_{\operatorname{Clif}}]^{\operatorname{ZH}}\subseteq \cat{ZH}_{\operatorname{Clif}}$:
$\frac1{\sqrt2^p}$ is mapped to $\tikzfig{H-scalar-1_sqrt2}^p$, $\frac18P^{(0)}$ contributes for $\tikzfig{H-scalar-omega}^{P^{(0)}}$, $\frac14P^{(1)}$ contributes to \tikzfig{H-phase} linked to the associated variable (where $\alpha\in\{0,\frac\pi2,\pi,-\frac\pi2\}$), and $\frac12P^{(2)}$ contributes to $\tikzfig{H-2}$ linked to the associated pair of variables; and $O_i$ and $I_i$ being linear, they are mapped to black spiders (with or without $\neg$).\\
Hence, $\interp{\cat{SOP}_{\operatorname{Clif}}}=\interp{[\cat{SOP}_{\operatorname{Clif}}]^{\operatorname{ZH}}}\subseteq\interp{\cat{ZH}_{\operatorname{Clif}}}\subseteq \cat{Stab}$.

Next, it suffices to show that all the generators of $\cat{Stab}$ have a preimage by $\interp{.}$ in $\cat{SOP}_{\operatorname{Clif}}$:
\begin{align*}
\interp{\frac1{\sqrt2}\sum e^{2i\pi\frac{y_1y_2}2}\ketbra{y_2}{y_1}} &= H\\
\interp{\sum e^{2i\pi\frac y4}\ketbra yy} &= S\\
\interp{\sum e^{2i\pi\frac{y_1y_2}{2}}\ketbra{y_1,y_2}{y_1,y_2}} &= \textit{CZ}\\
\interp{\ketbra0{}} &= \ket0
\end{align*}
\end{proof}

\begin{proof}[Proposition \ref{prop:clif+-terminates}]
\phantomsection
\label{prf:clif+-terminates}
For a morphism $s\sum_{\vec y}\limits e^{2i\pi P}\ketbra{\vec O}{\vec I}$ of $\cat{SOP}$, consider the tuple:
$$\left(\vphantom{\rule{0.5pt}{1em}} \abs{\vec y}, ~\operatorname{mon}(O_1), ~\ldots, ~\operatorname{mon}(O_m), ~\operatorname{mon}(I_1), ~\ldots, ~\operatorname{mon}(I_n), \operatorname{mon}(P)\right)$$
where:
\begin{itemize}
\item $\abs{.}$ denotes the cardinality
\item $\operatorname{mon}(Q)$ counts the number of monomials in the expanded simplified polynomial $Q$
\end{itemize}

We can define an order on these tuples, as their lexicographic order. Notice that all the components of the tuple are natural integers. Hence, if we can show that every rewrite rule in $\rewrite{\operatorname{Clif+}}$ strictly reduces the tuple, then it means $\rewrite{\operatorname{Clif+}}$ terminates.

It is easy to check that the three rules of $\rewrite{\operatorname{Clif}}$ reduce the size of $\vec y$, hence reducing the tuple.

%

When the rule (ket) is applied on $O_i$, we necessarily have $\operatorname{mon}(O_i)\geq 2$. Indeed, $O_i = y_0\oplus O_i'$ where $O_i'\neq 0$. After application of the rule, this quantity is reduced to $1$. Moreover, neither $\abs{\vec y}$, $\operatorname{mon}(O_1)$, ..., nor $\operatorname{mon}(O_{i-1})$ is changed, as there is no creation or removal of variables, and $y_0$ does not appear in $O_1,\ldots,O_{i-1}$. The rule (bra) works exactly in the same fashion.

Finally, the rule (Z) reduces the morphism to one whose tuple is $(1,0,\ldots,0,0,\ldots,0,1)$, and only from a morphism with a larger associated tuple.
\end{proof}

\begin{proof}[Lemma \ref{lem:NF}]
\phantomsection
\label{prf:NF}
The rules (ket) and (bra) quite obviously enforce the form of $\vec O$ and $\vec I$. Then, suppose $y_0$ is an internal variable. Then either:
\begin{itemize}
\item the monomial $\frac14y_0$ appears in the phase polynomial, in which case the rule ($\omega$) can be applied
\item the monomial $\frac12y_0y_i$ appears in the phase polynomial, with some arbitrary $y_i$, in which case the rule (HH) can be applied
\item the monomial $\frac12y_0$ appears in the phase polynomial, as the only occurrence of $y_0$, in which case the rule (Z) can be applied
\end{itemize}
\end{proof}

\begin{proof}[Proposition \ref{prop:zero-morphism}]
\phantomsection
\label{prf:zero-morphism}
By reductio ad absurdum, suppose that $t$ reduces to $t'= \frac{1}{\sqrt{2}^{p}}\!\sum e^{2i\pi P}\!\ketbra{\vec O}{\vec I}$ different from $\sum\limits_{y_0} e^{2i\pi \frac{y_0}2}\ketbra{0,...,0}{0,...,0}$, but irreducible. By Lemma \ref{lem:NF}, this implies $\Var(P)\subseteq \Var(\vec O,\vec I)$. We show that we can build $\vec x_1,\vec x_2 \in\{0,1\}^{n+m}$ such that $\bra{\vec x_1}\interp{t'}\ket{\vec x_2}\neq0$.

To do so, consider $O_1$. By Lemma \ref{lem:NF}, either it is a constant $c$, or a ``fresh'' variable $y_k$. In the first case, build $$t^{(1)}:=\frac{1}{\sqrt{2}^{p}}\sum e^{2i\pi P}\ketbra{O_2,...,O_m}{\vec I},$$ in the second case, build $$t^{(1)}:=\left(\frac{1}{\sqrt{2}^{p}}\sum e^{2i\pi P}\ketbra{O_2,...,O_m}{\vec I}\right)[y_k\leftarrow 0].$$

Notice that:
\begin{itemize}
\item $t^{(1)}$ is irreducible
\item $\Var(P^{(1)})\subseteq \Var(\vec O^{(1)},\vec I^{(1)})$
\item $\interp{t^{(1)}} = (\bra{c}\otimes id_{m-1})\circ\interp{t'}\quad$ if $O_1=c$
\item $\interp{t^{(1)}} = (\bra{0}\otimes id_{m-1})\circ\interp{t'}\quad$ if $O_1=y_k$
\end{itemize}
$c$ (resp.~$0$) will be the first term of $\vec x_1$.

Doing so repeatedly (building $t^{(i+1)}$ from $t^{(i)}$) first for the whole ket, and then for the whole bra, we end up with a term $t^{(n+m)}$ of the form $t^{(n+m)} = \frac{1}{\sqrt{2}^{p}}\sum e^{2i\pi c}$ with $c$ a constant. In the process, we build $\vec x_1$ and $\vec x_2$.

Clearly, $\interp{t^{(n+m)}}\neq 0$, and yet, $\interp{t^{(n+m)}} = \bra{\vec x_1}\interp{t'}\ket{\vec x_2}$. Hence, $\interp{t'}\neq 0$. We end up with a contradiction, so $t$ actually reduces to $\sum\limits_{y_0} e^{2i\pi \frac{y_0}2}\ketbra{0,...,0}{0,...,0}$.
\end{proof}

\begin{proof}[Lemma \ref{lem:polynomial-uniqueness}]
\phantomsection
\label{prf:polynomial-uniqueness}
Let us prove the result by induction on $k$:
\begin{itemize}
\item If $k=0$, the result is obvious
\item Suppose the result is true for $k$. Let $P_i\in\mathbb R[X_1,...,X_{k+1}]/(1,X^2-X)$. Then $P_i(\vec x,x_0) = P_i'(\vec x) + x_0P_i''(\vec x)$ with $P_i',P_i''\in\mathbb R[X_1,...,X_k]/(1,X^2-X)$. By hypothesis, we have $\forall \vec x\in\{0,1\}^k,~P_1(\vec x,0) = P_2(\vec x,0)$, so by induction hypothesis, $P_1'=P_2'$. Similarly, we get $P_1''=P_2''$. Hence, $P_1=P_2$.
\end{itemize}
\end{proof}

\begin{proof}[Theorem \ref{thm:clif-completeness}]
\phantomsection
\label{prf:clif-completeness}
If $\interp{t_1}=0=\interp{t_2}$, by Proposition \ref{prop:zero-morphism}, the two terms reduce to the same normal form.

Suppose now that $\interp{t_i}\neq0$, and that $t_i$ reduce to $t_i'=\frac{1}{\sqrt{2}^{p_i}}\sum e^{2i\pi P_i}\ketbra{\vec O^{(i)}}{\vec I^{(i)}}$, irreducible. By Corollary \ref{cor:non-zero-morphism}, $\Var(P_i)\subseteq\Var(\vec O^{(i)},\vec I^{(i)})$.

We first show that $\vec O^{(1)} = \vec O^{(2)}$ and $\vec I^{(1)} = \vec I^{(2)}$ while at the same time building the $\alpha$-conversion. Consider $O^{(i)}_1$. By Lemma \ref{lem:NF}, either $O^{(i)}_1=c$ constant or $O^{(i)}_1=y_{k_i}$. We can show that $O^{(1)}$ and $O^{(2)}$ are in the form. Indeed, suppose $O^{(1)}_1=c$ and $O^{(2)}_1=y_{k}$. Then, $(\bra{c{\oplus}1}\otimes id)\circ\interp{t_1'}=0$, however $(\bra{c{\oplus}1}\otimes id)\circ\interp{t_2'}=\interp{\left(\frac{1}{\sqrt{2}^{p_i}}\sum e^{2i\pi P_i}\ketbra{O^{(2)}_2,...,O^{(2)}_m}{\vec I^{(i)}}\right)[y_k\leftarrow c{\oplus}1]}\neq0$ by Corollary \ref{cor:non-zero-morphism}, since the last term is irreducible with no internal variable.

Hence, either $O^{(1)}_1 = c = O^{(2)}_1$ or $O^{(1)}_1 = y_{k_1}$ and $O^{(2)}_1 = y_{k_2}$. In the first case, build $$t^{(1)}_i := \frac{1}{\sqrt{2}^{p_i}}\sum e^{2i\pi P_i}\ketbra{O^{(i)}_2,...,O^{(i)}_m}{\vec I^{(i)}}.$$ In the second case, build $$t^{(1)}_i := \left(\frac{1}{\sqrt{2}^{p_i}}\sum e^{2i\pi P_i}\ketbra{O^{(i)}_2,...,O^{(i)}_m}{\vec I^{(i)}}\right)[y_{k_i}\leftarrow 0],$$ and the $\alpha$-conversion $y_{k_1}\leftrightarrow y_{k_2}$.

In parallel, we start building a particular operator that will be of use in the following. In the first case, the operator is built from $\operatorname{op}:=\bra+$, in the second case, from $\operatorname{op}:=id$.

We may notice that:
\begin{itemize}
\item $t^{(1)}_i$ is irreducible
\item $t^{(1)}_i$ has no internal variable
\item by Corollary \ref{cor:non-zero-morphism}, $\interp{t^{(1)}_i}\neq 0$
\item $\interp{t^{(1)}_i} = (\bra c\otimes id)\circ\interp{t_i'}\quad$ if $O^{(i)}_1 = c$
\item $\interp{t^{(1)}_i} = (\bra 0\otimes id)\circ\interp{t_i'}\quad$ if $O^{(i)}_1 = y_{k_i}$
\item $(\operatorname{op}\otimes id)\circ \interp{t_i'} = \interp{\frac{1}{\sqrt{2}^{p_i}}\sum e^{2i\pi P_i}\ketbra{O^{(i)}_2,...,O^{(i)}_m}{\vec I^{(i)}}}\quad$ if $O^{(i)}_1 = c$
\item $(\operatorname{op}\otimes id)\circ \interp{t_i'} = \interp{\frac{1}{\sqrt{2}^{p_i}}\sum e^{2i\pi P_i}\ketbra{y_{k_i},O^{(i)}_2,...,O^{(i)}_m}{\vec I^{(i)}}}\quad$ if $O^{(i)}_1 = y_{k_i}$
\end{itemize}
Doing so inductively first for the whole ket, then for the whole bra, we get:
\begin{itemize}
\item a matching of variables of $t_2'$ with variables of $t_1'$. We may call $\delta$ the bijection that maps a variable of $t_2'$ to a variable of $t_1'$.
\item the equalities $\vec O^{(1)} = \vec O^{(2)}[\vec y_2 \leftarrow \delta(\vec y_2)]$ and $\vec I^{(1)} = \vec I^{(2)}[\vec y_2 \leftarrow \delta(\vec y_2)]$
\item the equality $\frac{1}{\sqrt{2}^{p_1}} e^{2i\pi P_1[\vec y_1\leftarrow \vec 0]} = \interp{t_1^{(n+m)}} = \interp{t_2^{(n+m)}} = \frac{1}{\sqrt{2}^{p_2}} e^{2i\pi P_2[\vec y_2\leftarrow \vec 0]}$ which implies equality for the $p_i$ and the constants in the phase polynomials.
\item two operators $\operatorname{op}_1$ (for the ket) and $\operatorname{op}_2$ (for the bra), such that $$\operatorname{op}_1\circ\interp{t_i'}\circ\operatorname{op}_2 = \interp{\frac{1}{\sqrt{2}^{p_i}}\sum e^{2i\pi P_i}\ketbra{y^{(i)}_1,...}{...,y^{(i)}_k}}$$
\end{itemize}
It remains to show that $P_1 = P_2[\vec y_2\leftarrow \delta(\vec y_2)]$. We have:
\begin{align*}
\sum_{\vec y\in\{0,1\}^k}&e^{2i\pi P_1(\vec y)}\ketbra{y_1,...}{...,y_k}
= \interp{\sum e^{2i\pi P_1}\ketbra{y_1,...}{...,y_k}}\\
&= \operatorname{op}_1\circ\interp{t_1'}\circ\operatorname{op}_2
= \operatorname{op}_1\circ\interp{t_2'[\vec y_2\leftarrow \delta(\vec y_2)]}\circ\operatorname{op}_2\\
&= \interp{\sum e^{2i\pi P_2}\ketbra{y_1,...}{...,y_k}}
= \sum_{\vec y\in\{0,1\}^k}e^{2i\pi P_2(\vec y)}\ketbra{y_1,...}{...,y_k}
\end{align*}
By linear independence of the family $\left(\ketbra{y_1,...}{...,y_k}\right)_{\vec y\in\{0,1\}^k}$, we have:
$$\forall \vec y\in\{0,1\}^k,~ e^{2i\pi P_1(\vec y)}=e^{2i\pi P_2(\vec y)}$$
Since the $P_i$ are considered modulo $1$, we have $\forall \vec y\in\{0,1\}^k,~ P_1(\vec y)= P_2(\vec y)$.
By Lemma \ref{lem:polynomial-uniqueness}, we finally get $P_1=P_2$.
\end{proof}

\begin{proof}[Proposition \ref{prop:clif!-terminates}]
\phantomsection
\label{prf:clif!-terminates}
For a morphism $s\sum\limits_{\vec y} e^{2i\pi P}\ket{\vec O}!\vec D\bra{\vec I}$ of $\cat{SOP}^{\sground}$, consider the tuple:\\
\resizebox{\columnwidth}{!}{$\left(\abs{\vec y}, \abs{\{D_i\in\vec D~|\operatorname{mon}(D_i)\geq2\}}, \sum\limits_i\operatorname{mon}(D_i), \operatorname{mon}(O_1),\ldots, \operatorname{mon}(O_m),\operatorname{mon}(I_1),\ldots,\operatorname{mon}(I_n),\operatorname{mon}(P)\right)$}\\[0.5em]
Again, we define an order on these, as their lexicographic order. We can show that all the rules of $\rewrite{\operatorname{Clif}\raisebox{0.5ex}\sground}$ reduce the tuple:
\begin{itemize}
\item (Elim), (HH), ($\omega$) and (H\sground) all reduce $\abs{\vec y}$
\item (disc) reduces $\abs{\{D_i\in\vec D~|\operatorname{mon}(D_i)\geq2\}}$
\item ($\oplus$\sground) reduces $\sum_i\operatorname{mon}(D_i)$ and sometimes even $\abs{\{D_i\in\vec D~|\operatorname{mon}(D_i)\geq2\}}$
\item (Cst\sground) reduces $\sum_i\operatorname{mon}(D_i)$
\item (ket) reduces $\operatorname{mon}(O_i)$ and none of the previous quantities in the tuple
\item (bra) reduces $\operatorname{mon}(I_i)$ and none of the previous quantities in the tuple
\item (Z\sground) reduces $\operatorname{mon}(P)$ and none of the previous quantities
\item (Z) reduces any tuple with $\abs{\vec y}\geq1$ and $\operatorname{mon}(P)\geq 1$ to $(1,0,\ldots,0,1)$
\end{itemize}
\end{proof}

\begin{proof}[Lemma \ref{lem:clif-disc-form}]
\phantomsection
\label{prf:clif-disc-form}
The first and last conditions are verified just as in the pure case. Then, all the constants in the phase polynomial can be removed using rule (Z\sground).

Then for the form of $\vec D$, let us decompose it as $\vec D = \{y_1,\ldots,y_k\}\cup\{D_{i_1},\ldots,D_{i_s}\}$ where all the polynomials in the right hand side have $\operatorname{mon}(.)\geq 2$ (if $0$ or $1$ appears as a polynomial in $\vec D$, it is removed using (Cst\sground)).
Consider $D_{i_1}$. Either 
\begin{itemize}
\item $D_{i_1}$ contains at least one variable $y_{k+1}\notin\{y_1,\ldots,y_k\}$, in which case (disc) can be used so $\vec D'=\{y_1,\ldots,y_{k+1}\}\cup\{D_{i_2},\ldots,D_{i_s}\}[y_{k+1}\leftarrow D_{i_1}]$
\item or $D_{i_1}$ contains only variables of $\{y_1,\ldots,y_k\}$, in which case, using ($\oplus$\sground) repeatedly, it can be reduced to a constant that can then be removed using (Cst\sground), so $\vec D' = \{y_1,\ldots,y_k\}\cup\{D_{i_2},\ldots,D_{i_s}\}$
\end{itemize}
Hence, in any case, $\vec D$ can be reduced to the form $\vec D = \{y_1,\ldots,y_k\}$.

We then have to show that $P$ can be reduced to the form above. Suppose $y_0$ appears both in $!\{\vec y_d\}$ and in $P^{(1)}$, then (Z\sground) can be used to remove it from $P$. The same goes for monomials of the form $y_0y_0'$ in $P^{(2)}$ when $\{y_0,y_0'\}\subseteq \{\vec y_d\}$.

Finally, if a variable of $\vec y_d$ appears only in $!\{\vec y_d\}$ and in $P$, then the rule (H\sground) can be applied to remove the variable.
\end{proof}

\begin{proof}[Lemma \ref{lem:zero-morphisms-disc}]
\phantomsection
\label{prf:zero-morphisms-disc}
The proof is similar to that of Proposition \ref{prop:zero-morphism}, except now we have a set of discarded variables $\{\vec y_d\}$. However, since $\{\vec y_d\}\subseteq\Var(\vec O,\vec I)$, the set of discarded variables will deplete as the $t^{(i)}$ are built. The conclusion remains unchanged.
\end{proof}

\begin{proof}[Proposition \ref{prop:G-cd}]
\phantomsection
\label{prf:G-cd}
First, let us prove that, if $t\in\overline{\cat{SOP}^{\sground}}$ is terminal, $G(F(t))$ is defined and $G(F(t))=t$. By definition:
$$t = \frac1{\sqrt2^p}\sum_{\vec y,\vec y_d} e^{2i\pi P(\vec y,\vec y_d)}\ket{\vec O(\vec y,\vec y_d)}!\{\vec y_d\}\bra{\vec I(\vec y,\vec y_d)}$$
where $P$ has no constant, all its monomials contain a variable of $\vec y$, $\{\vec y\}\subseteq\Var(\vec O,\vec I)$ and $\{\vec y_d\}\subseteq\Var(\vec O,\vec I,P)$. Hence, again by definition:
$$F(t):= \frac1{\sqrt2^{2p}}\sum_{\vec y,\vec y',\vec y_d} e^{2i\pi\left(P(\vec y,\vec y_d)-P(\vec y',\vec y_d)\right)}\ketbra{\vec O(\vec y,\vec y_d),\vec O(\vec y',\vec y_d)}{\vec I(\vec y,\vec y_d),\vec I(\vec y',\vec y_d)}$$
Notice that:
\begin{itemize}
\item obviously $\abs{\vec O(\vec y,\vec y_d)}=\abs{\vec O(\vec y',\vec y_d)}$ and $\abs{\vec I(\vec y,\vec y_d)}=\abs{\vec I(\vec y',\vec y_d)}$
\item $\{\vec y_d\} = \{\vec y,\vec y',\vec y_d\} \setminus \Var(\vec O(\vec y,\vec y_d)\oplus \vec O(\vec y',\vec y_d), \vec I(\vec y,\vec y_d)\oplus \vec I(\vec y',\vec y_d))$.\\
Indeed, if $O_i(\vec y,\vec y_d) = y_{i_1}\oplus\ldots\oplus y_{i_k}\oplus {y_d}_{j_1}\oplus\ldots\oplus {y_d}_{j_\ell}$, then $O_i(\vec y,\vec y_d)\oplus O_i(\vec y',\vec y_d)= y_{i_1}\oplus\ldots\oplus y_{i_k}\oplus y_{i_1}'\oplus\ldots\oplus y_{i_k}'$, so $\{\vec y_d\}\cap \Var(\vec O(\vec y,\vec y_d)\oplus \vec O(\vec y',\vec y_d),\vec I(\vec y,\vec y_d)\oplus \vec I(\vec y',\vec y_d))=\varnothing$. Moreover, all the variables of $\vec y$ and $\vec y'$ appear somewhere in $\Var(\vec O(\vec y,\vec y_d)\oplus \vec O(\vec y',\vec y_d), \vec I(\vec y,\vec y_d)\oplus \vec I(\vec y',\vec y_d))$, since $\{\vec y\}\subseteq\Var(\vec O(\vec y,\vec y_d),\vec I(\vec y,\vec y_d))$.
\item $\{\vec y\}:= \Var(\vec O(\vec y,\vec y_d),\vec I(\vec y,\vec y_d))\setminus \{\vec y_d\}$ for roughly the same reasons
\item $\{\vec y'\}:=\left(\{\vec y,\vec y',\vec y_d\}\setminus\{\vec y\}\right)\setminus\{\vec y_d\}$
\item by construction of $F$, $\abs{\vec y}=\abs{\vec y'}$
\item whenever $\vec O_i(\vec y,\vec y_d)\oplus \vec O_i(\vec y',\vec y_d) = y_{i_1}\oplus y_{i_2}'$ we define $\delta(y_{i_2}') := y_{i_1}$. We need to show that it completely and uniquely defines $\delta$ as a bijection. Consider the variable $y_i$. Let $K_i$ be the first (from left to right) polynomial of $(\vec O(\vec y,\vec y_d),\vec I(\vec y,\vec y_d))$ where $y_i$ appears. Then $K_i(\vec y,\vec y_d)=y_i$, otherwise, either (ket) or (bra) could be applied on $t$, which means $t$ is not terminal. Hence $K_i(\vec y,\vec y_d)\oplus K_i(\vec y',\vec y_d)=y_i\oplus y_i'$, so $\delta(y_i')=y_i$, and $y_i'$ is the only possible preimage of $y_i$ by $\delta$. Notice that $\delta(\vec y')=\vec y$ with no permutation on the indexes, so we obviously get $(\vec O(\vec y,\vec y_d)\oplus \vec O(\vec y',\vec y_d), \vec I(\vec y,\vec y_d)\oplus \vec I(\vec y',\vec y_d))[\vec y'\leftarrow \delta(\vec y')]=\vec 0$.
\end{itemize}
Hence, $G(F(t))$ is well defined, and:
\begin{align*}
G(F(t)) &= \frac1{\sqrt2^p}\sum_{\vec y,\vec y_d} e^{-2i\pi\left(P(\vec y,\vec y_d)-P(\vec y',\vec y_d)\right)[\vec y\leftarrow \vec 0][\vec y'\leftarrow \vec y]}\left(\ket{\vec O(\vec y',\vec y_d)}!\{\vec y_d\}\bra{I(\vec y',\vec y_d)}\right)[\vec y{\leftarrow} \vec 0][\vec y'{\leftarrow} \vec y]\\
&=\frac1{\sqrt2^p}\sum_{\vec y,\vec y_d} e^{-2i\pi\left(0-P(\vec y,\vec y_d)\right)}\ket{\vec O(\vec y,\vec y_d)}!\{\vec y_d\}\bra{I(\vec y,\vec y_d)}
 = t
\end{align*}

We now need to show that for all the terms $t'$ that are reduced from $F(t)$, $G(t')$ is defined, and $G(t')=G(F(t))=t$. To do so, we show by induction that along any reduction path from $F(t)$, some properties are preserved.

Let $t' = \frac1{\sqrt2^{p'}}\sum\limits_{\vec y^{(t')}} e^{2i\pi P}\ketbra{\vec O_1,\vec O_2}{\vec I_1,\vec I_2}$ such that $F(t)\overset\ast{\rewrite{\operatorname{Clif+}}} t'$. We claim that:
\begin{itemize}
\item $p' = 2p$
\item $\vec y^{(t')} = \vec y,\vec y',\vec y_d$, i.e.~no variable is removed, and the partitioning by $G$ does not change
\item $\Var(\vec O_1,\vec O_2,\vec I_1,\vec I_2) = \{\vec y,\vec y',\vec y_d\}$, i.e.~no variable becomes internal
\item $\forall k,~\abs{\Var(O^{(k)}_i)\cap \{\vec y_d\}}\leq 1~\text{ and }~\begin{cases}~\Var(O^{(k)}_1)\setminus\{\vec y_d\}\subseteq \Var(O^{(1)}_1,...,O^{(k-1)}_1)\\~\Var(O^{(k)}_2)\subseteq \Var(\vec O_1,O^{(1)}_2,...,O^{(k-1)}_2)\\\text{or}\\~O^{(k)}_i = y_{k'}\end{cases}$
\item $\forall k,~\abs{\Var(I^{(k)}_i)\cap \{\vec y_d\}}\leq 1~\text{ and }~\begin{cases}~\Var(I^{(k)}_1)\setminus\{\vec y_d\}\subseteq \Var(\vec O_1,\vec O_2,I^{(1)}_1,...,I^{(k-1)}_1)\\~\Var(I^{(k)}_2)\subseteq \Var(\vec O_1,\vec O_2,\vec I_1,I^{(1)}_2,...,I^{(k-1)}_2)\\\text{or}\\~I^{(k)}_i = y_{k'}\end{cases}$
\item $G(t')$ is well defined and $G(t')=t$
\end{itemize}

It is the case for $F(t)$. Indeed, $t$ is terminal with $\rewrite{\operatorname{Clif}\raisebox{0.5ex}\sground}$, so since the rule (ket) cannot be applied to $t$, it in particular implies the above properties on $F(t)$.

Let us consider one such $t'$. Notice that since there are no internal variables, none of the rules (Elim), (HH), ($\omega$) or (Z) can be applied. Suppose $t'\rewrite{\operatorname{Clif+}}t''$ in one step. Only (ket) or (bra) can be applied from $t'$ to $t''$, and only on either $\vec O_1$ or $\vec I_1$. Without loss of generality, suppose (ket) is applied on, in the polynomial $O_1^{(k)}$ of $t'$. Notice that $O_1^{(k)}$ is necessarily of the form $O_1^{(k)} = y_{d_k}\oplus O'$ where $y_{d_k}\in\{\vec y_d\}\setminus\Var(O^{(1)}_1,...,O^{(k-1)}_1)$ and $\Var(O')\subseteq\Var(O^{(1)}_1,...,O^{(k-1)}_1)$, otherwise the rule cannot be applied.

By application of the rule, $t'' = t'[y_{d_k}{\leftarrow}y_{d_k}{\oplus}O']$. Since $y_{d_k}\notin\Var(O^{(1)}_1,...,O^{(k-1)}_1)$, the first $k-1$ polynomials in the ket are left unchanged, so the variables $\Var(O^{(1)}_1,...,O^{(k-1)}_1)$ are still in the ket. In particular, the variables $\Var(O')\subseteq\Var(O^{(1)}_1,...,O^{(k-1)}_1)$ are also still present in the ket. The substitution cannot remove other variables from the ket, so $\Var\left(\vec O_i[y_{d_k}{\leftarrow}y_{d_k}{\oplus}O'],\vec I_i[y_{d_k}{\leftarrow}y_{d_k}{\oplus}O']\right) = \{\vec y,\vec y',\vec y_d\}$. The overall scalar is obviously unchanged. It is fairly easy to check the property of the ket for $t''$. Finally, we can show that the partitioning of variables by $G$ is unchanged.

We assume that $G(t')$ is well defined. $\vec O_1$ and $\vec O_2$ are hence of the same size. $\{\vec y_d\}$ is defined for $t'$ as $\{\vec y_d\} = \{\vec y^{(t')}\}\setminus\Var(\vec O_1\oplus\vec O_2)$. Notice that $y_{d_k} \in\{\vec y_d\}$ so $y_{d_k}\notin \Var(\vec O_1\oplus\vec O_2)$. Hence, if it appears somewhere in $\vec O_1$, say in $O_1^{(j)}$, it also appears in $O_2^{(j)}$ so that $y_{d_k}\notin \Var(O_1^{(j)}\oplus O_2^{(j)})$. So, the substitution will not change:
$$\{\vec y_d\} = \{\vec y^{(t')}\}\setminus\Var(\vec O_1\oplus\vec O_2) = \{\vec y^{(t')}\}\setminus\Var(\vec O_1[y_{d_k}{\leftarrow}y_{d_k}{\oplus}O']\oplus\vec O_2[y_{d_k}{\leftarrow}y_{d_k}{\oplus}O'])$$
Similarly, $\{\vec y_1\}$ and $\{\vec y_2\}$ are left unchanged, as well as the bijection $\delta$.\\
Since $\Var(O')\subseteq\Var(O^{(1)}_1,...,O^{(k-1)}_1)\subseteq\{\vec y_1\}$, the substitution $[y_{d_k}{\leftarrow}y_{d_k}{\oplus}O'][\vec y_1\leftarrow \vec 0][\vec y_2\leftarrow\delta(\vec y_2)]$ is the same as the substitution $[\vec y_1\leftarrow \vec 0][\vec y_2\leftarrow\delta(\vec y_2)]$. Hence, $G(t'')=G(t')=t$.

The whole reasoning is similar when the rule (bra) is applied instead of (ket). This concludes the proof.
\end{proof}

\end{document}